\documentclass[a4paper,reqno]{amsart}
\usepackage{amsfonts}
\usepackage{array,calc} 
\usepackage{pgf}
\usepackage{tikz}
\usetikzlibrary{shapes,arrows,automata,decorations.pathreplacing,angles,quotes,calc}
\usepackage{tkz-euclide}
\usepackage{amsmath,calc,graphicx}
\usepackage{url}
\usepackage{hyperref}
\usepackage{amssymb}
\usepackage{amsthm}
\usepackage{float}
\usepackage{enumitem}
\setcounter{MaxMatrixCols}{20}
\usepackage{amsrefs}
\numberwithin{equation}{section}
\numberwithin{figure}{section}
\theoremstyle{plain}
\newtheorem{theorem}{Theorem}[section]
\newtheorem{lemma}[theorem]{Lemma}
\newtheorem{corollary}[theorem]{Corollary}
\newtheorem{proposition}[theorem]{Proposition}
\newtheorem{definition}[theorem]{Definition}

\newtheorem{prob}[theorem]{Problem}
\theoremstyle{remark}
\newtheorem{remark}[theorem]{Remark}

\newtheorem*{lem*}{\textsc{Lemma}}
\newtheorem*{cor*}{\textsc{Corollary}}
\newtheorem*{exer*}{\textsc{Exercise}}
\newtheorem*{con*}{\textsc{Conjecture}}
\newtheorem*{thm*}{\textsc{Theorem}}

\newcommand{\beq}{\begin{equation}}
\newcommand{\eeq}{\end{equation}}

\setcounter{tocdepth}{1}

\newcommand\Pone{\textrm{P}_{\textrm{I}} }

\newcommand\Pfour{\textrm{P}_{\textrm{IV}}}

\newcommand\Psix{\textrm{P}_{\textrm{VI}}}

\newcommand\B[1]{\ensuremath{\scalebox{0.5}{$($}{#1}\scalebox{0.5}{$)$}}}


\newcommand{\orcidauthorA}{0000-0001-7504-4444}


\title{On the Riemann-Hilbert problem for a $q$-difference Painlev\'e equation}
\date{}
\author[Nalini Joshi]{Nalini Joshi}
\thanks{Corresponding author: Nalini Joshi. NJ's ORCID ID is \orcidauthorA. Her research was supported by an
  Australian Research Council Georgina Sweet Laureate Fellowship \#FL120100094 and Discovery Projects \#DP130100967 and \#DP200100210.}
\address{School of Mathematics and Statistics F07, The University of Sydney, NSW 2006, Australia}
\email{nalini.joshi@sydney.edu.au}
\author{Pieter Roffelsen}
\thanks{PR acknowledges the support of the H2020-MSCA-RISE-2017 PROJECT
  No. 778010 IPADEGAN.}
\address{SISSA, Trieste, Italy}
\subjclass[2020]{39A13, 33E17}

\begin{document}
\begin{abstract}
A Riemann-Hilbert problem for a $q$-difference Painlev\'e equation, known as $q\Pfour$, is shown to be solvable. This yields a bijective correspondence between the transcendental solutions of $q\Pfour$  and corresponding data on an associated $q$-monodromy surface. We also construct the moduli space of $q\Pfour$ explicitly.
\end{abstract}
\maketitle

\tableofcontents
\section{Introduction}
In this paper, we consider the Riemann-Hilbert problem (RHP) arising from a linear $q$-difference equation
\begin{equation}\label{eq:0}
Y(q z)=A(z)Y(z),
\end{equation}
where $A$ is a $2\times 2$ matrix polynomial in $z$. In particular, we are concerned with the invertibility of the map $\mathcal R$ from coefficients of $A$ to the class of connection matrices relating two fundamental solutions of Equation \eqref{eq:0} and we prove this for a class of cubic matrices $A$: $A=A_0+A_1z+A_2 z^2+A_3 z^3$, which satisfy the conditions given in Definition \ref{pro:aa}.

This paper is motivated by the study of $q$-discrete Painlev\'e equations, which arise when the matrix $A$ in Equation \eqref{eq:0} is deformed by a parameter $\lambda$. When the deformation is compatible with Equation \eqref{eq:0}, certain coefficients of $A$ turn out to satisfy discrete Painlev\'e equations \cites{jimbosakai,sakai2001,kny:17,joshicbms2019} with independent variable $\lambda$. In this setting, invertibility of $\mathcal R$ and information about its corresponding codomain is essential for describing solutions of the associated discrete Painlev\'e equation.

The $q$-discrete version of the RHP arising from Equation \eqref{eq:0} has properties that closely parallel RHPs used to study solutions of the classical differential Painlev\'e equations. In the latter case, the linear problem corresponding to Equation \eqref{eq:0} is a matrix linear differential equation $Y_z=A_d(z)Y$, where $A_d$ is rational in $z$, with poles at a sequence of given points (including infinity). The monodromy group describing how the fundamental solutions change as $z$ moves around such points is an inherent part of the description of the corresponding RHP. The solution of the RHP in this context also provides a map from the coefficients of $A_d$ to the monodromy data, whose codomain is called the monodromy manifold or moduli space of the Painlev\'e equation \cites{putsaito2009,chekhovmazzoccorubtsov2017}.

For ease of exposition, we will use analogous terminology in this paper. That is, we call $z$ the monodromy variable of Equation \eqref{eq:0} and refer to the codomain of $\mathcal R$ as the monodromy manifold or moduli space of the equation. The connection matrices relating fundamental solutions of Equation \eqref{eq:0} play the role of the generators of the monodromy group and their invariance (modulo conjugation) under compatible deformation in $\lambda$ will be referred to as the isomonodromy property. We note that despite this analogy, the methodology associated with the Riemann-Hilbert method has not been extended to $q$-discrete Painlev\'e equations, although partial attempts exist in the literature \cites{mano2010,joshiroffelsen}. 

Modern developments in the mathematical study of RHPs  have their origin in the solution of the  Korteweg-de Vries equation \cite{ggkm67} through the inverse scattering method. Reductions of such PDEs led to the Painlev\'e equations \cites{JMMS,jimbomiwaI,jimbomiwaII,jimbomiwaIII}, where RHPs have played a major role in the deduction of asymptotic properties of solutions. In this setting, a key step was provided by a steepest descent method developed by Deift and Zhou \cites{deift93,fokas}. See \S \ref{s:bg} for further information about the background. 

The most general results known about the $q$-discrete RHP, associated with Equation \eqref{eq:0}, come from classical studies by Birkhoff and Carmichael \cites{carmichael1912,birkhoffgeneralized1913} with recent refinements by Sauloy \cite{sauloy2002}, under certain conditions on $A$, which we call \textit{Fuchsian} and \textit{non-resonant} below (see Definition \ref{def:qlaxcondns}). 

Two further elements of the classical study of Equation \eqref{eq:0} are worth noting here. The first concerns conditions on the boundary curve $\gamma$ separating domains of existence of the fundamental solutions. We describe these conditions in Definition \ref{def:contour} \cites{carmichael1912,birkhoffgeneralized1913}. The second arises from the fact that solutions of the associated $q$-discrete Painlev\'e equation are not bounded for all $\lambda\in\mathbb C$. For the case studied in this paper, such singular values are included by solving a sequence of RHPs (corresponding to a sequence of iterates of the $q$-discrete Painlev\'e equation).

\subsection{Main Results}\label{s:mr}
Our results hold for  Fuchsian, non-resonant linear $q$-difference equations defined below. Note that we assume $0\not=|q|<1$ thoughout the paper, without loss of generality, and furthermore, we use the notation $\mathbb C^*$ to refer to the complex plane punctured at the origin. Our main results are stated as Theorems \ref{thm:mainresult1}, \ref{thm:rhmap} and \ref{thm:mainresult3}.

\begin{definition}\label{def:qlaxcondns}
  Assume that the coefficient matrix of Equation \eqref{eq:0} is polynomial{\rm :} $A(z)=A_0+A_1 z+\ldots + A_n\,z^n$, for $0\not=n\in\mathbb N$ with $A_n\neq 0$. Let the eigenvalues of $A_0$ be $\theta_1, \theta_2$ and those of $A_n$ be $\kappa_1, \kappa_2$. Moreover, let the zeroes of $\det(A)$ be $\{x_1, \ldots, x_{2n}\}$.  The quantities $\theta_i$, $\kappa_i$, $i=1,2$ and $x_j$, $1\le j\le 2n$, are called {\em critical exponents} and the collection  $\{\theta_i, \kappa_i, x_j\}$ is called the {\em critical data} of \eqref{eq:A}. 
  \begin{enumerate}[leftmargin=0.8cm,label={\rm(\alph*)}]
  \item Equation \eqref{eq:0} is said to be {\em Fuchsian} if
   ${\rm det}(A_0)\not=0$ and ${\rm det}(A_n)\not=0$. 
  \item  In the case when Equation \eqref{eq:0} is Fuchsian, it is called {\em non-resonant} if the critical data satisfy the following conditions:
    \begin{enumerate}[leftmargin=1.2cm,label={\rm(\roman*)}]
      \item $\theta_1/\theta_2\not=q^m$, $\kappa_1/\kappa_2\not=q^m$, for any integer $m$.
        \item $x_i/x_j\not=q^m$, for $i, j=1,\ldots, 2n$, $i\not=j$ and any integer $m$.
        \end{enumerate}
        \end{enumerate}
    \end{definition}
Note that the critical data are related by
    \begin{equation}\label{eq:constraint}
\theta_1\theta_2=\kappa_1\kappa_2\prod_{1\leq i \leq 2n}{x_i}.
\end{equation}
We recall Birkhoff's definition of appropriate contours for the corresponding Riemann problem  \cite{birkhoffgeneralized1913}. 
\begin{definition}\label{def:contour}
        Given non-resonant $\{x_1, \ldots, x_{2n}\}$, which we assume to be invariant under negation, a positively oriented Jordan curve $\gamma$ in $\mathbb{C}^*$ is called \emph{admissible} if all the following conditions are satisfied:
     \begin{enumerate}[label={{\rm (\roman *)}}]
     \item It admits local parametrization by analytic functions at each point. 
     \item It possesses the reflection symmetry $\gamma=-\gamma$.
     \item Letting the region on the left {\rm (}respectively right{\rm )} of $\gamma$ in $\mathbb{C}$ be $D_-$ and $D_+$, we have
	\begin{equation}\label{eq:insideoutside}
	q^kx_i\in\begin{cases}
	D_- & \text{if $k>0$,}\\
	D_+ & \text{if $k\leq 0$,}
	\end{cases}
	\end{equation} 
	for all $k\in\mathbb{Z}$ and $1\leq i \leq 2n$.
      \end{enumerate}
\end{definition}
Equation \eqref{eq:0} has two fundamental solutions defined respectively in $D_\pm$. A relation between the two solutions is provided by a connection or jump matrix $C$. We assume below that $C$ is a $2\times 2$ connection matrix satisfying the first three properties of Definition \ref{def:connectionspace}.
The RHP then takes the following form.
\begin{definition}[Riemann-Hilbert problem]\label{def:RHP1and2}
  Given non-resonant critical exponents $\theta_{1,2}$, $\kappa_{1,2}$ and $x_k$, $1\leq k \leq 2n$, a $2\times 2$ connection matrix $C(z)$ satisfying properties \eqref{item:c1}, \eqref{item:c2}, and \eqref{item:c3} of  Definition \ref{def:connectionspace}, an admissible curve $\gamma$, and $m\in\mathbb{Z}$, a $2\times2$ complex matrix function $Y^{\B{m}}(z)$ is called a solution of the Riemann-Hilbert problem $\textnormal{RHP}^{\B{m}}(\gamma,C)$ if it satisfies the following conditions.
  \begin{enumerate}[label={{\rm (\roman *)}}]
  \item $Y^{\B{m}}(z)$ is analytic on $\mathbb{C}\setminus\gamma$.
    \item $Y^{\B{m}}(z')$ has continuous boundary values $Y_-^{\B{m}}(z)$ and $Y_+^{\B{m}}(z)$ as $z'$ approaches $z\in \gamma$ from $D_-$ and $D_+$ respectively, where
		\begin{equation}\label{eq:jumpm}
		Y_+^{\B{m}}(z)=Y_-^{\B{m}}(z)C(z),\quad z\in \gamma.
              \end{equation}
            \item $Y^{\B{m}}(z)$ satisfies
              \begin{equation}\label{eq:normalisationm}
		Y^{\B{m}}(z)=\left(I+\mathcal{O}\left(z^{-1}\right)\right)z^{m\sigma_3}\quad z\rightarrow \infty.
              \end{equation}
              \end{enumerate}
  \end{definition}
\noindent Here $\sigma_3$ is a Pauli spin matrix (see \S \ref{s:not}).

We assume $n=3$ in Theorems \ref{thm:mainresult1}, \ref{thm:rhmap} and \ref{thm:mainresult3}.  Theorem \ref{thm:mainresult1} shows that a solution of the RHP \eqref{def:RHP1and2} exists and relates the result to the coefficient matrix $A$, whose entries in turn are related to a $q$-discrete Painlev\'e equation. The mapping $\mathcal R$ is shown to be bijective in Theorem \ref{thm:rhmap}. Finally, the monodromy surface is described in Theorem \ref{thm:mainresult3}. These theorems are proved respectively in Sections \ref{subsection:timedef}, \ref{section:rhcorrespondence} and \ref{subsec:proof}.

\subsection{Outline of the paper}\label{s:out}
The linear $q$-difference problem of interest is given by \eqref{eq:laxspectral}. We describe its properties, as well as its associated discrete Painlev\'e equation, before stating our main results in Section \ref{s:pre}. In Section \ref{sec:class}, we analyse the corresponding direct and inverse monodromy problems. In Section \ref{sec:isomonodromic}, we show how $q\Pfour$ (i.e., Equation \eqref{eq:qp4}) defines an isomonodromic deformation of Equation \eqref{eq:laxspectral} and prove Theorems \ref{thm:mainresult1} and \ref{thm:rhmap}.
In Section \ref{section:moduli}, we study the monodromy surface and prove Theorem \ref{thm:mainresult3} and Remark \ref{rem:real}. The paper ends with a conclusion given in Section \ref{s:con}.

\subsection{Background}\label{s:bg}
The six classical differential Painlev\'e equations $\Pone - \Psix$, were identified more than a century ago, while the discrete Painlev\'e equations are a more recent discovery. We focus on the $q$-discrete Painlev\'e equations, which are iterated on spirals in the complex plane parametrized by $\lambda=\lambda_0\,q^n$, for some given complex $q\not=0,1$ and $\lambda_0\neq 0$. See \cites{sakai2001,kny:17,joshicbms2019}. 

Each Painlev\'e equation is a compatibility condition for a pair of associated linear problems called a Lax pair:
\begin{align*}
    Y_z(z,\lambda)&=A(z,\lambda)Y(z,\lambda),\\
    Y_\lambda(z,\lambda)&=B(z,\lambda)Y(z,\lambda).
\end{align*}
Here we adopt the convention that $\lambda$ denotes the independent variable of the associated Painlev\'e equation, while $z$ is a monodromy variable. 

The monodromy data describe the behaviour of a fundamental solution near each of the singularities of $A(z,\cdot)$ in $z\in\mathbb C$ and at $z=\infty$. Under variation of $\lambda$, the Painlev\'e flow deforms the linear system in such a way that the monodromy data are left invariant. Such monodromy data lie on explicitly defined affine cubic surfaces \cite{putsaito2009}. The latter are moduli spaces of the corresponding Painlev\'e equations. For example, the moduli space of the first Painlev\'e equation
\begin{equation*}
    y_{\lambda\lambda}=6y^2+\lambda
\end{equation*}
is given by the affine cubic surface \cites{putsaito2009,phdroffelsen,chekhovmazzoccorubtsov2017}
\begin{equation*}
    \{x\in\mathbb{C}^3:x_1x_2x_3+x_1+x_2+1=0\}.
\end{equation*}

Conversely, given prescribed monodromy data, corresponding to a point on such a surface, the inverse problem asks for the corresponding solution of a Painlev\'e equation. This problem can be recast into an RHP with suitable contours and jumps given in terms of the monodromy data. Deift and Zhou \cite{deift93} developed a method of steepest descent to analyse the solutions of RHPs, and this method has been extended to the Painlev\'e equations to provide global asymptotic information of their general solutions \cites{fokas92,fokas}. 

In the context of $q$-difference equations, the associated Lax pair no longer consists of differential equations, but instead becomes a pair of linear $q$-difference equations -- see Equations \eqref{eq:qLax}. The linear problem \eqref{eq:A} has singularities only at $z=0$ and $z=\infty$.
Under certain conditions (Definition \ref{def:qlaxcondns}), Carmichael \cite{carmichael1912} constructed fundamental solutions of Equation \eqref{eq:A} in neighbourhoods of each point and characterised the connection matrix relating them. The connection matrix embodies the monodromy data of this linear system. 

Given a connection matrix, Birkhoff \cite{birkhoffgeneralized1913} showed how the  problem of reconstructing a Fuchsian system with that connection matrix can be recast into a Riemann-Hilbert problem,  and proved that this inverse problem always has a solution.  A modern extension of this theory (to include non-Fuchsian cases) has also been developed by Ramis et al. \cite{sauloy}.

However, to the best of our knowledge, such a Riemann-Hilbert formulation has not been used to obtain information about general solutions of any $q$-discrete Painlev\'e equations, except in the case of $q\Psix$ \cites{mano2010,ormerodconnection,jimbonagoyasakai}. Analysis of such equations in certain limits has been carried out in two cases \cites{mano2010,joshiroffelsen,phdroffelsen}, but questions such as a bijection between the coefficients of the linear problem and the solutions of the nonlinear equation, or the moduli space of ``monodromy'' data have not been considered. This provides the motivation of our present paper.

\subsection{Notation}\label{s:not}
Define the Pauli matrices
\begin{equation*}
\sigma_1=\begin{pmatrix}
0 & 1\\
1 & 0
\end{pmatrix},\quad
\sigma_3=\begin{pmatrix}
1 & 0\\
0 & -1
\end{pmatrix}.
\end{equation*}
We define the $\mathit{q}$-Pochhammer symbol by means of the infinite product
\begin{equation*}
(z;q)_\infty=\prod_{k=0}^{\infty}{(1-q^kz)}\qquad (z\in\mathbb{C}),
\end{equation*}
which converges locally uniformly in $z$ on $\mathbb{C}$. In particular $(z;q)_\infty$ is an entire function, satisfying
\begin{equation*}
(qz;q)_\infty=\frac{1}{1-z}(z;q)_\infty,
\end{equation*}
with $(0;q)_\infty=1$ and simple zeros on the semi $q$-spiral $q^{-\mathbb{N}}$. The $\mathit{q}$-theta function is defined as
\begin{equation}\label{eq:thetasym}
\theta_q(z)=(z;q)_\infty(q/z;q)_\infty\qquad (z\in \mathbb{C}^*),
\end{equation}
which is analytic on $\mathbb{C}^*$, with essential singularities at $z=0$ and $z=\infty$ and simple zeros on the $q$-spiral $q^\mathbb{Z}$. It satisfies
\begin{equation*}
\theta_q(qz)=-\frac{1}{z}\theta_q(z)=\theta_q(1/z).
\end{equation*}

For $n\in\mathbb{N}^*$ we denote
\begin{align*}
\theta_q(z_1,\ldots,z_n)&=\theta_q(z_1)\cdot \ldots\cdot \theta_q(z_n),\\
(z_1,\ldots,z_n;q)_\infty&=(z_1;q)_\infty\cdot\ldots\cdot (z_n;q)_\infty.
\end{align*}

For conciseness, we will use bars to denote iteration in $\lambda$. That is, for $f=f(\lambda)$, we denote $f(q\,\lambda)=\overline f$, and $f(\lambda/q)=\underline f$.

We denote the complex projective space $\mathbb C\mathbb P$ by $\mathbb P$. Where a real projective space is needed, we distinguish it from $\mathbb P$ by using the notation $\mathbb P_{\mathbb R}$. In each case, we will work with the three-fold Cartesian product denoted by $\mathbb P^1\times \mathbb P^1\times \mathbb P^1$ or $\mathbb P^1_{\mathbb R}\times \mathbb P^1_{\mathbb R}\times \mathbb P^1_{\mathbb R}$.

\section{Statement of the Results}\label{s:pre}
We consider $q$-difference Lax pairs of the form
\begin{subequations}\label{eq:qLax}
  \begin{align}
  Y(q z, \lambda)&=A(z, \lambda)Y(z,\lambda),\label{eq:A}\\
  Y(z, q \lambda)&=B(z, \lambda)Y(z,\lambda),\label{eq:B}
  \end{align}
\end{subequations}
where $A$ and $B$ are $2\times 2$ matrices polynomial in $z$ and $A$ satisfies the conditions in Definition \ref{def:qlaxcondns}. We assume that the system is compatible. That is, Equation \eqref{eq:A}$\bigm |_{\lambda\mapsto q\lambda}$ is equivalent to Equation \eqref{eq:B}$\bigm |_{z\mapsto qz}$, which requires
\begin{equation}\label{eq:ABcomp}
  A(z, q\lambda)B(z, \lambda)=B(qz,\lambda)A(z,\lambda).
  \end{equation}
  Equation \eqref{eq:A} is the key object of our study.

Suppose $A$ is non-resonant, Fuchsian and polynomial of degree 3 in $z$. Given $B$ satisfying Equation \eqref{eq:ABcomp}, $A$ and $B$ can be parametrized as follows \cite{joshinobu2016}:
\begin{subequations}\label{eq:JNLax}
\begin{align}
\notag A={\mathcal A}:=&\begin{pmatrix}
u & 0\\
0 & 1
\end{pmatrix}\begin{pmatrix}
-i\,q\frac{\lambda}{f_2}z & 1\\
-1 & -\,i\,q\frac{f_2}{\lambda} z
\end{pmatrix}
\begin{pmatrix}
-\,i\,a_0a_2\frac{\lambda }{f_0}z & 1\\
-1 & -\,i\,a_0a_2\frac{f_0}{\lambda} z
\end{pmatrix}\times\\
&\ \times\,
\begin{pmatrix}
-\,i\,a_0\frac{\lambda}{f_1}z & 1\\
-1 & -\,i\,a_0\frac{f_1}{\lambda} z
\end{pmatrix}\begin{pmatrix}
u^{-1} & 0\\
0 & 1
\end{pmatrix},\label{eq:AJN}\\
B={\mathcal B}:=&\begin{pmatrix}
0 & -bu\\
b^{-1}u^{-1} & 0
\end{pmatrix}+
 \begin{pmatrix}
z & 0\\
0 & 0
\end{pmatrix}\,,\label{eq:BJN}
\end{align}
\end{subequations}
where
\[
b=\frac{\lambda(1+a_1f_1(1+a_2f_2))}{i\,(q\lambda^2-1)f_2},
\]
and $f_i$, $i=0, 1, 2$ ,and $u$ are functions of $\lambda$, independent of $z$. 
These matrices correspond to the equations:
\begin{subequations}
	\label{eq:laxpair}
	\begin{align}\label{eq:laxspectral}
	Y(qz,\lambda)&={\mathcal A}(z;\lambda,f,u)Y(z,\lambda),\\
	Y(z,q\lambda)&={\mathcal B}(z;\lambda,f,u)Y(z,\lambda).\label{eq:laxtime}
	\end{align}
      \end{subequations}

      The compatibility condition is then a polynomial equation of degree 4 in $z$. An overdetermined system of  equations arises from the vanishing of the coefficient of each monomial $z^j$, $1\le j\le 4$, identically in $\lambda$. The system is satisfied if and only if the following equation holds:
\begin{equation}\label{eq:qp4}
\ q\Pfour(a):\begin{cases}\displaystyle
\frac{\overline{f}_0}{a_0a_1f_1}=\frac{1+a_2f_2(1+a_0f_0)}{1+a_0f_0(1+a_1f_1)}, &\\
\displaystyle\frac{\overline{f}_1}{a_1a_2f_2}=\frac{1+a_0f_0(1+a_1f_1)}{1+a_1f_1(1+a_2f_2)}, &\\
\displaystyle\frac{\overline{f}_2}{a_2a_0f_0}=\frac{1+a_1f_1(1+a_2f_2)}{1+a_2f_2(1+a_0f_0)}, &
\end{cases}
\end{equation}
where $f=(f_0,f_1,f_2)$ is a function of $\lambda$, $a:=(a_0,a_1,a_2)$ are constant parameters, subject to
\begin{equation}\label{eq:conditionfa}
f_0f_1f_2=\lambda^2,\quad a_0a_1a_2=q,
\end{equation}
and $u$ satisfies
\begin{equation}\label{eq:auxiliaryuf}
	\frac{\overline{u}}{u}=\left[\frac{\lambda(1+a_1f_1(1+a_2f_2))}{i(q\lambda^2-1)f_2}\right]^2.
      \end{equation}
 Equation \eqref{eq:qp4} is a $q$-discrete Painlev\'e equation often referred to as $q\Pfour$, because it has a continuum limit to the fourth Painlev\'e equation.\footnote{This
  equation has alternative names in the literature and is 
 also referred to as $q\Pfour(A_5^{(1)})$, for its initial
 value space, or $q\Pfour\bigl((A_2+A_1)^{(1)}\bigr)$, for its
 symmetry group -- see \cite{joshicbms2019}.}

\begin{remark}
          The variable $u$ did not occur in the original definition of the Lax pair in \cite{joshinobu2016}, but we have introduced it here for convenience. It arises from a gauge freedom by constant diagonal matrices. 
        \end{remark}
        
By carrying out the matrix multiplication, it is evident that ${\mathcal A}$ takes the form
\begin{align}\label{eq:Aexplicit}
{\mathcal A}(z;\lambda,f,u)= &\begin{pmatrix}
	0 & -u\\u^{-1} & 0
	\end{pmatrix}+
	z\begin{pmatrix}
	ig_1\lambda & 0\\0 & ig_2\lambda^{-1}
	\end{pmatrix}\\
	&+z^2\begin{pmatrix}
	0 & -ug_3\\u^{-1}g_4 & 0
	\end{pmatrix}+z^3 qa_0^2a_2i\begin{pmatrix}
	\lambda & 0\\
	0 & \lambda^{-1}
	\end{pmatrix},\nonumber
	\end{align}
	where $g=(g_1,g_2,g_3,g_4)$ satisfies the algebraic equations
\begin{subequations}\label{eq:algebraic}
	\begin{align}
	&g_1g_2-(g_3+g_4)=a_0^2(1+a_1^2a_2^2+a_2^2),\\
	&g_3g_4-qa_0^2a_2\left(g_1+g_2\right)=a_0^4a_2^2(1+a_1^2a_2^2+a_1^2).
	\end{align}
      \end{subequations}
      See Equation \eqref{eq:aginf} for the transformation from $f$ to $g$, and Equation \eqref{eq:afing} for its inverse. The algebraic surface in $\mathbb C^4$ defined by equations \eqref{eq:algebraic} will be denoted by $\mathcal{G}(a)$.

In the variable $g$, the compatibility condition \eqref{eq:ABcomp} is equivalent to the $q$-difference system:
\begin{equation}\label{eq:aqp4}
q\Pfour^{\text{mod}}(a):\begin{cases}
\overline{g}_1=q^{-1}\lambda^{-2}g_2+a_0^2a_2\lambda^{-2}g_3^{-1}(q\lambda^2-1),&\\
\overline{g}_2=q\lambda^2g_1-qa_0^2a_2g_3^{-1}(q\lambda^2-1),&\\
\overline{g}_3=g_4+\lambda^{-2}g_3^{-1}(q\lambda^2g_1-g_2)a_0^2a_2(q\lambda^2-1)&\\
\qquad -qa_0^4a_2^2\lambda^{-2}g_3^{-2}(q\lambda^2-1)^2,&\\
\overline{g}_4=g_3,&
\end{cases}
\end{equation}
on $\mathcal{G}(a)$. Note that while this system is apparently singular at $g_3=0$, the system is well-posed for neighbouring initial values in an annular region around the origin, punctured on the line $g_3=0$. Denote this annular region by $\mathcal D_0$. We denote the line $g_3=0$ by $\mathcal S$.

The iteration of initial values in $\mathcal D_0$ is well-defined and the corresponding orbit after 3 steps is continuous on the whole domain $\mathcal D_0\cup \mathcal S$. That is, $\overline{\overline{\overline{g}}}_i$, $1\le i\le 3$ (and similarly backward iterates) are well defined for a domain of initial values including $g_3=0$ in its interior. (This is part of a property called {\em singularity confinement} in the literature. See Equation \eqref{eq:gcontinuation} in Appendix \ref{appendix:singularityconfinement} for further detail.)

We define notation that incorporates such singularities below.

\begin{figure}[b]
	\centering
	\begin{tikzpicture}[scale=0.6]
	\draw[->] (-6,0)--(6,0) node[right]{$\Re{z}$};
	\draw[->] (0,-6)--(0,6) node[above]{$\Im{z}$};
	\tikzstyle{star}  = [circle, minimum width=3.5pt, fill, inner sep=0pt];
	\tikzstyle{starsmall}  = [circle, minimum width=3.5pt, fill, inner sep=0pt];
	
	\draw[domain=-1.5:6,smooth,variable=\x,red] plot ({exp(-\x*ln(2))*2*cos((\x*pi/16) r)},{exp(-\x*ln(2))*2*sin((\x*pi/16) r)});
	\draw[domain=-1.5:6,smooth,variable=\x,red] plot ({-exp(-\x*ln(2))*2*cos((\x*pi/16) r)},{-exp(-\x*ln(2))*2*sin((\x*pi/16) r)});
	
	\draw[domain=-1.1:6,smooth,variable=\x,red] plot ({exp(-\x*ln(2))*2*5/3*cos((3*pi/4+\x*pi/16) r)},{exp(-\x*ln(2))*2*5/3*sin((3*pi/4+\x*pi/16) r)});	
	\draw[domain=-1.1:6,smooth,variable=\x,red] plot ({-exp(-\x*ln(2))*2*5/3*cos((3*pi/4+\x*pi/16) r)},{-exp(-\x*ln(2))*2*5/3*sin((3*pi/4+\x*pi/16) r)});
	
	\draw[domain=-1.2:6,smooth,variable=\x,red] plot ({exp(-\x*ln(2))*2*4/3*cos((pi/4+\x*pi/16) r)},{exp(-\x*ln(2))*2*4/3*sin((pi/4+\x*pi/16) r)});	
	\draw[domain=-1.2:6,smooth,variable=\x,red] plot ({-exp(-\x*ln(2))*2*4/3*cos((pi/4+\x*pi/16) r)},{-exp(-\x*ln(2))*2*4/3*sin((pi/4+\x*pi/16) r)});
	
	\draw [blue,thick,decoration={markings, mark=at position 0.68 with {\arrow{<}}},
	postaction={decorate}] plot [smooth cycle,tension=0.6] coordinates {(3,0) (3.2,-4) (1,-4.4) (-3,-3) (-3,0) (-3.2,4) (-1,4.4) (3,3) };	
	
	\node[starsmall]     (or) at ({0},{0} ) {};
	\node     at ($(or)+(-0.3,0.3)$) {$0$};
	
	\node[starsmall]     (qa0) at ({2*4/3*cos(pi/4 r)},{2*4/3*sin(pi/4 r)} ) {};
	\node     at ($(qa0)+(-0.3,0.3)$) {$qx_1$};
	\node[star]     (a0) at ({2*2*4/3*cos((pi/4-pi/16) r)},{2*2*4/3*sin((pi/4-pi/16) r)} ) {};
	\node     at ($(a0)+(-0.3,0.3)$) {$x_1$};
	
	\node[star]     (a1) at ({2*5/3*cos(3*pi/4 r)},{2*5/3*sin(3*pi/4 r)} ) {};
	\node     at ($(a1)+(0.45,0.25)$) {$q x_2$};
	\node[star]     (qma1) at ({2*2*5/3*cos((3*pi/4-pi/16) r)},{2*2*5/3*sin((3*pi/4-pi/16) r)} ) {};
	\node     at ($(qma1)+(0.4,0.25)$) {$x_2$};
	
	\node[star]     (q0) at ({2*1},{0} ) {};
	\node     at ($(q0)+(0.1,0.4)$) {$q x_3$};
	\node[star]     (qm) at ({2*2*cos((0-pi/16) r)},{2*2*sin((0-pi/16) r)} ) {};
	\node     at ($(qm)+(0.35,0.45)$) {$x_3$};
	
	\node[star]     (mqa0) at ({-2*4/3*cos(pi/4 r)},{-2*4/3*sin(pi/4 r)} ) {};
	\node     at ($(mqa0)+(-0.5,0.4)$) {$q x_4$};
	\node[star]     (ma0) at ({-2*2*4/3*cos((pi/4-pi/16) r)},{-2*2*4/3*sin((pi/4-pi/16) r)} ) {};
	\node     at ($(ma0)+(-0.4,0.4)$) {$x_4$};
	
	\node[star]     (ma1) at ({-2*5/3*cos(3*pi/4 r)},{-2*5/3*sin(3*pi/4 r)} ) {};
	\node     at ($(ma1)+(0.1,0.4)$) {$q x_5$};
	\node[star]     (mqma1) at ({-2*2*5/3*cos((3*pi/4-pi/16) r)},{-2*2*5/3*sin((3*pi/4-pi/16) r)} ) {};
	\node     at ($(mqma1)+(0.12,0.4)$) {$x_5$};
	
	\node[star]     (mq0) at ({-2*1},{0} ) {};
	\node     at ($(mq0)+(-0.1,0.3)$) {$q x_6$};
	\node[star]     (mqm) at ({-2*2*cos((0-pi/16) r)},{-2*2*sin((0-pi/16) r)} ) {};	
	\node     at ($(mqm)+(0,0.5)$) {$x_6$};
	
	\node[blue]     at (1.56,4.1) {$\boldsymbol{\gamma}$};
	
	\node[blue]     at (1.90,5.08) {$\boldsymbol{D_+}$};
	
	\node[blue]     at (1.28,2.75) {$\boldsymbol{D_-}$};
	
	\end{tikzpicture}
	\caption{Example of an admissable contour $\gamma$ in Definition \ref{def:contour}, where the six red spirals are $q^{\mathbb{R}}\cdot x_i$, $1\leq i\leq 6$.}
	\label{fig:contour}
\end{figure}
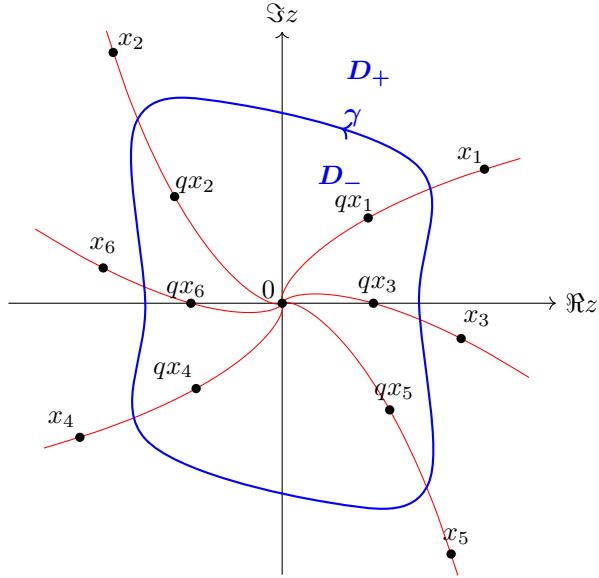

\begin{definition}\label{def:transcendent}
 Let $\lambda_0\in\mathbb{C}^*$ and $a\in\mathbb{C}^3$ be such that $\lambda_0^2\notin q^\mathbb{Z}$, with $a_0a_1a_2=q$. We call a sequence $(g^{\B{m}})_{m\in\mathbb{Z}}$ {\em a solution} of $q\Pfour^{\text{mod}}(\lambda_0,a)$ if 
 \begin{enumerate}[leftmargin=1.2cm,label={\rm(\roman*)}]
     \item  it satisfies Equation \eqref{eq:aqp4} with $\lambda=q^m\lambda_0$, $g_3^{\B{m}}\not=0$; or
     \item it satisfies the continuation Equations \eqref{eq:gcontinuation}, for $\lambda=q^m\lambda_0$, with $g_3^{\B{m-1}}=0$ or $g_3^{\B{m-2}}=0$, in which case we write $g^{\B{m}}=\mathsf{s}$. 
 \end{enumerate} 
 The solutions take values in $\mathcal{G}(a)\cup \{\mathsf{s}\}$ and we refer to them as  $q\Pfour^{\text{mod}}(\lambda_0,a)$--{\em transcendents}.  Analogous notions are defined for solutions of $q\Pfour(\lambda_0,a)$ by means of the birational equivalence given in \eqref{eq:aginf} and \eqref{eq:afing}.
\end{definition}

 Carmichael \cite{carmichael1912} constructed fundamental solutions of non-resonant Fuchsian systems in two domains, one with $0$ in its interior and another with $\infty$ in its interior.
 The critical exponents of the linear problem \eqref{eq:laxspectral}, as defined in Definition \ref{def:qlaxcondns}, are given by  
	\begin{equation}
	 \begin{aligned}
	x_1&=+a_0^{-1}, & x_2&=+a_1/q, & x_3&=+q^{-1},\\
	x_4&=-a_0^{-1}, & x_5&=-a_1/q, & x_6&=-q^{-1},
	\end{aligned}\label{eq:x1x6}
	\end{equation}
and $\theta_1=i$, $\theta_2=-i$, $\kappa_1=i q a_0^2a_2\lambda$, $\kappa_2=i q a_0^2a_2/\lambda$. So, by Definition \ref{def:qlaxcondns}, Equation \eqref{eq:laxspectral} is Fuchsian and it is non-resonant if and only if 
\begin{equation}\label{eq:conditionnonresonant}
\lambda^2,\pm a_0,\pm a_1,\pm a_2\notin q^\mathbb{Z}.
\end{equation}

 Birkhoff \cite{birkhoffgeneralized1913} defined admissible contours bounding such domains - see Definition \ref{def:contour}.  Figure \ref{fig:contour} illustrates such an adimissible contour $\gamma$ for our Equation \eqref{eq:laxspectral}.

Two fundamental solutions of the same linear system are related by a connection matrix. We collect the properties of such connection matrices in the following definition.
 \begin{definition}[Carmichael \cite{carmichael1912}]\label{def:connectionspace}
  Suppose $C(z)$ is a $2\times 2$ matrix function with the following properties:
  \begin{enumerate}[label={{\rm (c.\arabic*)}},ref=c.\arabic*]
		\item $C(z)$ is analytic in $\mathbb{C}^*$;\label{item:c1}
		\item $C(qz)=\frac{1}{qa_0^2a_2}z^{-3}\sigma_3C(z)\lambda^{-\sigma_3}$;\label{item:c2}
		\item $|C(z)|=c\theta_q(a_0z,-a_0z,a_0a_2z,-a_0a_2z,qz,-qz)$, for some $c\in\mathbb{C}^*$;\label{item:c3}
		\item $C(-z)=-\sigma_1C(z)\sigma_3$.\label{item:c4}
                \end{enumerate}
We define $\mathfrak{C}(\lambda,a)$ to be the set of all such functions. 
\end{definition}

We are now in a position to state the first of our main results, which provides the existence of a solution of $\textnormal{RHP}^{\B{m}}(\gamma,C)$. Note that for $m\in\mathbb{Z}$, if a solution $Y^{\B{m}}(z)$ of $\textnormal{RHP}^{\B{m}}(\gamma,C)$ exists, then Lemma \ref{lem:rhuniqueness} shows that it must be unique. The theorem below also shows how $Y^{\B{m}}(z)$ can be used to reconstruct a matrix polynomial $\mathcal A$ of degree 3 and how the entries of such a matrix solve $q\Pfour^{\text{mod}}(\lambda_0,a)$.

\begin{theorem}\label{thm:mainresult1} Let $\lambda_0\in\mathbb{C}^*$, $a\in\mathbb{C}^3$, $a=(a_0, a_1, a_2)$ be such that the non-resonance conditions \eqref{eq:conditionnonresonant} hold and let a matrix $C(z)\in \mathfrak{C}(\lambda_0,a)$ and an admissible curve $\gamma$ be given. Then the following results hold.
\begin{enumerate}[label={{\rm (\roman *)}},leftmargin=*]
\item For all $n\in\mathbb{Z}$, there exists at least one $m\in\{n,n+1,n+2\}$ such that a solution $Y^{\B{m}}(z)$ of $\textnormal{RHP}^{\B{m}}(\gamma,C)$ exists.
Define
  \[\mathfrak M=\{m \in \mathbb{Z} \Bigm | Y^{\B{m}}(z)\ \textrm{does not exist}\} \subseteq\mathbb{Z}.\] Then, $m\in \mathfrak M$, implies either $m+1\in \mathfrak M$ or $m-1\in \mathfrak M$.
		\item For $m\in\mathbb{Z}\setminus \mathfrak M$, let
		\begin{equation*}
		A^{\B{m}}(z):=\begin{cases}
		iqa_0^2a_2z^3Y^{\B{m}}(qz)\begin{pmatrix}
		\lambda_0 & 0\\
		0 & \lambda_0^{-1}
		\end{pmatrix} Y^{\B{m}}(z)^{-1}, &\text{if $z\in q^{-1}(D_+\cup \gamma)$,}\\
		Y^{\B{m}}(qz)\begin{pmatrix}
		i & 0\\
		0 & -i
		\end{pmatrix}C(z) Y^{\B{m}}(z)^{-1}, &\text{if $z\in D_+\cap q^{-1}D_-$,}\\
		Y^{\B{m}}(qz)\begin{pmatrix}
		i & 0\\
		0 & -i
		\end{pmatrix} Y^{\B{m}}(z)^{-1}. &\text{if $z\in D_-\cup \gamma$,}
		\end{cases}
		\end{equation*}
		Then $A^{\B{m}}(z)$ is a matrix polynomial of degree 3 in $z$ such that
			\begin{equation*}
			A^{\B{m}}(z)=\mathcal{A}(z;q^m\lambda_0,g^{\B{m}},u_m),
			\end{equation*}
			 for a unique $g^{\B{m}}\in\mathbb{C}^4$ satisfying \eqref{eq:algebraic} and $u_m\in\mathbb{C}^*$.
		\item Setting $g^{\B{m}}=\mathsf{s}$ and $u_m=0$ for $m\in \mathfrak M$, then the sequence $(g^{\B{m}})_{m\in\mathbb{Z}}$ is a solution of $q\Pfour^{\text{mod}}(\lambda_0,a)$ and $(u_m)_{m\in\mathbb{Z}}$ is a solution of the auxiliary equation
		\begin{subequations}\label{eq:auxiliary}
			\begin{align}
			\frac{u_{m+1}}{u_m}=&-\frac{1}{q^2a_0^4a_2^2}(q^m\lambda_0-q^{-m-1}\lambda_0^{-1})^{-2}(g_3^{\B{m}})^2,  &\text{if $g_3^{\B{m}}\neq 0$,}\label{eq:ub}\\
			\frac{u_{m+3}}{u_m}=&-\frac{1}{q^2a_0^4a_2^2}(q^m\lambda_0-q^{-m-3}\lambda_0^{-1})^{-2},   &\text{if $g_3^{\B{m}}=0$},\label{eq:ubbb}\\		u_{m}=&0,&\text{if $g^{\B{m}}=\mathsf{s}$}.
			\end{align}
		\end{subequations}
		This equation is birationally equivalent to \eqref{eq:auxiliaryuf}. In particular, for every $m_0\in\mathbb{Z}$, the Riemann-Hilbert problem $\textnormal{RHP}^{\B{m}}(\gamma,C)$ fails to have a solution at $m=m_0$ if and only if $g^{\B{m}}$ is singular \textnormal{(}i.e. $g^{\B{m}}=\mathsf{s}$\textnormal{)} at $m=m_0$.
	\end{enumerate}
\end{theorem}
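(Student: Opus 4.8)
The plan is to handle the three assertions by combining Birkhoff's singular-integral reformulation for existence, an explicit patching-and-Liouville argument for the reconstruction of $A^{\B{m}}$, and the second Lax matrix $\mathcal B$ both to generate the $q$-Painlev\'e iteration and to pin down the exceptional set $\mathfrak M$. It is cleanest to set up the reconstruction of part (ii) first, since it feeds the analysis of the other two parts.

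For part (ii), assume $Y^{\B{m}}$ exists; it is unique by Lemma \ref{lem:rhuniqueness}, so $A^{\B{m}}$ is well defined once I check that its three branches agree on the overlaps of their domains. On the seam $\gamma$ this uses the jump relation \eqref{eq:jumpm}, $Y_+^{\B{m}}=Y_-^{\B{m}}C$, so that the extra factor $C(z)$ in the middle branch cancels the jump; on the seam $q^{-1}\gamma$ it uses the quasi-periodicity \eqref{item:c2}, which is engineered precisely so that $iqa_0^2a_2z^3 C(qz)\operatorname{diag}(\lambda_0,\lambda_0^{-1})=\operatorname{diag}(i,-i)C(z)$, matching the first branch to the second. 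Hence $A^{\B{m}}$ is single-valued and analytic on $\mathbb C^*$. I then read off the two essential points: the normalisation \eqref{eq:normalisationm} forces $A^{\B{m}}(z)=iqa_0^2a_2z^3\operatorname{diag}(q^m\lambda_0,q^{-m}\lambda_0^{-1})\bigl(I+\mathcal O(z^{-1})\bigr)$ as $z\to\infty$, while near $z=0$ the third branch shows $A^{\B{m}}$ has a finite limit conjugate to $\operatorname{diag}(i,-i)$. A removable-singularity/Liouville argument then gives that $A^{\B{m}}$ is a polynomial of degree exactly $3$. Matching coefficients against \eqref{eq:Aexplicit}, and using the reflection symmetry \eqref{item:c4} (inherited by $Y^{\B{m}}$ via uniqueness) to kill the spurious entries, determines $g^{\B{m}}\in\mathbb C^4$ and $u_m\in\mathbb C^*$ uniquely; the algebraic relations \eqref{eq:algebraic} defining $\mathcal G(a)$ should fall out of $\det A^{\B{m}}$, whose zeros are fixed to $x_1,\dots,x_6$ through \eqref{eq:x1x6} and the determinant law \eqref{item:c3}.

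For existence (part (i)) I would recast $\textnormal{RHP}^{\B{m}}(\gamma,C)$ as a system of singular integral equations on $\gamma$ and invoke the classical Fredholm theory of Birkhoff \cite{birkhoffgeneralized1913}, with the refinements of \cite{sauloy2002}: the total index is read from the winding of $\det C=c\,\theta_q(a_0z,-a_0z,a_0a_2z,-a_0a_2z,qz,-qz)$, which does not depend on $m$, so the problem is Fredholm of index zero and solvability is equivalent to triviality of the homogeneous problem, governed by Lemma \ref{lem:rhuniqueness}. The dependence on $m$ enters only through the normalisation $z^{m\sigma_3}$, and I would show that the step $m\mapsto m+1$ is implemented by left multiplication by the entire matrix $\mathcal B$ of \eqref{eq:BJN}: since $\mathcal B$ is polynomial in $z$ it preserves the jump $C$, so (after removing its scalar growth) $\mathcal B(z;q^m\lambda_0,g^{\B{m}},u_m)\,Y^{\B{m}}(z)$ is the natural candidate for $Y^{\B{m+1}}$. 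This construction succeeds exactly when the rank-one leading term of $\mathcal B$ can be completed to the correct normalisation $z^{(m+1)\sigma_3}$, and it degenerates precisely when $g_3^{\B{m}}=0$.

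This last point is the crux and the main obstacle: showing that the iteration stalls \emph{only} when $g_3^{\B{m}}=0$, and that when it does, the three-step continuation \eqref{eq:gcontinuation} (singularity confinement) produces $Y^{\B{m+3}}$ while $Y^{\B{m+1}},Y^{\B{m+2}}$ genuinely fail to exist. Granting this bridge between the analytic non-solvability and the algebraic singularity, $\mathfrak M$ consists exactly of the confined pairs following each zero of $g_3$, which at once yields both combinatorial claims of part (i) (no three consecutive failures, no isolated failure) and the final equivalence of part (iii). The system $q\Pfour^{\text{mod}}(\lambda_0,a)$ of \eqref{eq:aqp4} then arises by imposing compatibility \eqref{eq:ABcomp} along the ladder $\{A^{\B{m}}\}$, i.e. from $\mathcal A(q\lambda)\mathcal B(\lambda)=\mathcal B(qz,\lambda)\mathcal A(\lambda)$ with $\lambda=q^m\lambda_0$, and the scalar auxiliary equation \eqref{eq:auxiliary} for $u_m$ follows by tracking the diagonal gauge factor $u$ (equivalently $\det$) through the same step, the two cases $g_3^{\B{m}}\neq0$ and $g_3^{\B{m}}=0$ reproducing \eqref{eq:ub} and \eqref{eq:ubbb} respectively.
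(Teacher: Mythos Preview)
Your treatment of part (ii) is essentially the paper's argument: patch across $\gamma$ and $q^{-1}\gamma$ using the jump and the quasi-periodicity \eqref{item:c2}, apply Liouville, and read off the form \eqref{eq:Aexplicit} using the reflection symmetry \eqref{item:c4} and the determinant law. No disagreement there.

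The genuine gap is exactly what you flag as the ``crux and main obstacle'': you \emph{grant} that $Y^{\B{m+1}}$ and $Y^{\B{m+2}}$ genuinely fail when $g_3^{\B{m}}=0$ while $Y^{\B{m+3}}$ exists, and you hope to deduce this from singularity confinement of the nonlinear flow. That inference runs the wrong way: singularity confinement is a property of the iteration \eqref{eq:aqp4} on $\mathcal G(a)$, not a statement about non-solvability of the RHP, and by itself it cannot rule out that $Y^{\B{m+1}}$ exists with some $g^{\B{m+1}}\in\mathcal G(a)$ that simply does not arise as a limit of the generic flow. Likewise, your Fredholm-index remark tells you the problem has index zero for each $m$, but it does not distinguish $m+1$ from $m+3$.

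The paper's key idea, which you are missing, is to reduce the question ``does $Y^{\B{m+k}}$ exist, given $Y^{\B{m}}$?''\ to an explicit finite linear system in the entries of the would-be transition polynomial $R(z)$, with coefficients given by the Taylor data $u_i^{\B{m}},v_i^{\B{m}},w_i^{\B{m}},\ldots$ of $Y^{\B{m}}$ at infinity (cf.\ \eqref{eq:widehatpsi}). Since any two solutions of the RHP with the same jump differ by left multiplication by an entire matrix of controlled growth, $Y^{\B{m+k}}$ exists \emph{if and only if} that linear system is solvable. For $k=1$ the system is solvable iff $u_1^{\B{m}}\neq 0$, which by the relations of Lemma \ref{lem:relationvariablesgu} is exactly $g_3^{\B{m}}\neq 0$. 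When $u_1^{\B{m}}=0$, the same relations force $w_1^{\B{m}}\neq 0$ (since $u_m\neq 0$), and one checks directly that the $k=2$ system is inconsistent while the $k=3$ system has a unique solution $S_+^{\B{m}}$. This simultaneously proves both combinatorial claims of part (i) and the equivalence in part (iii), without appeal to the nonlinear dynamics; the $q\Pfour^{\text{mod}}$ evolution and the auxiliary equation \eqref{eq:auxiliary} are then read off \emph{a posteriori} from the explicit $R_+^{\B{m}}$ and $S_+^{\B{m}}$. Your proposal to start from $\mathcal B$ is also slightly circular: $\mathcal B$ is parametrised by $f$ (equivalently $g$), which at this stage is what you are trying to construct; the paper instead derives $R_+^{\B{m}}$ from $u_1^{\B{m}}$ and only afterwards observes that it coincides with $\mathcal B$.
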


\subsection{Injectivity}
Given an appropriate connection matrix $C(z)$, Theorem \ref{thm:mainresult1} associates to it a solution $g$ of $q\Pfour^{\text{mod}}(\lambda_0,a)$. However this association is not injective, due to the invariance of $g$ under right-multiplication of $C(z)$ by diagonal matrices. To overcome this issue, we define a quotient space of connection matrices.

\begin{definition}\label{def:monodromysurface}
	Let $\mathfrak{C}(\lambda_0,a)$ be as defined in Definition \ref{def:connectionspace} and let its quotient under right multiplication by diagonal matrices be denoted by $\mathfrak{C}(\lambda_0,a)/\operatorname{diag}(\mathbb{C}^*,\mathbb{C}^*)$. Then we define the the monodromy surface to be the space of equivalence classes under this quotient:
	\begin{equation*}
	\mathcal{M}_c(\lambda_0,a)=\mathfrak{C}(\lambda_0,a)/\operatorname{diag}(\mathbb{C}^*,\mathbb{C}^*),
      \end{equation*}
      and the Riemann-Hilbert mapping 
	\begin{equation}\label{eq:RHmapping}
          \mathcal{R}:\mathcal{M}_c(\lambda_0,a)\rightarrow
          \{  q\Pfour^{\text{mod}}(\lambda_0,a) \},
	\end{equation}
	by the action that assigns a unique $q\Pfour^{\text{mod}}(\lambda_0,a)$-transcendent to any equivalence class in $\mathcal{M}_c(\lambda_0,a)$ via Theorem \ref{thm:mainresult1}. 
\end{definition}
The Riemann-Hilbert mapping is shown to be bijective in the following theorem.
\begin{theorem}\label{thm:rhmap}Let $\lambda_0\in\mathbb{C}^*$ and $a\in\mathbb{C}^3$, satisfying $a_0a_1a_2=q$, such that the non-resonant conditions \eqref{eq:conditionnonresonant} are satisfied.
	Then the Riemann-Hilbert mapping \eqref{eq:RHmapping}
	is a bijection.
\end{theorem}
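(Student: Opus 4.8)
The plan is to establish bijectivity by constructing an explicit two-sided inverse of $\mathcal{R}$ --- the direct monodromy map $\mathcal{D}$, which sends a $q\Pfour^{\text{mod}}(\lambda_0,a)$-transcendent to the equivalence class of the connection matrix of the associated linear problem \eqref{eq:laxspectral}. Injectivity of $\mathcal{R}$ will then follow from $\mathcal{D}\circ\mathcal{R}=\mathrm{id}$ and surjectivity from $\mathcal{R}\circ\mathcal{D}=\mathrm{id}$. To see injectivity directly, suppose $[C_1],[C_2]\in\mathcal{M}_c(\lambda_0,a)$ satisfy $\mathcal{R}([C_1])=\mathcal{R}([C_2])=(g^{\B{m}})$. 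Fix a regular index $m\notin\mathfrak M$; by Theorem \ref{thm:mainresult1}(ii) the two RHP solutions $Y_1^{\B{m}},Y_2^{\B{m}}$ reconstruct the polynomial $\mathcal{A}(z;q^m\lambda_0,g^{\B{m}},u_m)$ with the \emph{same} $g^{\B{m}}$, but with $u_m$ determined only up to the overall constant in \eqref{eq:auxiliary}. Since $u$ enters \eqref{eq:AJN} purely as conjugation by $\operatorname{diag}(u,1)$, I would propagate this single diagonal gauge through the normalisation \eqref{eq:normalisationm} and the jump relation $C=Y_-^{-1}Y_+$ of \eqref{eq:jumpm} to conclude that $C_1$ and $C_2$ differ by right multiplication by a diagonal matrix, so that $[C_1]=[C_2]$. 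Uniqueness of the RHP solution (Lemma \ref{lem:rhuniqueness}) is what keeps this reconstruction unambiguous.

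For surjectivity I would exhibit $\mathcal{D}$ concretely. Given a transcendent $(g^{\B{m}})$, choose an index $m$ with $g_3^{\B{m}}\neq 0$; such an index exists because, by the singularity-confinement continuation \eqref{eq:gcontinuation}, the singular value $\mathsf s$ cannot recur at three consecutive iterates. At this $m$ form the cubic matrix $\mathcal{A}(z;q^m\lambda_0,g^{\B{m}},u)$ for an arbitrary $u\in\mathbb{C}^*$; it is Fuchsian and non-resonant by \eqref{eq:conditionnonresonant}, so Carmichael's construction \cite{carmichael1912} yields fundamental solutions $Y_0$ near $z=0$ and $Y_\infty$ near $z=\infty$ of \eqref{eq:laxspectral}. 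Setting $C(z)=Y_0(z)^{-1}Y_\infty(z)$, I then verify the four defining properties of $\mathfrak{C}(\lambda_0,a)$: analyticity on $\mathbb{C}^*$ \eqref{item:c1} from Carmichael; the quasi-periodicity \eqref{item:c2} from the $q$-difference equations obeyed by $Y_0,Y_\infty$ together with the leading term $z^3 qa_0^2a_2 i\operatorname{diag}(\lambda,\lambda^{-1})$ of $\mathcal{A}$ in \eqref{eq:Aexplicit}; the determinant identity \eqref{item:c3} by matching the zeros of $\det\mathcal{A}$, located at the critical exponents $x_1,\dots,x_6$ of \eqref{eq:x1x6}, to the zeros of the $q$-theta product along the corresponding six $q$-spirals; and the reflection symmetry \eqref{item:c4} from the parity of $\mathcal{A}$ under $z\mapsto -z$.

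It remains --- and this is the main obstacle --- to prove that the single matrix $C$ built at one regular index governs the whole sequence, that is, $\mathcal{R}\circ\mathcal{D}=\mathrm{id}$ and $\mathcal{D}\circ\mathcal{R}=\mathrm{id}$. Here the isomonodromy property is essential: using the compatibility \eqref{eq:ABcomp} and the deformation equation \eqref{eq:laxtime}, I would show that the Carmichael solutions at $\lambda=q^m\lambda_0$ and at $\lambda=q^{m+1}\lambda_0$ are related through $\mathcal{B}$, so that $C$ is preserved (modulo the exact transformation recorded in \eqref{item:c2}) as $m$ runs over $\mathbb{Z}$. Consequently the $C$ produced by $\mathcal{D}$ reproduces every regular iterate via Theorem \ref{thm:mainresult1}, while Theorem \ref{thm:mainresult1}(iii) already identifies the RHP failure set $\mathfrak M$ with the singular indices of $g$, so the singular values $\mathsf s$ are matched automatically. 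The portions I expect to absorb most of the work are the verification of the determinant identity \eqref{item:c3} --- which alone pins the zeros of $\det C$ to exactly the six $q$-spirals --- and the proof that $\mathcal{D}$ descends to the quotient $\mathcal{M}_c(\lambda_0,a)$, i.e. that changing $u$ or the normalisations of $Y_0,Y_\infty$ alters $C$ only by right multiplication by a diagonal matrix; together these two facts are what make $\mathcal{D}$ a genuine inverse of $\mathcal{R}$.
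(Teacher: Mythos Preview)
Your approach is essentially the same as the paper's: construct the inverse via the direct monodromy map (the paper's $M_{\text{IV}}$, equation \eqref{eq:miv}), invoke isomonodromy (Corollary \ref{cor:timeevolutiona} and Lemma \ref{lem:equivalencetime}) to prove the resulting class is independent of the chosen regular index $m$, and handle the diagonal gauge freedom exactly as in \eqref{eq:scaling}. The one omission is that the connection matrix built at index $m$ naturally lies in $\mathfrak{C}(q^m\lambda_0,a)$, not $\mathfrak{C}(\lambda_0,a)$, so you must pull it back via $\tau^{-m}$ (equation \eqref{eq:taudef}) before comparing; the paper does this explicitly, and the verifications of \eqref{item:c1}--\eqref{item:c4} you flag as the main work are already packaged in Lemma \ref{lem:connection}.
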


To define coordinates on the monodromy surface, we use the following notation: for any nonzero $2\times2$ matrix $R$ which is not invertible, let $r_1$ and $r_2$ be respectively its first and second row, then we define $\pi(R)\in\mathbb C\mathbb P=\mathbb{P}^1$ by
\begin{equation*}
r_1=\pi(R)r_2,
\end{equation*}
with $\pi(R)=0$ if and only if $r_1=(0,0)$ and $\pi(R)=\infty$ if and only if $r_2=(0,0)$.

 Take any equivalence class $\textsf{C}=[C]\in \mathcal{M}_c(\lambda_0,a)$.
Let $1\leq k\leq 3$, then $|C(z)|$ has a simple zero at $z=x_k$, due to item \eqref{item:c3} in Definition \ref{def:connectionspace}, and thus $C(x_k)$, while nonzero, is not invertible. We define the coordinates
\begin{equation*}
\rho_k=\pi(C(x_k)),\quad (1\leq k\leq 3).
\end{equation*}
Note that $(\rho_1,\rho_2,\rho_3)$ are invariant under right multiplication by diagonal matrices and they are thus well-defined coordinates on $\mathcal{M}_c(\lambda_0,a)$. We denote the corresponding mapping by
\begin{equation}\label{eq:defrho}
    \rho: \mathcal{M}_c(\lambda_0,a)\rightarrow \mathbb{P}^1\times \mathbb P^1\times \mathbb P^1,\ \textrm{where}\ 
   \textsf{C}\overset{\rho}{\mapsto} (\rho_1,\rho_2,\rho_3).
\end{equation}

The coordinates $(\rho_1,\rho_2,\rho_3)$ are not independent. They represent points on a cubic surface defined below.
\begin{definition}\label{def:modulispace}
Define the cubic polynomial
\begin{align*}
T(p_1,p_2,p_3;\lambda_0,a)=&+\theta_q(+a_0,+a_1,+a_2)\left(\theta_q(\lambda_0)p_1p_2p_3-\theta_q(-\lambda_0)\right)\\
&-\theta_q(-a_0,+a_1,-a_2)\left(\theta_q(\lambda_0)p_1-\theta_q(-\lambda_0)p_2p_3\right)\\
&+\theta_q(+a_0,-a_1,-a_2)\left(\theta_q(\lambda_0)p_2-\theta_q(-\lambda_0)p_1p_3\right)\\
&-\theta_q(-a_0,-a_1,+a_2)\left(\theta_q(\lambda_0)p_3-\theta_q(-\lambda_0)p_1p_2\right),
\end{align*}
and its homogeneous form 
\begin{equation}\label{eq:thom}
    T_{hom}(p_1^x,p_1^y,p_2^x,p_2^y,p_3^x,p_3^y;\lambda_0,a)=p_1^yp_2^yp_3^yT\left(\frac{p_1^x}{p_1^y},\frac{p_2^x}{p_2^y},\frac{p_3^x}{p_3^y};\lambda_0,a\right).
  \end{equation}
  Then, using homogeneous coordinates $p_k=[p_k^x: p_k^y]\in \mathbb P^1$, for $1\le k\le 3$, the equation
  \begin{equation*}
      T_{hom}(p_1^x,p_1^y,p_2^x,p_2^y,p_3^x,p_3^y;\lambda_0,a)=0
  \end{equation*}
  defines a cubic surface in $\{(p_1,p_2,p_3)\in\mathbb{P}^1\times \mathbb P^1\times \mathbb P^1\}$, which we denote by $\mathcal{P}(\lambda_0,a)$.
\end{definition}

Our third main result is given by the following theorem.
\begin{theorem}\label{thm:mainresult3}
The range of the mapping $\rho$, defined in equation \eqref{eq:defrho}, is given by the cubic surface $\mathcal{P}(\lambda_0,a)$. On
$\mathcal{P}(\lambda_0,a)$, the mapping
\begin{equation*}
 \mathcal{M}_c(\lambda_0,a)\rightarrow \mathcal{P}(\lambda_0,a),\ \textrm{where}\ \textsf{C}\overset{\rho}{\mapsto} \rho(\textsf{C})
\end{equation*}
is a bijection and in particular the cubic surface $\mathcal{P}(\lambda_0,a)$ is the moduli space of $\mathcal{M}_c(\lambda_0,a)$ and thus of $q\Pfour^{\text{mod}}(\lambda_0,a)$ and $q\Pfour(\lambda_0,a)$.
\end{theorem}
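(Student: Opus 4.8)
The plan is to reduce the entire statement to the single entry $C_{11}$ of a representative connection matrix, evaluated at the six points $x_1,\dots,x_6$ of \eqref{eq:x1x6}. First I would record that, by property \eqref{item:c4} of Definition \ref{def:connectionspace}, the second row of $C$ is determined by the first via $C_{21}(z)=-C_{11}(-z)$ and $C_{22}(z)=C_{12}(-z)$, and that $x_{k+3}=-x_k$ for $1\le k\le 3$. Since $|C(x_k)|=0$ by property \eqref{item:c3}, the rows of $C(x_k)$ are proportional, and a short computation gives
\begin{equation*}
\rho_k=\pi(C(x_k))=-\frac{C_{11}(x_k)}{C_{11}(x_{k+3})}\qquad(1\le k\le 3),
\end{equation*}
so that $\rho$ depends only on the class of $C_{11}$ modulo scalars. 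Property \eqref{item:c2} forces $C_{11}$ into the three-dimensional space $V$ of $q$-theta functions with multiplier $C_{11}(qz)=(q a_0^2 a_2\lambda_0)^{-1}z^{-3}C_{11}(z)$; by non-resonance every nonzero element of $V$ has exactly three zeros modulo $q^{\mathbb Z}$, so the evaluation map $V\to\mathbb C^6$, $C_{11}\mapsto(C_{11}(x_i))_{i=1}^6$, is injective with three-dimensional image cut out by three linear relations $\sum_{i=1}^6\alpha_i^{(j)}C_{11}(x_i)=0$, $j=1,2,3$, whose coefficients $\alpha_i^{(j)}$ are explicit products of $q$-theta values.

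The cubic then appears as a resultant. Writing $v_i=C_{11}(x_i)$ and substituting $v_k=-\rho_k v_{k+3}$ into the three relations yields the homogeneous system $\sum_{k=1}^3(\alpha_{k+3}^{(j)}-\rho_k\alpha_k^{(j)})v_{k+3}=0$ in $(v_4,v_5,v_6)$. Because $C_{11}\not\equiv 0$ this system has a nontrivial solution, so its $3\times3$ determinant must vanish; being multilinear in $\rho_1,\rho_2,\rho_3$, this determinant is exactly a tri-linear form in the eight monomials $\prod_{k\in S}\rho_k$, $S\subseteq\{1,2,3\}$, matching the shape of $T$. I would therefore identify $\det(\alpha_{k+3}^{(j)}-\rho_k\alpha_k^{(j)})$ with $T_{hom}(\rho;\lambda_0,a)$ up to a nonzero factor, which proves $\operatorname{range}(\rho)\subseteq\mathcal P(\lambda_0,a)$. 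The coefficient of $\prod_{k\in S}\rho_k$ is $(-1)^{|S|}$ times a mixed $3\times3$ theta determinant, and the parity $(-1)^{|S|}$ is precisely what produces the clean split $\theta_q(\lambda_0)$ for $|S|$ odd versus $\theta_q(-\lambda_0)$ for $|S|$ even seen in Definition \ref{def:modulispace}, while the remaining factors collapse to $\theta_q(\epsilon_0a_0,\epsilon_1a_1,\epsilon_2a_2)$ with $\epsilon_0\epsilon_1\epsilon_2=1$. Establishing the three relations and evaluating these mixed determinants by a $q$-analogue of the Weierstrass three-term theta identity is the main obstacle and the bulk of the computation; the non-resonance conditions \eqref{eq:conditionnonresonant} guarantee that none of the theta factors degenerate.

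For bijectivity I would run the same linear system in reverse. Given $\rho\in\mathcal P$, the vanishing of $T_{hom}$ makes the matrix $(\alpha_{k+3}^{(j)}-\rho_k\alpha_k^{(j)})$ singular, so generically there is a one-dimensional solution space $[v_4:v_5:v_6]$; this determines $[v_1:\dots:v_6]$, which satisfies the three relations by construction and hence lies in the image of the evaluation map, so it reconstructs a unique class $[C_{11}]\in\mathbb P(V)$. Injectivity of $\rho$ follows because $\rho(\mathsf C)$ recovers $[C_{11}]$, and $[C_{11}]$ in turn determines $[C]\in\mathcal M_c$: the remaining entry $C_{12}$ is pinned down up to scale by the normalisation \eqref{item:c3}, since $C_{12}\mapsto C_{11}(z)C_{12}(-z)+C_{12}(z)C_{11}(-z)$ is a linear isomorphism onto the space of even order-six theta functions, and the diagonal action then absorbs the scale. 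Surjectivity is the converse reconstruction: from $[C_{11}]$ one builds $C_{12}$ via the same isomorphism, sets $C_{21},C_{22}$ by \eqref{item:c4}, and checks directly that the resulting $C$ satisfies \eqref{item:c1}--\eqref{item:c4}, so $\mathsf C\in\mathcal M_c$ with $\rho(\mathsf C)=\rho$. The loci where the solution space jumps to dimension two correspond exactly to $\rho_k\in\{0,\infty\}$, which is why the statement is phrased in $\mathbb P^1\times\mathbb P^1\times\mathbb P^1$ with the homogenised form $T_{hom}$; these degenerate cases I would handle separately in homogeneous coordinates.

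Finally, having shown $\rho:\mathcal M_c(\lambda_0,a)\to\mathcal P(\lambda_0,a)$ is a bijection, the moduli-space assertion is immediate: Theorem \ref{thm:rhmap} gives a bijection $\mathcal R:\mathcal M_c(\lambda_0,a)\to\{q\Pfour^{\text{mod}}(\lambda_0,a)\}$, so $\rho\circ\mathcal R^{-1}$ puts the $q\Pfour^{\text{mod}}(\lambda_0,a)$-transcendents in bijection with $\mathcal P(\lambda_0,a)$, and the birational equivalence \eqref{eq:aginf}--\eqref{eq:afing} transports this to $q\Pfour(\lambda_0,a)$.
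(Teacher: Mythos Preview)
Your strategy for showing $\operatorname{range}(\rho)\subseteq\mathcal P(\lambda_0,a)$ and for surjectivity is essentially the paper's, phrased dually: where you use the three linear relations cutting out the image of the evaluation map $C_{11}\mapsto(C_{11}(x_i))_{i=1}^6$, the paper instead fixes explicit bases $\{u_1,u_2,u_3\}$ of $U\ni C_1$ and $\{v_1,v_2,v_3\}$ of $V\ni C_2$ chosen so that $u_k(x_l)=v_k(x_l)=0$ for $k\neq l$, and writes the homogeneous linear system directly in the expansion coefficients $\alpha,\beta$. This basis bypasses the need to produce your relations $\alpha_i^{(j)}$ explicitly and leads quickly to $3\times 3$ determinants $\Delta_1,\Delta_2$ that are identified with $T_{hom}$ in an appendix. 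The paper also runs the construction symmetrically on \emph{both} $C_1$ and $C_2$ and proves the identity $\Delta_2=-\lambda_0^3\Delta_1$; this is what makes surjectivity clean, since a point of $\mathcal P$ then simultaneously yields nonzero $\alpha$ and $\beta$, hence both columns of $C$, and one only has to check that $|C|$ is the correct scalar multiple of $w(z)$, which follows because $|C(z)|$ and $w(z)$ are degree-six $q$-theta functions with the same zeros $\pm x_1,\pm x_2,\pm x_3$.

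Where your proposal diverges substantively is injectivity. You argue that $\rho$ determines $[C_{11}]$ and then that $[C_{11}]$ determines $[C]$ via the claim that $C_{12}\mapsto C_{11}(z)C_{12}(-z)+C_{11}(-z)C_{12}(z)$ is a linear isomorphism onto the even degree-six theta functions. You give no argument for this isomorphism, and for special $C_{11}$ it is not clear the map is injective (an odd ratio $C_{12}/C_{11}$ with multiplier $\lambda_0^2$ is not a priori excluded), so uniqueness of $[C_{12}]$ is unproven. Relatedly, the formula $\rho_k=-C_{11}(x_k)/C_{11}(-x_k)$ fails to pin down $\rho_k$ when $C_{11}(\pm x_k)$ both vanish, so $\rho$ does not literally depend only on $[C_{11}]$; your ``handle separately'' does not cover this case, since it is not the same as $\rho_k\in\{0,\infty\}$. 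The paper sidesteps all of this with a short direct argument: if $\rho(M)=\rho(\widetilde M)$ then $D(z):=C(z)^{-1}\widetilde C(z)$ satisfies $D(qz)=\lambda_0^{\sigma_3}D(z)\lambda_0^{-\sigma_3}$, the equality $\pi(C(x_k))=\pi(\widetilde C(x_k))$ together with the symmetry \eqref{item:c4} removes all potential poles of $D$ on $\pm q^{\mathbb Z}x_k$, so $D$ is analytic on $\mathbb C^*$, and since $\lambda_0^2\notin q^{\mathbb Z}$ the only such $D$ is a constant diagonal matrix. This is both shorter and uniform across the degenerate loci; I would recommend replacing your injectivity step with it.
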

\begin{remark}\label{rem:real}
It follows from Theorem \ref{thm:mainresult3} that $(\rho_1,\rho_2,\rho_3)\in \mathcal{P}(\lambda_0,a)$ parametrise the solution $f=(f^{\B{m}})_{m\in\mathbb{Z}}$ of $q\Pfour(\lambda_0,a)$. In the special case $a\in\mathbb{R}^3$, the transcendent $f$ is real-valued if and only if
$(\rho_1,\rho_2,\rho_3)\in \mathcal{P}(\lambda_0,a)\cap \mathbb P^1_{\mathbb R}\times \mathbb P^1_{\mathbb R}\times \mathbb P^1_{\mathbb R}$.
\end{remark}

\begin{remark}
  We note that in contrast to the monodromy manifolds associated with differential Painlev\'e equations \cite{chekhovmazzoccorubtsov2017}, the cubic surface $\mathcal{P}(\lambda_0,a)$ is non-affine. We also note that Chekhov \textit{et al.} \cite{chekhovmazzoccorubtsov2020} recently formulated a conjecture on the moduli spaces of $q$-difference equations corresponding to initial value spaces given by $A_0^{(1)*}$,  $A_1^{(1)}$, and $A_2^{(1)}$; these are stated as non-affine cubic surfaces in $\mathbb P^3$ (in the 2$^{\textrm{nd}}$, 3$^{\textrm{rd}}$ and 4$^{\textrm{th}}$ respective rows of \cite[Table 4]{chekhovmazzoccorubtsov2020}). Note that the conjecture does not include the case studied in the present paper, i.e., the initial value space $A_5^{(1)}$.

All discrete Painlev\'e equations arise as limits of Sakai's elliptic discrete Painlev\'e equation $q$P$(A_0^{(1)})$ and one may wish to know if our cubic surface $\mathcal{P}(\lambda_0,a)$  arises as a limit of those conjectured in \cite[Table 4]{chekhovmazzoccorubtsov2020}. There are two parts to the answer. One requires a change of coordinates between $\mathbb P^3$ and $\mathbb{P}^1\times \mathbb P^1\times \mathbb P^1$. The other requires the limits of parameters from $q$P$(A_0^{(1)*})$ to $q$P$(A_5^{(1)})$. However, the conjectured cubic surface for $q$P$(A_0^{(1)*})$ is given as the zero set of a cubic polynomial with arbitrary coefficients, which are not related explicitly to the parameters occurring in the $q$-discrete Painlev\'e equation. So we were unable to compare it to our surface, which is given explicitly in terms of such parameters.
\end{remark}

\section{A class of Fuchsian systems}
\label{sec:class}
In this section, we study the direct and inverse monodromy problem concerning Equation \eqref{eq:laxspectral}. Consider
 linear systems of the form
\begin{equation}\label{eq:qdifferenceclass}
Y(qz)=A(z)Y(z),
\end{equation}
where $A(z)$ is a degree three $2\times 2$-matrix polynomial with the following properties.
        \begin{definition}[Properties of $A$]\label{pro:aa}
Let 	\begin{equation}\label{eq:matrixpolynomial}
	A(z)=A_0+zA_1+z^2A_2+z^3A_3,
	\end{equation}
        and assume
        \begin{enumerate}[label={{\rm (}a.\arabic*\rm{)}},ref=a.\arabic*]
	\item $A_0$ has eigenvalues $\{\pm i\}$,\label{item:p1}
	\item $A_3=qa_0^2a_2i\begin{pmatrix}
	\lambda & 0\\
	0 & \lambda^{-1}
	\end{pmatrix}$,\label{item:p2}
	\item $|A(z)|=(1-a_0z)(1+a_0z)(1-a_0a_2z)(1+a_0a_2z)(1-qz)(1+qz)$, \label{item:p3}
	\item $A(-z)=-\sigma_3A(z)\sigma_3$,\label{item:p4}		
\end{enumerate}
where $\lambda\in\mathbb{C}^*$ and $a=(a_0,a_1,a_2)\in\mathbb{C}^3$ are parameters such that $a_0a_1a_2=q$. Then $A$ is said to have {\em appropriate} properties. Furthermore, we define $\mathcal{F}(\lambda,a)$ to be the set of such appropriate matrices.
\end{definition}

 In Section \ref{subsection:direct}, we define fundamental solutions of Equation \eqref{eq:qdifferenceclass} near the origin and infinity and characterise the connection matrix relating them. In Section \ref{subsection:inverse}, we describe the inverse problem and define an equivalent Riemann-Hilbert problem.

\subsection{The Direct Monodromy Problem}
\label{subsection:direct}
Given $A\in \mathcal{F}(\lambda,a)$, note that Equation \eqref{eq:qdifferenceclass} is Fuchsian. We characterize the fundamental solutions of such an equation, and the corresponding connection matrix and monodromy mapping below. 

\subsubsection{Fundamental Solutions}
Carmichael \cite[Theorem 1]{carmichael1912} showed that Fuchsian $q$-difference systems have solutions with convergent expansions near $0$ and $\infty$. 

\begin{lemma}\label{lem:solzero} Let $\lambda\in\mathbb{C}^*$, $a\in\mathbb{C}^3$ with $a_0a_1a_2=q$, and $A(z)\in \mathcal{F}(\lambda,a)$. Define  $u=-A_{12}(0)$.
Then, for any $d\in\mathbb{C}^*$, we have
	\begin{equation}\label{eq:A0iv}
	A(0)=M_0 \begin{pmatrix}
	i & 0\\
	0 & -i
	\end{pmatrix}M_0^{-1}, \textrm{ where } M_0:=d\begin{pmatrix}
	u & 0\\
	0 & 1
	\end{pmatrix}\cdot\begin{pmatrix}
	i & -i\\
	1 & 1
	\end{pmatrix},
	\end{equation}	
and, there exists a unique $2\times 2$ matrix $\Phi_0(z)$, meromorphic on $\mathbb{C}^*$, such that
	\begin{align}\label{eq:Phizeroeqiv}
	\Phi_0(qz)&=A(z)\Phi_0(z)\begin{pmatrix}
	-i & 0\\
	0 & i
	\end{pmatrix},\\
	\Phi_0(z)&=M_0+\mathcal{O}\left(z\right),\  {\rm as}\ z\rightarrow 0,
	\end{align}
	with the following properties:
	\begin{enumerate}[label={{\rm (}z.\arabic*{\rm )}},ref=z.\arabic*]
		\item $\Phi_0(z)^{-1}$ is analytic on $\mathbb{C}$ and $\Phi_0(0)=M_0${\rm ;}
		\item $|\Phi_0(z)|=|M_0|\left(a_0z,-a_0z,a_0a_2z,-a_0a_2z,qz,-qz;q\right)_\infty^{-1}${\rm ;}
		\item\label{p:third0} $\Phi_0(-z)=-\sigma_3\Phi_0(z)\sigma_1$.
	\end{enumerate}
	In particular, it follows that
	\begin{equation*}
	Y_0(z)=\Phi_0(z)E_0(z),
	\end{equation*}
	defines a fundamental solution of \eqref{eq:qdifferenceclass}, for any meromorphic $2\times 2$ matrix function $E_0(z)$ on $\mathbb C^*$, satisfying
	\begin{equation*}
	E_0(qz)=\begin{pmatrix}
	i & 0\\
	0 & -i
	\end{pmatrix}E_0(z),\quad |E_0(z)|\not \equiv 0.
	\end{equation*}
\end{lemma}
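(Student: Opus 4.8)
The plan is to realise $\Phi_0$ as the unique power series at the origin compatible with \eqref{eq:Phizeroeqiv} and the normalisation $\Phi_0(0)=M_0$, to invoke Carmichael's theorem for its convergence, and then to read off each listed property from this uniqueness. I would begin with the diagonalisation \eqref{eq:A0iv}. Evaluating \eqref{item:p4} at $z=0$ gives $A(0)=-\sigma_3A(0)\sigma_3$, which forces $A(0)$ to be anti-diagonal; combined with the eigenvalue condition \eqref{item:p1} and the definition $u=-A_{12}(0)$ this yields $A(0)=\left(\begin{smallmatrix}0&-u\\u^{-1}&0\end{smallmatrix}\right)$. One then checks by direct multiplication that the two columns of $M_0$ are eigenvectors of $A(0)$ for the eigenvalues $+i$ and $-i$ respectively, which is exactly \eqref{eq:A0iv}.

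Writing $D=\operatorname{diag}(-i,i)$ and $\Phi_0(z)=\sum_{k\ge0}C_kz^k$ with $C_0=M_0$, substitution into \eqref{eq:Phizeroeqiv} produces, for each $n\ge1$, the Sylvester-type relation $q^nC_n-A_0C_nD=\bigl(\sum_{j=1}^{\min(3,n)}A_jC_{n-j}\bigr)D$, while the order-zero term reduces to the consistency identity $C_0=A_0C_0D$, which holds because $\operatorname{diag}(i,-i)\,D=I$. Conjugating by $M_0$ and using $A_0=M_0\operatorname{diag}(i,-i)M_0^{-1}$, the operator $C\mapsto q^nC-A_0CD$ acts on the $(p,\ell)$ entry by multiplication by $q^n-\Lambda_{pp}D_{\ell\ell}$, with eigenvalues $q^n-1$ on the diagonal and $q^n+1$ off it. Since $0\neq|q|<1$ forces $q^n\neq\pm1$ for all $n\ge1$, this operator is invertible at every order, so the $C_n$ are determined uniquely by $C_0=M_0$; Carmichael's theorem then supplies convergence near $0$ (alternatively, the uniform boundedness of these inverse operators for large $n$ gives a geometric bound on $\|C_n\|$). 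Rewriting \eqref{eq:Phizeroeqiv} as $\Phi_0(z)=A(z)^{-1}\Phi_0(qz)D^{-1}$ continues $\Phi_0$ meromorphically to $\mathbb C^*$ with poles confined to $q^{-\mathbb N}\{x_i\}$.

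The three properties then follow from this uniqueness. Inverting \eqref{eq:Phizeroeqiv} gives $\Phi_0(z)^{-1}=D\,\Phi_0(qz)^{-1}A(z)$; since $A$ is polynomial and $D$ constant, this propagates analyticity of $\Phi_0^{-1}$ outward from a neighbourhood of $0$ without creating poles, yielding (z.1). For (z.2), taking determinants and using $\det D=1$ gives $\det\Phi_0(qz)=\det A(z)\det\Phi_0(z)$; by \eqref{item:p3} and the identity $(qw;q)_\infty=(1-w)^{-1}(w;q)_\infty$, the claimed product solves this first-order equation with value $|M_0|$ at $0$, and uniqueness of the Taylor germ forces equality. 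For (z.3), set $\Psi(z)=-\sigma_3\Phi_0(-z)\sigma_1$; using \eqref{item:p4} in the form $-\sigma_3A(-z)=A(z)\sigma_3$ together with $D\sigma_1=-\sigma_1D$ shows $\Psi$ satisfies \eqref{eq:Phizeroeqiv}, while $-\sigma_3M_0\sigma_1=M_0$ gives $\Psi(0)=M_0$, so $\Psi=\Phi_0$. Finally $Y_0=\Phi_0E_0$ solves \eqref{eq:qdifferenceclass} because $D\operatorname{diag}(i,-i)=I$ collapses the extra factors, and $\det Y_0\not\equiv0$.

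The only place where genuine care is needed is the resonance check driving the recursion, namely that $q^n\neq\pm1$ for every $n\ge1$: this is what makes the Sylvester operator invertible at all orders and hence simultaneously delivers existence and uniqueness of the germ at the origin, on which both the determinant formula and the symmetry rest. Everything after that is forced by uniqueness, and the remaining work is the routine (but constant-sensitive) Pauli-matrix and $q$-Pochhammer bookkeeping.
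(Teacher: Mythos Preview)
Your argument is correct and follows essentially the same route as the paper: the paper simply cites Carmichael \cite{carmichael1912}*{Theorem 1} for existence, uniqueness, (z.1) and (z.2), and then proves (z.3) by exactly your uniqueness trick with $\Phi(z)=-\sigma_3\Phi_0(-z)\sigma_1$. You have effectively unpacked Carmichael's proof (the Sylvester recursion and the resonance check $q^n\neq\pm1$) rather than black-boxing it, which makes your version more self-contained but not methodologically different.
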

\begin{proof}
Except for property \eqref{p:third0}, all the assertions of the lemma can be inferred directly from \cite{carmichael1912}*{Theorem 1}. To prove property \eqref{p:third0}, let
	\begin{equation*}
	\Phi(z):=-\sigma_3 \Phi_0(-z)\sigma_1.
	\end{equation*} 
	Then $\Phi(z)$ is analytic at $z=0$, with $\Phi(0)=M_0$, and straightforward calculation, using symmetry \eqref{item:p4}, shows that $\Phi(z)$ also satisfies Equation \eqref{eq:Phizeroeqiv}. By uniqueness, we must have $\Phi(z)=\Phi_0(z)$ and the lemma follows.
\end{proof}
Similarly, the following lemma provides a fundamental solution with a convergent expansion at infinity.

\begin{lemma}\label{lem:solinf} Let $\lambda\in\mathbb{C}^*$, $a\in\mathbb{C}^3$be given with $\lambda^2\notin q^\mathbb{Z}$ and $a_0a_1a_2=q$. For each $A(z)\in \mathcal{F}(\lambda,a)$, there exists a unique $2\times 2$ matrix  $\Phi_\infty(z)$, meromorphic on $\mathbb{C}^*$, such that
	\begin{align}\label{eq:Phiinfeqiv}
	\Phi_\infty(qz)&=\frac{1}{qa_0^2a_2i} z^{-3}A(z)\Phi_\infty(z)\begin{pmatrix}
	\lambda^{-1} & 0\\
	0 & \lambda
	\end{pmatrix},\\
	\Phi_\infty(z)&=I+\mathcal{O}\left(z^{-1}\right)\quad (z\rightarrow \infty),
	\end{align}
with the following properties:
	\begin{enumerate}[label={{\rm (}i.\arabic*{\rm )}},ref=i.\arabic*]
		\item $\Phi_\infty(z)$ is analytic on $\mathbb{P}^1\setminus\{0\}$ and $\Phi_\infty(\infty)=I$ {\rm ;}
		\item $|\Phi_\infty(z)|=\left(q/(a_0z),-q/(a_0z),q/(a_0a_2z),-q/(a_0a_2z),1/z,-1/z;q\right)_\infty${\rm ;}
		\item\label{p:third} $ \Phi_\infty(-z)=\sigma_3\Phi_\infty(z)\sigma_3$.
	\end{enumerate}
In particular,
\begin{equation*}
Y_\infty(z)=\Phi_\infty(z)E_\infty(z),
\end{equation*}
defines a fundamental solution of \eqref{eq:qdifferenceclass}, for any $2\times 2$ matrix $E_\infty(z)$, meromorphic on $\mathbb{C}^*$, satisfying	
\begin{equation*}
E_\infty(qz)=qa_0^2a_2iz^{-3}\begin{pmatrix}
\lambda & 0\\
0 & \lambda^{-1}
\end{pmatrix}E_\infty(z),\quad |E_\infty(z)|\not \equiv 0.
\end{equation*}
\end{lemma}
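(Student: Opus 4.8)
The plan is to treat the point at infinity as the exact dual of the origin and to mirror the proof of Lemma~\ref{lem:solzero}. Existence, uniqueness, the analyticity statement (i.1), the normalisation $\Phi_\infty(\infty)=I$, and the determinant formula (i.2) should all be read off from Carmichael's Theorem~1 \cite{carmichael1912}, once the hypotheses of that theorem are checked at infinity. First I would record that $A$ is Fuchsian there, since by property~\eqref{item:p2} the eigenvalues of $A_3$ are $iqa_0^2a_2\lambda$ and $iqa_0^2a_2\lambda^{-1}$, both nonzero; hence $\det A_3\neq0$. The relevant non-resonance condition of Definition~\ref{def:qlaxcondns} at infinity is that the ratio of these eigenvalues, namely $\lambda^2$, avoid $q^{\mathbb Z}$, which is precisely the hypothesis $\lambda^2\notin q^{\mathbb Z}$ and is what forces the formal (hence convergent) solution normalised by $\Phi_\infty=I+\mathcal O(z^{-1})$ to be unique. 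This also explains why no $\lambda$-condition appears in Lemma~\ref{lem:solzero}: there the eigenvalue ratio at the origin is $i/(-i)=-1$, which automatically avoids $q^{\mathbb Z}$ because $|q|<1$, whereas $\lambda^2$ need not. The prefactor $\tfrac{1}{qa_0^2a_2i}z^{-3}$ and the conjugation by $\operatorname{diag}(\lambda^{-1},\lambda)$ in \eqref{eq:Phiinfeqiv} are chosen exactly so that $\tfrac{1}{qa_0^2a_2i}z^{-3}A(z)\to\operatorname{diag}(\lambda,\lambda^{-1})$ as $z\to\infty$ (by \eqref{item:p2}), making this normalisation consistent.

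The symmetry (i.3) I would deduce, just as \eqref{p:third0} was deduced in Lemma~\ref{lem:solzero}, from the uniqueness above. Set $\Phi(z):=\sigma_3\Phi_\infty(-z)\sigma_3$. Then $\Phi(\infty)=I$ because $\sigma_3^2=I$, and a direct computation using $(-z)^{-3}=-z^{-3}$ and the reflection symmetry $A(-z)=-\sigma_3A(z)\sigma_3$ of \eqref{item:p4} gives
\[
\Phi(qz)=\sigma_3\Phi_\infty(-qz)\sigma_3=\frac{1}{qa_0^2a_2i}z^{-3}A(z)\,\Phi(z)\,\operatorname{diag}(\lambda^{-1},\lambda),
\]
where I use that $\sigma_3$ commutes with the diagonal matrix $\operatorname{diag}(\lambda^{-1},\lambda)$. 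Thus $\Phi$ satisfies the same functional equation \eqref{eq:Phiinfeqiv} with the same normalisation, so by uniqueness $\Phi=\Phi_\infty$, which is exactly (i.3).

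Although (i.2) can be inferred from Carmichael's construction as in Lemma~\ref{lem:solzero}, I would also verify the explicit product directly as a sanity check. Taking determinants of \eqref{eq:Phiinfeqiv} and writing $D(z)=|\Phi_\infty(z)|$, the factor $|\operatorname{diag}(\lambda^{-1},\lambda)|=1$ drops out and one gets the scalar $q$-difference equation $D(qz)=\tfrac{z^{-6}}{(qa_0^2a_2i)^2}|A(z)|\,D(z)$, with $|A(z)|$ the explicit sextic of property~\eqref{item:p3}. Applying the shift rule $(w/q;q)_\infty=(1-w/q)(w;q)_\infty$ to each of the six factors of the claimed product $P(z)$ shows $P(qz)/P(z)=\prod_c(1-c/(qz))$, and regrouping into the three differences of squares reproduces $\tfrac{z^{-6}}{(qa_0^2a_2i)^2}|A(z)|$ exactly; together with $P(\infty)=1$ this identifies $D=P$, the quotient $D/P$ being $q$-periodic with limit $1$ at infinity and hence identically $1$ (propagate the normalisation along the orbit $q^{-n}z\to\infty$, using $|q|<1$). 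Finally, the fundamental-solution assertion is immediate: substituting the functional equations for $\Phi_\infty$ and $E_\infty$ into $Y_\infty=\Phi_\infty E_\infty$ collapses the scalar and diagonal factors and yields $Y_\infty(qz)=A(z)Y_\infty(z)$, while $|Y_\infty|\not\equiv0$ because $|\Phi_\infty|\not\equiv0$ and $|E_\infty|\not\equiv0$.

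Given the reduction to Carmichael and the parallel with Lemma~\ref{lem:solzero}, most of the lemma is either a citation or a short symmetry/substitution argument. The conceptual crux is the non-resonance check, where the hypothesis $\lambda^2\notin q^{\mathbb Z}$ enters as the eigenvalue ratio of $A_3$; and the main computational obstacle is the determinant identity, which requires the precise form of $|A(z)|$ in \eqref{item:p3} and careful bookkeeping of the $q$-shifts to confirm that the specific six-factor $q$-Pochhammer product is the correct solution of the scalar recurrence.
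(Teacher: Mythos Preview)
Your proposal is correct and follows essentially the same approach as the paper: cite Carmichael's Theorem~1 for existence, uniqueness, (i.1), the normalisation, and (i.2), then derive the symmetry (i.3) by setting $\Phi(z)=\sigma_3\Phi_\infty(-z)\sigma_3$ and invoking uniqueness. You have added useful elaboration---the explicit non-resonance check (why $\lambda^2\notin q^{\mathbb Z}$ is the relevant hypothesis at infinity), a direct verification of the determinant product, and the fundamental-solution substitution---but the underlying strategy is identical to the paper's short proof.
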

\begin{proof}
	Except for property \eqref{p:third}, all the contents of the lemma can be inferred directly from \cite{carmichael1912}*{Theorem 1}. To check the remaining property, let
	\begin{equation*}
	\Phi(z):=\sigma_3 \Phi_\infty(-z)\sigma_3.
	\end{equation*} 
	Then $\Phi(z)$ is analytic at $z=\infty$, with $\Phi(\infty)=I$, and straightforward calculation, using symmetry \eqref{item:p4}, shows that $\Phi(z)$ also solves equation \eqref{eq:Phiinfeqiv}. By uniqueness, we must have $\Phi(z)=\Phi_\infty(z)$ and the lemma follows.
\end{proof}
Note that $\Phi_\infty(z)$ is uniquely defined in Lemma \ref{lem:solinf}, whereas $\Phi_0(z)$ is only defined up to a constant multiplier in Lemma \ref{lem:solzero}.

\subsubsection{The Connection Matrix}
The fundamental solutions $Y_0(z)$ and $Y_\infty(z)$ constructed in Lemmas \ref{lem:solzero} and \ref{lem:solinf} are related by a connection matrix:
\begin{align*}
Y_\infty(z)&=Y_0(z)P(z),\\
P(z)&=Y_0(z)^{-1}Y_\infty(z),\\
&=E_0(z)^{-1}C(z)E_\infty(z),
\end{align*}
where
\begin{equation}\label{eq:assodefiiv}
C(z)=\Phi_0(z)^{-1}\Phi_\infty(z).
\end{equation}
Note that there is a great deal of freedom in choosing $E_0(z)$ and $E_\infty(z)$, which in turn implies that $P(z)$ is not rigidly defined. In contrast, the matrix $C(z)$ is rigidly defined up to a constant multiplier.

For this reason, we choose $C(z)$ to be the connection matrix associated with $A(z)$. This is in line with the Galoisian approach developed in \cite{putgalois1997}, where $E_0(z)$ and $E_\infty(z)$ are considered merely as formal scalings. The following result shows that such connection matrices $C$ are elements of the space $\mathfrak{C}(\lambda,a)$ defined in Definition \ref{def:connectionspace}.

\begin{lemma}\label{lem:connection}
Let $\lambda\in\mathbb{C}^*$ and $a\in\mathbb{C}^3$, be such that $\lambda^2\notin q^\mathbb{Z}$ and $a_0a_1a_2=q$. Suppose $A(z)\in \mathcal{F}(\lambda,a)$ is given, and let $\Phi_0(z)$ and $\Phi_\infty(z)$ be the corresponding solutions defined in Lemma \ref{lem:solzero} and \ref{lem:solinf} respectively. Then  the connection matrix given by $C(z)=\Phi_0(z)^{-1}\Phi_\infty(z)$ is an element of the space $\mathfrak{C}(\lambda,a)$.
\end{lemma}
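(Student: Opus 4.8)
The plan is to verify that $C(z)=\Phi_0(z)^{-1}\Phi_\infty(z)$ satisfies each of the four defining properties \eqref{item:c1}--\eqref{item:c4} of $\mathfrak{C}(\lambda,a)$, drawing on the properties of $\Phi_0$ and $\Phi_\infty$ already established in Lemmas \ref{lem:solzero} and \ref{lem:solinf}. The key observation that makes everything work is that $\Phi_0(z)^{-1}$ is analytic on all of $\mathbb{C}$ (property (z.1)) and $\Phi_\infty(z)$ is analytic on $\mathbb{P}^1\setminus\{0\}$ (property (i.1)), so their product is automatically analytic on $\mathbb{C}^*$; this gives \eqref{item:c1} immediately.

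For the $q$-difference relation \eqref{item:c2}, I would combine the two functional equations \eqref{eq:Phizeroeqiv} and \eqref{eq:Phiinfeqiv}. Writing $C(qz)=\Phi_0(qz)^{-1}\Phi_\infty(qz)$ and substituting both relations, the factor $A(z)$ appears in $\Phi_0(qz)^{-1}$ inverted and in $\Phi_\infty(qz)$ directly, so it cancels. What remains is
\begin{equation*}
C(qz)=\begin{pmatrix} i & 0\\ 0 & -i\end{pmatrix}\Phi_0(z)^{-1}\cdot\frac{1}{qa_0^2a_2i}z^{-3}\Phi_\infty(z)\begin{pmatrix}\lambda^{-1} & 0\\ 0 & \lambda\end{pmatrix},
\end{equation*}
and recognizing $\operatorname{diag}(i,-i)=i\sigma_3$ and $\operatorname{diag}(\lambda^{-1},\lambda)=\lambda^{-\sigma_3}$ rearranges this into exactly the stated form $C(qz)=\frac{1}{qa_0^2a_2}z^{-3}\sigma_3C(z)\lambda^{-\sigma_3}$, after checking that the scalar factor $i\cdot\frac{1}{qa_0^2a_2i}=\frac{1}{qa_0^2a_2}$ comes out correctly. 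For the determinant condition \eqref{item:c3}, I would take determinants in $C(z)=\Phi_0(z)^{-1}\Phi_\infty(z)$ using properties (z.2) and (i.2). This produces a ratio of $q$-Pochhammer products; the plan is to use the identity $\theta_q(w)=(w;q)_\infty(q/w;q)_\infty$ from \eqref{eq:thetasym} to recombine the six infinite products $(a_0z,\dots,-qz;q)_\infty^{-1}$ from $\Phi_0$ with the six products $(q/(a_0z),\dots,-1/z;q)_\infty$ from $\Phi_\infty$ into the six $q$-theta factors $\theta_q(a_0z,-a_0z,a_0a_2z,-a_0a_2z,qz,-qz)$, the constant $|M_0|^{-1}$ absorbing into $c\in\mathbb{C}^*$.

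Finally, for the symmetry \eqref{item:c4}, I would use properties (z.3) and (i.3): compute $C(-z)=\Phi_0(-z)^{-1}\Phi_\infty(-z)=(-\sigma_3\Phi_0(z)\sigma_1)^{-1}(\sigma_3\Phi_\infty(z)\sigma_3)$, and simplify using $\sigma_1^{-1}=\sigma_1$, $\sigma_3^{-1}=\sigma_3$, $\sigma_3^2=I$ to arrive at $C(-z)=-\sigma_1\Phi_0(z)^{-1}\Phi_\infty(z)\sigma_3=-\sigma_1C(z)\sigma_3$. I anticipate that the main obstacle is bookkeeping rather than conceptual: the determinant computation in \eqref{item:c3} requires careful matching of the twelve infinite products and tracking how the $q$-shift in the arguments (so that $(q/w;q)_\infty$ factors line up) correctly reconstitutes each $\theta_q$, and one must confirm that the leftover powers of $z$ and constants collapse into a single nonzero constant $c$. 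The scalar and sign tracking in \eqref{item:c2} is the other place where an error could creep in, so I would double-check that factor of $i$ against the normalizations in \eqref{eq:Phizeroeqiv} and \eqref{eq:Phiinfeqiv}.
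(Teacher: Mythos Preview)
Your proposal is correct and follows exactly the same approach as the paper's proof, which simply states that \eqref{item:c2} follows from the $q$-difference equations for $\Phi_0$ and $\Phi_\infty$ while \eqref{item:c1}, \eqref{item:c3}, \eqref{item:c4} are direct consequences of the further properties in Lemmas~\ref{lem:solzero} and~\ref{lem:solinf}. You have fleshed out the details accurately, including the sign and scalar bookkeeping in \eqref{item:c2} and the pairing of the Pochhammer factors into $\theta_q$'s for \eqref{item:c3}.
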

\begin{proof}
Property \eqref{item:c2} follows from the $q$-difference equations which $\Phi_0(z)$ and $\Phi_\infty(z)$ satisfy. The remaining properties \eqref{item:c1}, \eqref{item:c3} and \eqref{item:c4} are direct consequences of the further properties of $\Phi_0(z)$ and $\Phi_\infty(z)$ given in 
Lemmas \ref{lem:solzero} and \ref{lem:solinf}.
\end{proof}

\subsubsection{The Monodromy Mapping} Recalling that the connection matrix is only uniquely defined up to scalar multiplication, we are led to the following definition of the space of equivalence classes of such matrices, under the condition of non-resonance.
\begin{definition}\label{def:monodromymap}
Let $\lambda\in\mathbb{C}^*$ and $a\in\mathbb{C}^3$ be given with $a_0a_1a_2=q$ and assume the non-resonant conditions \ref{eq:conditionnonresonant} are satisfied. Define the quotient space with respect to scalar multiplication by
\begin{equation*}
    M_c(\lambda,a)=\mathfrak{C}(\lambda,a)/\mathbb{C}^*,
\end{equation*}
and the monodromy mapping acting on $A(z)\in\mathcal{F}(\lambda,a)$ (see Definition \ref{pro:aa}) by
\begin{equation*}
M_{\mathcal{F}}:\mathcal{F}(\lambda,a)\rightarrow M_c(\lambda,a),\ \textrm{where}\ A(z)\overset{M_{\mathcal{F}}}{\mapsto} C(z),
\end{equation*}
where the image value $C(z)$ is the connection matrix corresponding to the Fuchsian system \ref{eq:qdifferenceclass} via Lemma \ref{lem:connection}.
\end{definition}

The condition of non-resonance  \eqref{eq:conditionnonresonant} in the above definition is required for the proof of the following proposition.
\begin{proposition}\label{pro:injective}
The monodromy mapping $M_{\mathcal{F}}$ defined in \ref{def:monodromymap} is injective.
\end{proposition}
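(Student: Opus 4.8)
The plan is to show that if two appropriate matrices $A,\tilde A\in\mathcal F(\lambda,a)$ are sent by $M_{\mathcal F}$ to the same class in $M_c(\lambda,a)=\mathfrak C(\lambda,a)/\mathbb C^*$, then $A=\tilde A$. Write $\Phi_0,\Phi_\infty$ and $\tilde\Phi_0,\tilde\Phi_\infty$ for the fundamental solutions attached to $A$ and $\tilde A$ by Lemmas \ref{lem:solzero} and \ref{lem:solinf}, and set $C=\Phi_0^{-1}\Phi_\infty$, $\tilde C=\tilde\Phi_0^{-1}\tilde\Phi_\infty$. The hypothesis means $\tilde C=c\,C$ for some $c\in\mathbb C^*$. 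The key object is the matrix ratio
\[
H(z):=\Phi_\infty(z)\tilde\Phi_\infty(z)^{-1},
\]
meromorphic near $z=\infty$ with $H(\infty)=I$. Rearranging $\tilde\Phi_0^{-1}\tilde\Phi_\infty=c\,\Phi_0^{-1}\Phi_\infty$ gives $\Phi_0\tilde\Phi_0^{-1}=c\,\Phi_\infty\tilde\Phi_\infty^{-1}$, so on the common domain $H(z)=\tfrac1c\Phi_0(z)\tilde\Phi_0(z)^{-1}$; as both sides are meromorphic on the connected set $\mathbb C^*$, they define a single meromorphic function $H$ there.

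Next I would record two facts. First, inserting the $q$-difference equations \eqref{eq:Phiinfeqiv} for $\Phi_\infty,\tilde\Phi_\infty$ into the definition of $H$, the shared scalar prefactor $\tfrac1{qa_0^2a_2i}z^{-3}$ and the diagonal factors $\operatorname{diag}(\lambda^{-1},\lambda)$ cancel, leaving
\[
H(qz)=A(z)\,H(z)\,\tilde A(z)^{-1}.
\]
Second, I would localise the poles of $H$ from its two representations. In $H=\tfrac1c\Phi_0\tilde\Phi_0^{-1}$ the factor $\tilde\Phi_0^{-1}$ is entire on $\mathbb C$ (Lemma \ref{lem:solzero}), so poles come only from $\Phi_0$; by the determinant formula in Lemma \ref{lem:solzero} together with \eqref{eq:x1x6}, these lie on the outward spirals $\{q^{-k}x_i:k\ge0,\ 1\le i\le6\}$, and $H$ is analytic near $0$ with $H(0)=\tfrac1c M_0\tilde M_0^{-1}$. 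In $H=\Phi_\infty\tilde\Phi_\infty^{-1}$ the factor $\Phi_\infty$ is analytic on $\mathbb P^1\setminus\{0\}$ (Lemma \ref{lem:solinf}), so poles come only from zeros of $\det\tilde\Phi_\infty$, which by the determinant formula lie on the inward spirals $\{q^{k}x_i:k\ge1\}$; in particular $H$ is analytic near $\infty$.

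The heart of the argument, and the step where I expect the main work, is to combine these two localisations. A genuine pole of $H$ would have to lie simultaneously in $\{q^{-k}x_i\}_{k\ge0}$ and in $\{q^{k}x_j\}_{k\ge1}$. For $i=j$ these two exponent ranges are disjoint, while for $i\ne j$ an intersection would force $x_i/x_j\in q^{\mathbb Z}$, which is exactly what the non-resonance conditions \eqref{eq:conditionnonresonant} exclude (equivalently $x_i/x_j\notin q^{\mathbb Z}$ for $i\neq j$). Hence $H$ has no poles on $\mathbb C^*$, and with analyticity at $0$ and $\infty$ this makes $H$ analytic on all of $\mathbb P^1$; by Liouville's theorem $H$ is constant, and $H(\infty)=I$ forces $H\equiv I$. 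Substituting back into $H(qz)=A(z)H(z)\tilde A(z)^{-1}$ gives $A(z)=\tilde A(z)$ wherever $\det A\neq0$, hence identically, which proves injectivity.

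It is worth flagging that non-resonance enters in precisely two places: through $\lambda^2\notin q^{\mathbb Z}$ it guarantees the very existence of $\Phi_\infty,\tilde\Phi_\infty$ via Lemma \ref{lem:solinf}, and through $x_i/x_j\notin q^{\mathbb Z}$ it separates the two candidate pole loci. The argument genuinely fails without it, since resonant coincidences of the spirals through distinct $x_i$ would allow uncancelled poles of $H$ and thereby non-trivial gauge ambiguities in the reconstruction of $A$ from its connection matrix.
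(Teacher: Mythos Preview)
Your proof is correct and follows essentially the same strategy as the paper: the paper defines $G(z)=\widetilde{\Phi}_\infty(z)\Phi_\infty(z)^{-1}$ (your $H^{-1}$), obtains the second representation $G=c\,\widetilde{\Phi}_0\Phi_0^{-1}$, uses the non-resonance conditions to show the two candidate pole loci $q^{\mathbb{N}^*}\{x_1,\dots,x_6\}$ and $q^{-\mathbb{N}}\{x_1,\dots,x_6\}$ are disjoint, and concludes $G\equiv I$ by Liouville. The only cosmetic differences are the orientation of the ratio and your extra observation $H(qz)=A(z)H(z)\tilde A(z)^{-1}$, which the paper replaces by invoking \eqref{eq:Phiinfeqiv} directly once $\widetilde{\Phi}_\infty=\Phi_\infty$.
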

\begin{proof}
Let $A(z),\widetilde{A}(z)\in \mathcal{F}(\lambda,a)$, and denote corresponding $\Phi_0(z)$, $\Phi_\infty(z),C(z)$ and $\widetilde{\Phi}_0(z)$, $\widetilde{\Phi}_\infty(z),\widetilde{C}(z)$ as defined in Lemmas \ref{lem:solzero}, \ref{lem:solinf} and \ref{lem:connection} respectively.

Suppose $M_{\mathcal{F}}(\widetilde{A})=M_{\mathcal{F}}(A)$, then there exists a $c\in\mathbb{C}^*$ such that $\widetilde{C}(z)=c\,C(z)$, hence
\begin{align}
G(z):&=\widetilde{\Phi}_\infty(z)\Phi_\infty(z)^{-1}\label{eq:ginf}\\
&=\widetilde{\Phi}_0(z)\widetilde{C}(z)C(z)^{-1}\Phi_0(z)^{-1}\nonumber\\
&=c\,\widetilde{\Phi}_0(z)\Phi_0(z)^{-1}.\label{eq:g0}
\end{align}
From equation \eqref{eq:ginf} and Lemma \ref{lem:solinf}, it is clear that $G(z)$ is analytic on
\begin{equation*}
\mathbb{C}^*\setminus\left(q^{\mathbb{N}^*}\{x_1,\ldots,x_{6}\}\right),
\end{equation*}
where $x_1,\ldots,x_6$ are as defined in \eqref{eq:x1x6}. Similarly, from \eqref{eq:g0} and Lemma \ref{lem:solzero}, it follows that $G(z)$ is analytic on
\begin{equation*}
\mathbb{C}\setminus\left(q^{-\mathbb{N}}\{x_1,\ldots,x_{6}\}\right).
\end{equation*}
We conclude that $G(z)$ is analytic on the complement of
\begin{equation}\label{eq:complement}
\big(q^{\mathbb{N}^*}\{x_1,\ldots,x_{6}\}\big)\cap \big(q^{-\mathbb{N}}\{x_1,\ldots,x_{6}\}\big),
\end{equation}
in $\mathbb{C}$. However, precisely because of the non-resonant conditions \eqref{eq:conditionnonresonant}, the intersection in \eqref{eq:complement} is empty, so $G(z)$ is analytic on $\mathbb{C}$. Finally from equation \eqref{eq:ginf} and Lemma \ref{lem:solinf}, it follows that $G(z)=I+\mathcal{O}(z^{-1})$ as $z\rightarrow \infty$, and hence $G(z)\equiv I$ by Liouville's theorem. Therefore $\widetilde{\Phi}_\infty(z)=\Phi_\infty(z)$, and hence $\widetilde{A}(z)=A(z)$, by equation \eqref{eq:Phiinfeqiv}, which gives the desired result.
\end{proof}

\subsection{The Inverse Monodromy Problem}\label{subsection:inverse}
In this section, we consider the surjectivity of the monodromy mapping, which is a more delicate issue than its injectivity (as it is the content of the $q$-analog of Hilbert's 21st problem). Birkhoff \cite{birkhoffgeneralized1913} gave a comprehensive description of this problem in the generic non-resonant case.

Considering the class of Fuchsian systems \eqref{eq:qdifferenceclass}, we formulate the main inverse problem as follows.
\begin{prob}[The Inverse Monodromy Problem]\label{prob:inverse}
Let $\lambda\in\mathbb{C}^*$ and $a\in\mathbb{C}^3$, satisfying $a_0a_1a_2=q$, such that the non-resonant conditions \eqref{eq:conditionnonresonant} are satisfied. Given a monodromy datum $\textsf{C}=[C(z)]\in M_c(\lambda,a)$, construct a Fuchsian system \eqref{eq:qdifferenceclass}, whose associated connection matrix (modulo a constant) is $\textsf{C}$.
\end{prob}
In Proposition \ref{pro:imrhequivalence}, we show that this inverse problem is equivalent to Riemann-Hilbert problem $\text{RHP}^{\B{0}}(\gamma,C)$, defined in Definition \ref{def:RHP1and2}, for any admissible curve $\gamma$. But first we prove that Riemann-Hilbert problem $\text{RHP}^{\B{m}}(\gamma,C)$ has at most one solution, for any $m\in\mathbb{Z}$.

\begin{lemma}\label{lem:rhuniqueness}
	For any $m\in\mathbb{Z}$, if $\textnormal{RHP}^{\B{m}}(\gamma,C)$ has a solution $Y^{\B{m}}(z)$, then it is unique. Furthermore, let $c\in\mathbb{C}^*$ be defined by \eqref{item:c3}, then the determinant $\Delta(z)=|Y^{\B{m}}(z)|$ equals
	\begin{equation}\label{eq:psidet}
	\Delta(z)=\begin{cases}
	\left(qx_1/z,\ldots,qx_{6}/z;q\right)_\infty & \text{if $z\in D_+$,}\\
	c^{-1}\left(z/x_1,\ldots z/x_{6};q\right)_\infty^{-1} & \text{if $z\in D_-$.}
	\end{cases}
	\end{equation}
	In particular $Y^{\B{m}}(z)$ is globally invertible on $\mathbb{C}\setminus \gamma$.
\end{lemma}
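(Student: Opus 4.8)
The plan is to prove uniqueness and the determinant formula together, exploiting the structure of the jump matrix and the known determinant $|C(z)|$.

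First I would establish uniqueness. Suppose $Y^{\B{m}}(z)$ and $\widetilde Y^{\B{m}}(z)$ are two solutions of $\textnormal{RHP}^{\B{m}}(\gamma,C)$. The standard strategy is to form the ratio $H(z)=\widetilde Y^{\B{m}}(z)\,Y^{\B{m}}(z)^{-1}$, but this requires knowing that $Y^{\B{m}}(z)$ is invertible, which is precisely the last assertion of the lemma. So logically I would first derive the determinant formula, or at least show $\Delta(z)$ is nonvanishing, and only then run the ratio argument. A cleaner route is to first work out $\Delta(z)=|Y^{\B{m}}(z)|$ for \emph{any} solution. Taking determinants of the jump relation \eqref{eq:jumpm} gives $\Delta_+(z)=\Delta_-(z)\,|C(z)|$ on $\gamma$, and the normalisation \eqref{eq:normalisationm} forces $\Delta(z)=1+\mathcal O(z^{-1})$ as $z\to\infty$ (the factor $z^{m\sigma_3}$ has determinant one). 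So $\Delta(z)$ is a scalar function, analytic on $\mathbb C\setminus\gamma$, with a prescribed multiplicative jump across $\gamma$ equal to $|C(z)|$, and normalised to $1$ at infinity.

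The key computation is then to verify that the piecewise product of $q$-Pochhammer symbols proposed in \eqref{eq:psidet} has exactly these properties, and that it is the \emph{unique} such scalar function. For this I would use property \eqref{item:c3}, namely $|C(z)|=c\,\theta_q(a_0z,-a_0z,a_0a_2z,-a_0a_2z,qz,-qz)$, together with the factorisation $\theta_q(w)=(w;q)_\infty(q/w;q)_\infty$ and the list \eqref{eq:x1x6} of the $x_i$. The two candidate expressions in \eqref{eq:psidet} are designed so that the $D_+$ branch $\left(qx_1/z,\ldots,qx_6/z;q\right)_\infty$ is analytic and nonvanishing in $D_+$ (its zeros sit on the spirals $q^{\mathbb N^*}x_i\subset D_-$ by admissibility \eqref{eq:insideoutside}), while the $D_-$ branch is analytic and nonvanishing there (its poles sit on $q^{-\mathbb N}x_i\subset D_+$). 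Computing the ratio of the two branches across $\gamma$ and matching it against $c^{-1}$ times the theta-function expression reduces, via $\theta_q(w)=(w;q)_\infty(q/w;q)_\infty$ and the reflection symmetry $x_{i+3}=-x_i$, to a bookkeeping identity among Pochhammer symbols; I expect this to collapse cleanly once the spiral $q^{\mathbb Z}x_i$ is split as $q^{\mathbb N^*}x_i\sqcup q^{-\mathbb N}x_i$. Uniqueness of this scalar Riemann problem follows because the quotient of any two solutions extends to an entire function (the jumps cancel) tending to $1$ at infinity, hence is identically $1$ by Liouville. This simultaneously yields \eqref{eq:psidet} and shows $\Delta(z)$ never vanishes on $\mathbb C\setminus\gamma$, giving global invertibility.

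With invertibility in hand I would close the matrix uniqueness argument: for two solutions $Y^{\B{m}},\widetilde Y^{\B{m}}$, the function $H(z)=\widetilde Y^{\B{m}}(z)\,Y^{\B{m}}(z)^{-1}$ is analytic on $\mathbb C\setminus\gamma$, and across $\gamma$ one has $H_+=\widetilde Y^{\B{m}}_+ (Y^{\B{m}}_+)^{-1}=\widetilde Y^{\B{m}}_- C\,C^{-1}(Y^{\B{m}}_-)^{-1}=H_-$, so $H$ has no jump and extends to an entire matrix function; the common normalisation \eqref{eq:normalisationm} (the $z^{m\sigma_3}$ factors cancel in the product) gives $H(z)=I+\mathcal O(z^{-1})$, whence $H\equiv I$ by Liouville applied entrywise. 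I expect the main obstacle to be the determinant bookkeeping of the third step — confirming that the prescribed products reproduce $|C(z)|$ on the nose, including the constant $c$ and the correct assignment of zero/pole spirals to $D_\pm$ — rather than the soft Liouville arguments, which are routine once invertibility is secured.
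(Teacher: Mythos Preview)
Your proposal is correct and follows essentially the same route as the paper's proof: reduce to a scalar Riemann--Hilbert problem for the determinant with jump $|C(z)|=c\,\theta_q(z/x_1,\ldots,z/x_6)$ and normalisation $1+\mathcal O(z^{-1})$, identify its unique solution as the explicit Pochhammer product \eqref{eq:psidet} via a Liouville argument, deduce global invertibility of $Y^{\B{m}}(z)$, and then run the standard matrix ratio $H=\widetilde Y^{\B{m}}(Y^{\B{m}})^{-1}$ plus Liouville to get uniqueness. Your write-up is in fact more careful than the paper's about the logical ordering (determinant formula before invertibility before matrix uniqueness) and about verifying that the admissibility condition \eqref{eq:insideoutside} places the zeros and poles of the candidate branches on the correct sides of $\gamma$.
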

\begin{proof}
Note that the determinant $\Delta(z)=|Y^{\B{m}}(z)|$ solves the following scalar Riemann-Hilbert problem:
\begin{enumerate}[label={{\rm (s.\arabic*)}},ref=s.\arabic*]
\item  $\Delta(z)$ is analytic on $\mathbb{C}\setminus\gamma$.
		\item  $\Delta(z)$ has continuous boundary values $\Delta_-(z)$ and $\Delta_+(z)$ for $z\in \gamma$,  related by the jump condition
		\begin{equation*}
		 \Delta_+(z)=\Delta_-(z)c\, \theta_q(z/x_1,\ldots,z/x_{6})\quad (z\in \gamma).
		 \end{equation*}
		\item  $\Delta(z)$ has the following asymptotic behaviour near infinity,
		\begin{equation*}\Delta(z)=1+\mathcal{O}\left(z^{-1}\right)\quad (z\rightarrow \infty).
		\end{equation*}
	\end{enumerate}
	We now show that this problem has a unique solution, given by \eqref{eq:psidet}. 
First note that $|Y^{\B{m}}(z)|$ vanishes nowhere and hence $Y^{\B{m}}(z)^{-1}$ is an analytic function on $\mathbb{C}\setminus\gamma$.
	Therefore, given another solution $\Psi(z)$ of Riemann-Hilbert problem $\textnormal{RHP}^{\B{m}}(\gamma,C)$, then
	\begin{equation*}
	H(z):=\Psi(z)Y^{\B{m}}(z)^{-1}
	\end{equation*}
	defines an analytic function on $\mathbb{C}\setminus\gamma$. Since $Y^{\B{m}}(z)$ and $\Psi(z)$ satisfy the same jump condition on $\gamma$,   $H(z)$ has analytic continuation to $\mathbb{C}$. Furthermore the asymptotic behaviour near infinity of both solutions implies $H(z)=I+\mathcal{O}\left(z^{-1}\right)$ as $z\rightarrow \infty$, and we conclude $H(z)\equiv I$ by Liouville's theorem.
	So $Y^{\B{m}}(z)=\Psi(z)$ and uniqueness follows. 
\end{proof}

\begin{proposition} \label{pro:imrhequivalence}
Let $\lambda\in\mathbb{C}^*$ and $a\in\mathbb{C}^3$, satisfying $a_0a_1a_2=q$, such that the non-resonant conditions \eqref{eq:conditionnonresonant} are satisfied. Given a monodromy datum $\textsf{C}=[C(z)]\in M_c(\lambda,a)$,
the inverse monodromy problem \ref{prob:inverse} is equivalent to  $\text{RHP}^{\B{0}}(\gamma,C)$, for any admissible curve $\gamma$, in the following sense.
	\begin{enumerate}[label={{\rm (\roman *)}},leftmargin=*]
		\item If $A(z)\in\mathcal{F}(\lambda,a)$ is a solution of the inverse monodromy problem \ref{prob:inverse}, then there exists a unique value of $d\in\mathbb{C}^*$ in Lemma \ref{lem:solzero} for which the corresponding matrix function $\Phi_0^{\B{m}}(z)$, together with the matrix function $\Phi_\infty^{\B{m}}(z)$ constructed in Lemma \ref{lem:solinf}, define a solution
		\begin{equation}\label{eq:psidef}
		\Psi(z):=\begin{cases}
		\Phi_\infty(z) &\text{if $z\in D_+$,}\\
	\Phi_0(z) &\text{if $z\in D_-$,}
		\end{cases}
		\end{equation}
		of Riemann-Hilbert problem $\text{RHP}^{\B{0}}(\gamma,C)$.
		\item Conversely, suppose $\Psi(z)$ is a solution of Riemann-Hilbert problem\hfill \\ $\text{RHP}^{\B{0}}(\gamma,C)$. Then  defining the solutions
		\begin{equation}\label{eq:phiinfzerodef}
		\Psi_\infty(z):=\begin{cases}
		\Psi(z) &\text{if $z\in D_+$,}\\
		\Psi(z) C(z) &\text{if $z\in D_-$,}
		\end{cases}\quad \Psi_0(z):=\begin{cases}
		\Psi(z) C(z)^{-1} &\text{if $z\in D_+$,}\\
		\Psi(z) &\text{if $z\in D_-$,}
		\end{cases}
		\end{equation}
		then $\Psi_\infty(z)$ and $\Psi_0(z)^{-1}$ are related by
		\begin{equation}\label{eq:psiinfpsi0}
		\Psi_\infty(z)=\Psi_0(z)C(z),
		\end{equation}
		and
		\begin{align}
		A(z):&=qa_0^2a_2iz^3\Psi_\infty(qz)\begin{pmatrix}
		\lambda & 0\\
		0 & \lambda^{-1}
		\end{pmatrix}\Psi_\infty(z)^{-1}\label{eq:defainf}\\
		&=\Psi_0(qz)\begin{pmatrix}
		i & 0\\
		0 & -i
		\end{pmatrix}\Psi_0(z)^{-1},\label{eq:defa0}
		\end{align}
		defines a solution $A(z)\in\mathcal{F}(\lambda,a)$ of the inverse monodromy problem \ref{prob:inverse}.
	\end{enumerate}
\end{proposition}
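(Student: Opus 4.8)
The plan is to prove the two implications separately, transporting analytic data across the admissible curve $\gamma$ and closing each argument with one of the uniqueness statements already available. For (i), assume $A(z)\in\mathcal{F}(\lambda,a)$ solves Problem \ref{prob:inverse}. I would take the Carmichael solutions $\Phi_\infty(z)$ (unique, Lemma \ref{lem:solinf}) and $\Phi_0(z)$ (defined up to the scalar $d$, Lemma \ref{lem:solzero}); by Lemma \ref{lem:connection} their ratio $\Phi_0(z)^{-1}\Phi_\infty(z)$ lies in $\mathfrak{C}(\lambda,a)$ and, since $A$ solves the inverse problem, equals a constant multiple of the prescribed $C(z)$. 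As rescaling $\Phi_0\mapsto d\,\Phi_0$ rescales this connection matrix by $d^{-1}$, there is a unique $d\in\mathbb{C}^*$ for which $\Phi_0(z)^{-1}\Phi_\infty(z)=C(z)$ on the nose. With $\Psi$ as in \eqref{eq:psidef} I would then verify the three conditions of $\text{RHP}^{\B{0}}(\gamma,C)$: the jump $\Psi_+=\Psi_-C$ is exactly $\Phi_\infty=\Phi_0C$, and the normalization $\Psi=I+\mathcal{O}(z^{-1})$ as $z\to\infty$ is the defining behaviour of $\Phi_\infty$ in $D_+$. The only delicate point is analyticity on $\mathbb{C}\setminus\gamma$: by the determinant formulas in Lemmas \ref{lem:solzero}--\ref{lem:solinf}, the poles of $\Phi_0$ sit on $q^{-\mathbb{N}}\{x_1,\dots,x_6\}$ and the rank drop of $\Phi_\infty$ on $q^{\mathbb{N}^*}\{x_1,\dots,x_6\}$; condition \eqref{eq:insideoutside} places the former in $D_+$ and the latter in $D_-$, so $\Phi_0$ is analytic on $D_-$ and $\Phi_\infty$ on $D_+$, as required.

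For (ii), given a solution $\Psi$ of $\text{RHP}^{\B{0}}(\gamma,C)$, I would first observe that the factors $C^{\pm1}$ in \eqref{eq:phiinfzerodef} exactly cancel the jump \eqref{eq:jumpm}, so that $\Psi_\infty$ and $\Psi_0$ are continuous, hence analytic, across $\gamma$; relation \eqref{eq:psiinfpsi0} then follows by inspection on $D_+$ and $D_-$ separately. The two formulas \eqref{eq:defainf} and \eqref{eq:defa0} coincide because of the functional equation \eqref{item:c2}: substituting $\Psi_\infty=\Psi_0C$ into \eqref{eq:defainf} and applying \eqref{item:c2} collapses the middle factor $C(qz)\operatorname{diag}(\lambda,\lambda^{-1})C(z)^{-1}$ to $\tfrac{1}{qa_0^2a_2}z^{-3}\sigma_3$, turning \eqref{eq:defainf} into \eqref{eq:defa0}.

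The substantive step is to show that the common value $A(z)$ is a cubic matrix polynomial lying in $\mathcal{F}(\lambda,a)$. I would use the two representations in tandem: \eqref{eq:defainf} exhibits $A$ as analytic wherever $\Psi_\infty(qz)$ and $\Psi_\infty(z)^{-1}$ are regular, and \eqref{eq:defa0} does the same for $\Psi_0$. Feeding in the explicit determinant of $\Psi$ from Lemma \ref{lem:rhuniqueness} (hence of $\Psi_\infty,\Psi_0$) together with the admissibility \eqref{eq:insideoutside}, the candidate singularities of one representation---located on the shifted contours $\gamma$, $q^{-1}\gamma$ and on the spirals $q^{\mathbb{Z}}\{x_1,\dots,x_6\}$---are covered by regularity of the other, so $A$ patches to an entire matrix function. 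Its growth is read off from \eqref{eq:defainf} using $\Psi_\infty\to I$ at infinity, giving $A(z)=qa_0^2a_2i\,z^3\operatorname{diag}(\lambda,\lambda^{-1})+\mathcal{O}(z^2)$ and hence a cubic polynomial with leading coefficient \eqref{item:p2}. Evaluating \eqref{eq:defa0} at $z=0$ shows $A(0)$ is conjugate to $\operatorname{diag}(i,-i)$, giving \eqref{item:p1}, while $\det A(z)=\det\Psi_0(qz)/\det\Psi_0(z)$ computed through \eqref{eq:psidet}, with $a_0a_1a_2=q$ and the values \eqref{eq:x1x6}, yields the required determinant \eqref{item:p3}. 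For the reflection symmetry \eqref{item:p4} I would apply to $A$ and $-\sigma_3A(-z)\sigma_3$ the Liouville argument underlying Proposition \ref{pro:injective}, whose proof does not invoke \eqref{item:p4}: the reflected matrix is again cubic, Fuchsian and non-resonant with the same critical data, and by \eqref{item:c4} its connection matrix is $-C$, so the two share a monodromy datum and must coincide. Finally, comparing \eqref{eq:defainf}--\eqref{eq:defa0} with \eqref{eq:Phiinfeqiv}--\eqref{eq:Phizeroeqiv} and invoking the uniqueness in Lemmas \ref{lem:solinf}--\ref{lem:solzero} identifies $\Psi_\infty,\Psi_0$ with the Carmichael solutions of this $A$, whence its connection matrix is $\Psi_0^{-1}\Psi_\infty=C$ and $A$ solves Problem \ref{prob:inverse}.

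I expect the genuine difficulty to be concentrated in the third paragraph, namely proving that $A$ is globally analytic and of exact degree three. This requires careful bookkeeping of the $q$-spirals $q^{\mathbb{Z}}\{x_1,\dots,x_6\}$ against the two shifted contours $\gamma$ and $q^{-1}\gamma$ forced by the simultaneous appearance of $z$ and $qz$ in \eqref{eq:defainf}--\eqref{eq:defa0}, with the admissibility condition \eqref{eq:insideoutside} guaranteeing that no spurious pole survives and the determinant \eqref{eq:psidet} pinning down the degree. By contrast, the algebraic identities---agreement of the two formulas via \eqref{item:c2} and the symmetry via \eqref{item:c4}---are routine once those properties are in hand.
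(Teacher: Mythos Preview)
Your proposal tracks the paper's proof closely for part (i) and for most of part (ii): the agreement of \eqref{eq:defainf} and \eqref{eq:defa0} via \eqref{item:c2}, the patching argument for entireness of $A$, the degree count from the asymptotics of $\Psi_\infty$, and properties \eqref{item:p1}--\eqref{item:p3} are all handled the same way. (The paper actually covers $\mathbb{C}$ with three pieces---$q^{-1}D_+$, $D_-$, and the strip $D_+\cap q^{-1}D_-$ where the mixed expression $A(z)=qa_0^2a_2iz^3\Psi_0(qz)C(qz)\lambda^{\sigma_3}\Psi_\infty(z)^{-1}$ is used---but your two-representation version reaches the same conclusion.)

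The one genuine divergence is your derivation of \eqref{item:p4}, and here there is a subtle circularity. You want to invoke the Liouville argument of Proposition~\ref{pro:injective} for $A$ and $\widetilde A:=-\sigma_3A(-z)\sigma_3$, asserting that ``by \eqref{item:c4} its connection matrix is $-C$''. But the connection matrix in Proposition~\ref{pro:injective} is built from the Carmichael solutions, and at this stage you only know $\Psi_\infty=\Phi_\infty$ (whose uniqueness needs no symmetry) while $\Psi_0=\Phi_0\,D$ for some \emph{diagonal} $D$, not a priori scalar. Consequently the connection matrix of $A$ is $DC$, and that of $\widetilde A$ computes to $(\sigma_1D\sigma_1)C$; these represent the same class in $M_c(\lambda,a)$ only if the diagonal entries of $D$ agree up to sign---which is exactly what \eqref{item:p4} would give you. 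The paper sidesteps this by working at the level of the RHP: using \eqref{item:c4} and the reflection symmetry $\gamma=-\gamma$ of the admissible curve, one checks that
\[
\widetilde\Psi(z)=\begin{cases}\sigma_3\Psi(-z)\sigma_3, & z\in D_+,\\ -\sigma_3\Psi(-z)\sigma_1, & z\in D_-,\end{cases}
\]
again solves $\text{RHP}^{\B{0}}(\gamma,C)$, so $\widetilde\Psi=\Psi$ by Lemma~\ref{lem:rhuniqueness}. The resulting identities $\Psi_\infty(-z)=\sigma_3\Psi_\infty(z)\sigma_3$ and $\Psi_0(-z)=-\sigma_3\Psi_0(z)\sigma_1$ yield \eqref{item:p4} directly and, as a by-product, force $D$ to be scalar so that $\Psi_0$ really is a Carmichael $\Phi_0$ for some $d$. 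Your Liouville idea can be repaired by running it with $\Psi_0,\Psi_\infty$ in place of the Carmichael solutions, but once you do that it collapses to the paper's uniqueness argument.
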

\begin{proof}
Consider part (i). Take $d\in\mathbb{C}^*$ and define $\Phi_0(z)$ and $\Phi_\infty(z)$ as in Lemma \ref{lem:solzero} and \ref{lem:solinf}, such that $\Phi_\infty(z)=\Phi_0(z)C(z)$; see equation \eqref{eq:assodefiiv}.
Then equation \eqref{eq:psidef} defines a matrix function $\Psi(z)$, which satisfies the jump condition of $\text{RHP}^{\B{0}}(\gamma,C)$. Furthermore, by Lemmas \ref{lem:solzero} and \ref{lem:solinf} it also satisfies the analyticity and asymptotic conditions of $\text{RHP}^{\B{0}}(\gamma,C)$ and thus solves it.

Now consider part (ii). Clearly $\Psi_\infty(z)$ and $\Psi_0(z)^{-1}$, defined by equations \eqref{eq:phiinfzerodef}, are analytic on $\mathbb{C}^*$ and $\mathbb{C}$ respectively, and related by \eqref{eq:psiinfpsi0}. Hence, defining $A(z)$ by \eqref{eq:defainf}, equation \eqref{eq:defa0} follows from property \eqref{item:c2} of $C(z)$.

We proceed to show that $A(z)\in \mathcal{F}(\lambda,a)$.
 From equations \eqref{eq:defainf} and \eqref{eq:defa0} we obtain respectively that $A(z)$ is analytic on $q^{-1}D_+$ and $D_-$. Furthermore, note that 
\begin{equation*}
A(z)=qa_0^2a_2iz^3\Psi_0(qz)C(qz)\lambda^{\sigma_3}\Psi_\infty(z)^{-1},
\end{equation*}
from which it follows that $A(z)$ is also analytic on $\mathbb{C}\setminus(q^{-1}D_+\cup D_-)$. Hence $A(z)$ is analytic on $\mathbb{C}$. It follows from the asymptotic behaviour of $\Psi(z)$ that
\begin{equation*}
A(z)=qa_0^2a_2iz^3\left(\lambda^{\sigma_3}+\mathcal{O}\left(z^{-1}\right)\right)\quad \ \textrm{as}\ z\rightarrow \infty,
\end{equation*}
and thus $A(z)$ is a matrix polynomial of degree three, satisfying property \eqref{item:p2}.

Due to equation \eqref{eq:defa0}, we know that $A(0)$ has eigenvalues $\{\pm i\}$, so $A(z)$ satisfies property \eqref{item:p1} of Definition \ref{pro:aa}.
Furthermore, using the explicit expression for the determinant $|\Psi(z)|$ in Lemma \ref{lem:rhuniqueness}, property \eqref{item:p3} of Definition \ref{pro:aa} easily follows.
We proceed with deriving the remaining property \eqref{item:p4} of Definition \ref{pro:aa}.
Recall that $C(z)$ has the symmetry $C(-z)=-\sigma_1C(z)\sigma_3$ and $\gamma$ is reflection-invariant,
hence
\begin{equation*}
\widetilde{\Psi}(z)=\begin{cases}
\sigma_3\Psi(-z)\sigma_3 &\text{if $z\in D_+$,}\\
-\sigma_3\Psi(-z)\sigma_1,&\text{if $z\in D_-$,}
\end{cases}
\end{equation*}
also defines a solution of $\text{RHP}^{\B{0}}(\gamma,C)$. By the uniqueness result in Lemma \ref{lem:rhuniqueness}, we must have $\widetilde{\Psi}(z)=\Psi(z)$ and thus
\begin{equation}\label{eq:psiinfsym}
\Psi_\infty(-z)=\sigma_3\Psi_\infty(z)\sigma_3,\quad \Psi_0(-z)=-\sigma_3\Psi_0(z)\sigma_1,
\end{equation} 
giving $A(-z)=-\sigma_3A(z)\sigma_3$,
which is precisely \eqref{item:p3} of Definition \ref{pro:aa}. 
We conclude that $A(z)\in\mathcal{F}(\lambda,a)$. Furthermore, note that $\Psi_0(z)=\Phi_0(z)$, for a unique choice of $d\in\mathbb{C}^*$ in Lemma \ref{lem:solzero}, and $\Psi_\infty(z)=\Phi_\infty(z)$. Therefore
 $A(z)$ defines a solution of the inverse monodromy problem \ref{prob:inverse}.
\end{proof}
\section{Isomonodromic deformation}\label{sec:isomonodromic}
In this section, we consider isomonodromic deformation of Fuchsian systems of the form \eqref{eq:qdifferenceclass} as $\lambda\rightarrow q\lambda$ and prove Theorems \ref{thm:mainresult1} and \ref{thm:rhmap}. Clearly the space $\mathfrak{C}(\lambda,a)$, defined in Definition \ref{def:connectionspace}, is not invariant under $\lambda\rightarrow q\lambda$. However, we have the following bijective mapping
\begin{equation*}
  \tau:
    \mathfrak{C}(\lambda,a)\rightarrow \mathfrak{C}(q\lambda,a),\ \textrm{where}\  C(z)\overset{\tau}{\mapsto} \sigma_3C(z)z^{-\sigma_3}.
 \end{equation*}
Note that this mapping commutes with scalar multiplication and right-multiplication by invertible diagonal matrices. It therefore induces bijective mappings on $M_c(\lambda,a)$ and $\mathcal{M}_c(\lambda,a)$, see Definitions \ref{def:monodromymap} and \ref{def:monodromysurface} respectively, which we continue to refer to as $\tau$, i.e.,
 \begin{equation}\label{eq:taudef}
   \tau:    M_c(\lambda,a)\rightarrow M_c(q\lambda,a),\quad  \tau:\mathcal{M}_c(\lambda,a)\rightarrow \mathcal{M}_c(q\lambda,a).
  \end{equation}

We call a deformation $\lambda\rightarrow q\lambda$ of the Fuchsian system \eqref{eq:qdifferenceclass} isomonodromic if its action trivially deforms the monodromy or connection matrix by the action of $\tau$. 
Correspondingly, we define the following inverse problem.
\begin{prob}[Generalised Inverse Monodromy Problem]\label{prob:inversem}
	Let $\lambda_0\in\mathbb{C}^*$ and $a\in\mathbb{C}^3$, satisfying $a_0a_1a_2=q$, be such that the non-resonant conditions \eqref{eq:conditionnonresonant} are satisfied. Given a monodromy datum $\textsf{C}=[C(z)]\in M_c(\lambda_0,a)$ and an $m\in\mathbb{Z}$, construct a Fuchsian system 
	\begin{equation*}
	Y(qz)=A^{\B{m}}(z)Y(z),\quad A^{\B{m}}(z)\in \mathcal{F}(q^m\lambda_0,a),
	\end{equation*}
	 whose monodromy equals the $m$-fold composition $\tau^m(M)$.
\end{prob}
In this section we prove that the aforementioned isomonodromic deformation as $\lambda\rightarrow q\lambda$ is equivalent to the $q\Pfour^{\text{mod}}$ time-evolution. In particular we show that solving the inverse problem \ref{prob:inversem} for a given monodromy datum is equivalent to computing the values of a particular $q\Pfour^{\text{mod}}(\lambda_0,a)$ transcendent.
 
Associated with the inverse problem \ref{prob:inversem}, is the Riemann-Hilbert Problem defined in Definition \ref{def:RHP1and2}. In Section \ref{subsec:solve} we show that the inverse problem \ref{prob:inversem} is equivalent to this RHP. Furthermore we show that the RHP is always solvable, for at least one value of $m\in\mathbb{Z}$. In Section \ref{subsection:timedef} we compute the isomonodromic deformation of the class of Fuchsian systems \eqref{eq:qdifferenceclass} explicitly using the associated RHP, yielding Theorem \ref{thm:mainresult1} in Section \ref{s:mr}. Finally, in Section \ref{section:rhcorrespondence}, we prove Theorem \ref{thm:rhmap}.

\subsection{Solvability of the Generalised Inverse Monodromy Problem}\label{subsec:solve}
In this Section, we prove the solvability of inverse problem \ref{prob:inversem} for at least one value of $m\in\mathbb{Z}$. Firstly, in the following proposition, we make the equivalence of the generalised inverse monodromy problem \ref{prob:inversem} and the RHP defined in Definition \ref{def:RHP1and2} explicit.
\begin{proposition} \label{cor:imrhequivalencem}
	Let $\lambda_0\in\mathbb{C}^*$ and $a\in\mathbb{C}^3$, satisfying $a_0a_1a_2=q$, such that the non-resonant conditions \eqref{eq:conditionnonresonant} are satisfied. Given a monodromy datum $M=[C(z)]\in M_c(\lambda,a)$ and an $m\in\mathbb{Z}$,
	then, for any choice of admissible curve, the generalised inverse monodromy problem \ref{prob:inversem} is equivalent to Riemann-Hilbert problem $\textnormal{RHP}^{\B{m}}(\gamma,C)$, in the following sense.
	\begin{enumerate}
		\item[\rm(i\rm)] If $A^{\B{m}}(z)\in\mathcal{F}(q^m\lambda_0,a)$ is a solution of the inverse monodromy problem \ref{prob:inversem}, then there exists a unique value of $d=d_m\in\mathbb{C}^*$ in Lemma \ref{lem:solzero} for which the corresponding matrix function $\Phi_0^{\B{m}}(z)$, together with the matrix function $\Phi_\infty^{\B{m}}(z)$ constructed in Lemma \ref{lem:solinf}, define a solution
		\begin{equation}\label{eq:psidefm}
		Y^{\B{m}}(z):=\begin{cases}
		\Phi_\infty^{\B{m}}(z)z^{m\sigma_3} &\text{if $z\in D_+$,}\\
		\Phi_0^{\B{m}}(z) &\text{if $z\in D_-$,}
		\end{cases}
		\end{equation}
	    of $\textnormal{RHP}^{\B{m}}(\gamma,C)$.
		\item[\rm(ii\rm)] Conversely, suppose $Y^{\B{m}}(z)$ is a solution of $\textnormal{RHP}^{\B{m}}(\gamma,C)$, writing
		\begin{align*}
		&\Psi_\infty^{\B{m}}(z):=\begin{cases}
		Y^{\B{m}}(z)z^{-m\sigma_3} &\text{if $z\in D_+$,}\\
		Y^{\B{m}}(z) C(z)z^{-m\sigma_3} &\text{if $z\in D_-$,}
              \end{cases}\\
                 & \Psi_0^{\B{m}}(z):=\begin{cases}
		Y^{\B{m}}(z) \sigma_3^{-m}C(z)^{-1} &\text{if $z\in D_+$,}\\
		Y^{\B{m}}(z)\sigma_3^{-m} &\text{if $z\in D_-$,}
		\end{cases}
		\end{align*}
		then $\Psi_\infty^{\B{m}}(z)$ and $\Psi_0^{\B{m}}(z)^{-1}$ are related by
		\begin{equation}\label{eq:psiinfpsi0m}
		\Psi_\infty^{\B{m}}(z)=\Psi_0^{\B{m}}(z)\sigma_3^{m}C(z)z^{-m\sigma_3},
		\end{equation}
		and, denoting
		\begin{equation}\label{eq:notation}
		\kappa=qa_0^2a_2i,\quad \lambda_m=q^m\lambda_0,
		\end{equation}
		the matrix polynomial
		\begin{align}
		A^{\B{m}}(z):&=z^3\Psi_\infty^{\B{m}}(qz)\begin{pmatrix}
		\kappa \lambda_m & 0\\
		0 & \kappa \lambda_m^{-1}
		\end{pmatrix}\Psi_\infty^{\B{m}}(z)^{-1}\label{eq:defainfm}\\
		&=\Psi_0^{\B{m}}(qz)\begin{pmatrix}
		i & 0\\
		0 & -i
		\end{pmatrix}\Psi_0^{\B{m}}(z)^{-1},\label{eq:defa0m}
		\end{align}
		defines a solution $A^{\B{m}}(z)\in\mathcal{F}(q^m\lambda_0,a)$ of the inverse monodromy problem \ref{prob:inversem}.
	\end{enumerate}
\end{proposition}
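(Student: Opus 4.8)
The plan is to prove Proposition \ref{cor:imrhequivalencem} by reducing it to the already-established Proposition \ref{pro:imrhequivalence}, exploiting the deformation map $\tau$ and the fact that the modification factor $z^{m\sigma_3}$ in the normalisation \eqref{eq:normalisationm} is designed precisely to absorb $m$ applications of $\tau$. The key observation is that $\textnormal{RHP}^{\B{m}}(\gamma,C)$ differs from $\textnormal{RHP}^{\B{0}}(\gamma,\tau^m C)$ only by the asymptotic normalisation, so one should first transfer the problem to the $m=0$ case where Proposition \ref{pro:imrhequivalence} applies directly.

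For part (i), I would start from a solution $A^{\B{m}}(z)\in\mathcal{F}(q^m\lambda_0,a)$ of Problem \ref{prob:inversem}. By definition its connection matrix (modulo scalar) is $\tau^m(\textsf{C})\in M_c(q^m\lambda_0,a)$. I would then invoke Lemmas \ref{lem:solzero} and \ref{lem:solinf} to construct $\Phi_0^{\B{m}}(z)$ and $\Phi_\infty^{\B{m}}(z)$ for this system, fixing the scalar freedom $d=d_m$ so that $\Phi_\infty^{\B{m}}(z)=\Phi_0^{\B{m}}(z)\,\tau^m(C)(z)$ exactly (not merely up to scalar), in direct analogy with equation \eqref{eq:assodefiiv}. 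The next step is to unwind the definition of $\tau$: since $\tau^m(C)(z)$ is obtained from $C(z)$ by $m$-fold application of $C\mapsto \sigma_3 C(z)z^{-\sigma_3}$, I would verify that the piecewise definition \eqref{eq:psidefm} of $Y^{\B{m}}(z)$ satisfies the jump relation \eqref{eq:jumpm} with the \emph{original} $C(z)$ (the factors $\sigma_3^m$ and $z^{m\sigma_3}$ cancelling appropriately across $\gamma$), and that the insertion of $z^{m\sigma_3}$ in the $D_+$ branch reproduces exactly the modified normalisation \eqref{eq:normalisationm}. The analyticity in $\mathbb{C}\setminus\gamma$ then follows from properties (z.1) and (i.1) of Lemmas \ref{lem:solzero} and \ref{lem:solinf}.

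For part (ii), the argument runs in reverse. Given a solution $Y^{\B{m}}(z)$ of $\textnormal{RHP}^{\B{m}}(\gamma,C)$, I would define $\Psi_\infty^{\B{m}}$ and $\Psi_0^{\B{m}}$ as stated and check the relation \eqref{eq:psiinfpsi0m} directly from the jump condition \eqref{eq:jumpm}, then establish that $A^{\B{m}}(z)$ defined by \eqref{eq:defainfm} is meromorphic across $\gamma$ and agrees with \eqref{eq:defa0m} by virtue of property \eqref{item:c2} of $C(z)$ twisted by the $z^{\pm m\sigma_3}$ factors. The proof that $A^{\B{m}}(z)\in\mathcal{F}(q^m\lambda_0,a)$ proceeds exactly as in Proposition \ref{pro:imrhequivalence}: the gluing shows $A^{\B{m}}$ extends analytically to all of $\mathbb{C}$, the asymptotics of $\Psi_\infty^{\B{m}}$ force it to be a degree-three polynomial satisfying \eqref{item:p2}, the determinant formula from Lemma \ref{lem:rhuniqueness} gives \eqref{item:p3}, and the reflection symmetry \eqref{item:c4} of $C$ together with uniqueness (Lemma \ref{lem:rhuniqueness}) yields \eqref{item:p4}. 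The remaining point is to confirm the monodromy of $A^{\B{m}}$ is indeed $\tau^m(\textsf{C})$, which follows by reading off the connection matrix of the constructed $\Phi_0^{\B{m}},\Phi_\infty^{\B{m}}$ and comparing with the definition of $\tau$.

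The main obstacle I anticipate is bookkeeping the twisting factors $z^{m\sigma_3}$ and $\sigma_3^m$ consistently across the two regions $D_\pm$ so that the jump relation for $Y^{\B{m}}$ comes out with the \emph{untwisted} matrix $C(z)$ while the normalisation acquires the factor $z^{m\sigma_3}$; these factors do not commute naively with $C(z)$ and one must use property \eqref{item:c2} at the correct shifted argument. A secondary subtlety is verifying that the scalar-valued choice $d=d_m$ rendering $\Phi_\infty^{\B{m}}=\Phi_0^{\B{m}}\,\tau^m(C)$ an exact (not projective) identity is unique; this is where the non-resonance condition \eqref{eq:conditionnonresonant} is implicitly used, via the injectivity established in Proposition \ref{pro:injective}. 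Once these twisting computations are organised, the bulk of the structural argument is a verbatim transcription of the proof of Proposition \ref{pro:imrhequivalence}, which keeps the new content concentrated in the single deformation step $\tau^m$.
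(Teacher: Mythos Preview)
Your proposal is correct and follows essentially the same approach as the paper: reduce $\textnormal{RHP}^{\B{m}}(\gamma,C)$ to $\textnormal{RHP}^{\B{0}}(\gamma,\tau^m(C))$ and invoke Proposition \ref{pro:imrhequivalence}. The paper packages the bookkeeping you anticipate in one stroke, introducing the piecewise matrix $S(z)$ equal to $z^{m\sigma_3}$ on $D_+$ and $\sigma_3^m$ on $D_-$ and observing that $Y^{\B{m}}\mapsto Y^{\B{m}}S^{-1}$ is exactly the required bijection between solutions of the two RHPs; with that substitution statements (i) and (ii) become verbatim those of Proposition \ref{pro:imrhequivalence}, so no properties need to be re-verified.
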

\begin{proof}
Fix any $m\in\mathbb{Z}$, then $Y^{\B{m}}(z)$ is a solution of $\textnormal{RHP}^{\B{m}}(\gamma,C)$ if and only if $\Psi(z)=Y^{\B{m}}(z) S(z)^{-1}$ is a solution of $\textnormal{RHP}^{\B{0}}(\gamma,\tau^m(C))$, where
\begin{equation*}
S(z)=\begin{cases}
z^{m\sigma_3} & \text{if $z\in D_+$,}\\
\sigma_3^m & \text{if $z\in D_-$.}
\end{cases}
\end{equation*}
Therefore, statements (i) and (ii) are equivalent to the respective numbered statements in Proposition
\ref{pro:imrhequivalence}, after the substitutions $C(z)\mapsto \tau^m(C(z))$, $\lambda\mapsto q^m\lambda_0$ and $\Psi(z)\mapsto Y^{\B{m}}(z) S(z)^{-1}$ are made.
The proposition is thus a direct corollary of Proposition \ref{pro:imrhequivalence}.
\end{proof}

 In the remainder of this section, we give a classical argument, going back to Birkhoff \cite{birkhoffgeneralized1913}, showing that $\textnormal{RHP}^{\B{m}}(\gamma,C)$ has a solution $Y^{\B{m}}(z)$, for at least one value of $m\in\mathbb{Z}$. Firstly, we state a special case of the ``Preliminary Theorem" in Birkhoff \cite{birkhoffgeneralized1913}.

\begin{lemma}\label{lem:riemannhilbertbirkhoff}
	Let $\gamma$ be an oriented analytic Jordan curve in $\mathbb{P}^1$ and let $D_-$ and $D_+$ denote the interior and exterior of  $\gamma$ in $\mathbb{P}^1$ respectively. Let $C(z)$ be a $2\times 2$ matrix function, analytic on $\gamma$, such that $|C(z)|\neq 0$ on $\gamma$. Then, for any $\alpha\in\mathbb{P}^1\setminus \gamma$, there exists a $2\times 2$ matrix function $Y(z)$, satisfying
	\begin{itemize}
	\item $Y(z)$ is analytic on $\mathbb{P}^1\setminus (\gamma\cup\{\alpha\})$ and meromorphic at $z=\alpha$.
		\item $Y(z)$ has continuous boundary values $Y_+(z)$ and $Y_-(z)$ as $z$ approaches $\gamma$ from $D_+$ and $D_-$ respectively, which are related by the jump condition
		\begin{equation}\label{eq:jumplem}
		Y_+(z)=Y_-(z)C(z).
		\end{equation}
		\item The determinant $|Y(z)|$ does not vanish on $\mathbb{P}^1\setminus (\gamma\cup\{\alpha\})$, and neither $|Y_-(z)|$ nor $|Y_+(z)|$ vanishes on $\gamma$.
	\end{itemize}
\end{lemma}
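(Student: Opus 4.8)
The plan is to follow Birkhoff's classical construction: represent the sought matrix through a Cauchy-type integral, recast the jump relation as a singular integral equation on $\gamma$, and combine Fredholm theory with the scalar factorization of $\det C$ to produce a solution whose only determinantal singularity sits at the marked point $\alpha$.

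First, since $\gamma$ is an analytic Jordan curve, the Cauchy transform $(\mathcal{C}\phi)(z)=\frac{1}{2\pi i}\int_\gamma \frac{\phi(\zeta)}{\zeta-z}\,d\zeta$ is well behaved on H\"older continuous densities, with Plemelj boundary operators $\mathcal{C}_\pm$ satisfying $\mathcal{C}_+-\mathcal{C}_-=\mathrm{id}$ and solutions extending analytically across $\gamma$. I would seek $Y$ as a Cauchy integral of an unknown density $\phi$ on $\gamma$, multiplied by a fixed rational prefactor that carries a pole or zero of prescribed order at $\alpha$; substituting into $Y_+=Y_-C$ then yields an equivalent singular integral equation of the schematic form $\mathcal{C}_+\phi-(\mathcal{C}_-\phi)C=(\text{known})$ for $\phi$. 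The associated operator is Fredholm precisely because $\gamma$ is smooth and $C$ is invertible on $\gamma$, and its index is governed by the winding number of $\det C$.

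Second, I would dispose of the scalar part separately. The determinant $\Delta=|Y|$ must solve the scalar Riemann--Hilbert problem with jump $|C|$, which is nonvanishing on $\gamma$; its winding number $\kappa=\frac{1}{2\pi i}\int_\gamma d\log|C|$ is the unique obstruction to a factorization $\Delta_-^{-1}\Delta_+=|C|$ with both factors holomorphic and nonvanishing on the two sides. Removing the winding by the explicit prefactor $(z-\alpha)^{\kappa}$ (with $\alpha$ placed on the appropriate side), taking a logarithm, splitting it by $\mathcal{C}_\pm$, and exponentiating produces a $\Delta$ that is holomorphic and nonvanishing on $\mathbb{P}^1\setminus(\gamma\cup\{\alpha\})$, meromorphic at $\alpha$, and with nonvanishing boundary values on $\gamma$. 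This pins down the admissible divisor of $|Y|$ and shows why a single marked point $\alpha$ suffices: it absorbs the entire index.

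The hard part is the matrix (non-abelian) step: producing an actual invertible matrix solution whose determinant coincides, up to a nonzero constant, with the scalar $\Delta$ above, so that $|Y|$ vanishes only at $\alpha$. This is the content of the Birkhoff--Grothendieck factorization $C=Y_-^{-1}\,\mathrm{diag}\big((z-\alpha)^{\kappa_1},(z-\alpha)^{\kappa_2}\big)\,Y_+$ with partial indices $\kappa_1+\kappa_2=\kappa$, and the difficulty is twofold: existence of the factorization, and control of the partial indices so that the whole index can be concentrated at the single point $\alpha$ rather than scattered as spurious zeros of $\det Y$. I would obtain existence from the Fredholm solvability established above, and then run Birkhoff's inductive reduction: given a candidate $Y$ whose determinant has an extraneous zero off $\alpha$, subtract a lower-order meromorphic column combination (a homogeneous solution with a pole at $\alpha$) to cancel that zero, lowering the total order of spurious zeros and terminating after finitely many steps. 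The freedom to allow a pole at $\alpha$ is exactly what makes this reduction possible, and once the determinant is reduced to the minimal divisor $\kappa[\alpha]$ the remaining assertions---nonvanishing of $|Y|$ off $\gamma\cup\{\alpha\}$ and of $|Y_\pm|$ on $\gamma$---follow from the scalar computation.
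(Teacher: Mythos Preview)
The paper does not give a proof of this lemma at all: it states that the result is a special case of Birkhoff's ``Preliminary Theorem'' in \cite{birkhoffgeneralized1913} and stops there. Your proposal therefore goes well beyond the paper by actually sketching the underlying argument. The outline you describe---Cauchy integral ansatz on $\gamma$, reduction of the jump condition to a Fredholm singular integral equation, scalar factorisation of $\det C$ with the winding number absorbed as a divisor at $\alpha$, followed by an inductive reduction to clear spurious zeros of $|Y|$---is indeed the classical route and is faithful in spirit to Birkhoff's original construction. So you are not offering an alternative approach; you are reconstructing the content of the citation.

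Two places in your sketch would need tightening to stand as a self-contained proof. First, Fredholmness alone does not give solvability: you need either a vanishing-kernel argument or the observation that homogeneous solutions can themselves be used in the reduction, and you have gestured at the latter without making it precise. Second, your description of the reduction step (``subtract a lower-order meromorphic column combination to cancel an extraneous zero'') is the right intuition but is not quite how the mechanism runs; the standard device is left multiplication by matrix polynomials of unit determinant that rebalance the orders of growth of the columns at $\alpha$ (compare the explicit version the paper carries out in the proof of the subsequent Lemma~\ref{lem:existence}). None of this is wrong, but as written it is an outline rather than a proof.
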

\begin{proof}
	This is a special case of the ``Preliminary Theorem" in Birkhoff \cite{birkhoffgeneralized1913}.
\end{proof}

Next we state and prove a special case of the main conclusion of paragraph 21 in Birkhoff \cite{birkhoffgeneralized1913}.
\begin{lemma}\label{lem:existence}
$\textnormal{RHP}^{\B{m}}(\gamma,C)$ has a solution $Y^{\B{m}}(z)$, for at least one value of $m\in\mathbb{Z}$.
\end{lemma}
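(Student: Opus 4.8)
The plan is to bootstrap from the Preliminary Theorem, Lemma~\ref{lem:riemannhilbertbirkhoff}, which supplies a matrix carrying the correct jump but with uncontrolled behaviour at a single prescribed point. Concretely, I would apply Lemma~\ref{lem:riemannhilbertbirkhoff} with the exceptional point $\alpha=\infty$, obtaining a matrix $Y_*(z)$ that is analytic and invertible on $\overline{D_-}\cup\gamma$ and on $D_+\setminus\{\infty\}$, is meromorphic at $z=\infty$, satisfies $Y_{*,+}(z)=Y_{*,-}(z)C(z)$ on $\gamma$, and has nowhere-vanishing determinant off $\gamma\cup\{\infty\}$. Such a $Y_*$ already meets the analyticity and jump requirements of $\textnormal{RHP}^{\B{m}}(\gamma,C)$; it fails only in its normalisation at infinity. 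The local behaviour of $Y_*$ there is governed by a pair of integer partial indices $r_1\ge r_2$, namely the diagonal exponents of its Birkhoff canonical form at infinity, and the task is to normalise this to the prescribed $(I+\mathcal{O}(z^{-1}))\,z^{m\sigma_3}$.

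Next I would pin down $r_1+r_2$, which equals the winding number of $\det C(z)=c\,\theta_q(z/x_1,\ldots,z/x_6)$ (with $c$ from property~\eqref{item:c3}) along $\gamma$. This winding vanishes: the proof of Lemma~\ref{lem:rhuniqueness} exhibits a scalar function that carries exactly this jump across $\gamma$, is analytic and nowhere zero on both $D_-$ and $D_+$, and tends to $1$ at infinity, which is possible only when the winding is zero. Hence $r_1+r_2=0$, and writing $m:=r_1\ge0$ the target normalisation is exactly $z^{m\sigma_3}$; the subleading lower-triangular freedom permitted by $(I+\mathcal{O}(z^{-1}))z^{m\sigma_3}$ is precisely what a canonical form with unequal exponents produces.

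It then remains to reduce $Y_*$ to the canonical form $(I+\mathcal{O}(z^{-1}))\,z^{m\sigma_3}$ by left multiplication by matrices preserving the Riemann-Hilbert structure. The admissible multipliers are rational $2\times2$ matrices $L(z)$ that are holomorphic and invertible on $\overline{D_-}\cup\gamma$, with their entry poles and determinant zeros confined to $D_+$; left multiplication by such an $L$ keeps $\widetilde Y:=LY_*$ analytic on $D_-$ and preserves the jump, since $L$ is single-valued and analytic across $\gamma$. Following Birkhoff, I would diagonalise the leading behaviour at infinity and lower the order of the correction one step at a time, at each stage using an elementary polynomial multiplier whose single determinant zero is placed at a point of $D_+$, until the exponents appear as $(m,-m)$ and the leading coefficient is reduced to the identity. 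The resulting matrix is then a solution $Y^{\B{m}}(z)$ of $\textnormal{RHP}^{\B{m}}(\gamma,C)$ for this particular $m=r_1$, which establishes the lemma.

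The main obstacle is precisely this reduction, which is the substantive content of paragraph~21 of Birkhoff~\cite{birkhoffgeneralized1913}: one must guarantee that the diagonal canonical form at infinity is reachable using only multipliers compatible with $\gamma$, rather than the naive factors $z^{\pm\sigma_3}$, whose singularities at the origin lie in $D_-$ and would destroy analyticity there. Showing that the partial indices $(r_1,r_2)$ are well defined and that the reduction terminates is the heart of the argument; the determinant computation above then makes the bookkeeping, and the identification of the successful value $m=r_1$, explicit.
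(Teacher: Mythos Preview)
Your overall plan coincides with the paper's: invoke Birkhoff's Preliminary Theorem with $\alpha=\infty$ to obtain a $Y_*$ with the correct jump but uncontrolled growth at infinity, use the explicit scalar $\Delta(z)$ from Lemma~\ref{lem:rhuniqueness} (equivalently, the vanishing winding number of $\det C$) to see that the target indices must sum to zero, and then normalise by left polynomial multiplication. This is exactly the route the paper takes, carrying out Birkhoff's paragraph~21 reduction explicitly.

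There is, however, a concrete error in your description of the reduction. You propose ``an elementary polynomial multiplier whose single determinant zero is placed at a point of $D_+$''. Any such left multiplier $L$ would force $\det(LY_*)$ to vanish at that point, whereas Lemma~\ref{lem:rhuniqueness} shows that any solution of $\textnormal{RHP}^{\B{m}}(\gamma,C)$ has $|Y^{\B{m}}(z)|$ nowhere zero on $\mathbb{C}\setminus\gamma$; hence $LY_*$ can never satisfy all three RHP conditions simultaneously, and such multipliers are inadmissible. The correct class, which the paper uses, consists of matrix polynomials with \emph{unit} determinant. Writing $Y_*(z)=(U+\mathcal{O}(z^{-1}))\,\mathrm{diag}(z^{m_1},z^{m_2})$ with both columns of $U$ nonzero, one has $K:=m_1+m_2\ge 0$, with equality iff $U\in GL_2(\mathbb{C})$; when $K>0$, a multiplier of the form $G(z)=M^{-1}\begin{pmatrix}1&0\\ r z^l & 1\end{pmatrix}M$ with $M\in GL_2(\mathbb{C})$ and suitable $l,r$ strictly lowers $K$. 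Iterating to $K=0$ and then absorbing the invertible leading coefficient gives the desired $(I+\mathcal{O}(z^{-1}))z^{m\sigma_3}$. Note that what is being reduced is $K$, the sum of the \emph{naive} column orders of the current matrix, not the already-fixed sum $r_1+r_2$ of the partial indices; your partial-index framing correctly describes the endpoint but does not itself supply the mechanism, and the mechanism you sketched is incompatible with the determinant constraint.
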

\begin{proof} For the convenience of the reader, we paraphrase Birkhoff's proof of the conclusion in paragraph 21 of \cite{birkhoffgeneralized1913}  for our special case.

	Firstly, note that the matrix $C(z)$ is analytic and $|C(z)|$ does not vanish on $\gamma$. We may thus apply Lemma \ref{lem:riemannhilbertbirkhoff} with $\alpha=\infty$, which gives a matrix function $Y(z)$ that satisfies the analyticity and jump condition in $\textnormal{RHP}^{\B{m}}(\gamma,C)$, $m\in\mathbb{Z}$. It remains to normalise $Y(z)$ appropriately.
	
	Firstly, we compare the determinant $|Y(z)|$ with $\Delta(z)$, defined in equation \eqref{eq:psidet}.
	Note that $d(z)=|Y(z)|/\Delta(z)$ defines a non-vanishing analytic function on $\mathbb{C}$ which is meromorphic at $z=\infty$, so $d(z)\equiv d_0\in\mathbb{C}^*$ is constant. In particular
	\begin{equation}\label{eq:det}
	|Y(z)|=d_0\left(qx_1/z,\ldots,qx_{6}/z;q\right)_\infty\quad (z\in D_+).
	\end{equation}
	
	For any matrix function $H(z)$, replacing $Y(z)\mapsto \widetilde{Y}(z)=H(z)Y(z)$ in the above, all analytic properties in Lemma \ref{lem:riemannhilbertbirkhoff} are conserved if and only if
	$H(z)$ is a matrix polynomial with $|H(z)|\equiv h\in\mathbb{C}^*$ constant.  It therefore suffices to find an appropriate such $H(z)$, so that
	\begin{equation}\label{eq:psiplusformdesire}
	\widetilde{Y}(z)=\left(U+\mathcal{O}(z^{-1})\right)\begin{pmatrix}
	z^m & 0\\
	0 & z^{-m}
	\end{pmatrix}\quad (z\rightarrow \infty),
	\end{equation}
	for an $m\in\mathbb{Z}$ and $U\in GL_2(\mathbb{C})$.	Indeed $U^{-1}\widetilde{Y}(z)$ then defines a solution to Riemann-Hilbert Problem $\textnormal{RHP}^{\B{m}}(\gamma,C)$.
	
	To this end, determine the unique $m_1,m_2\in\mathbb{Z}$ and $U\in\mathbb{C}^{2\times 2}$, with both columns nonzero, such that
	\begin{equation}\label{eq:psiplusform}
    Y(z)=\left(U+\mathcal{O}(z^{-1})\right)\begin{pmatrix}
	z^{m_1} & 0\\
	0 & z^{m_2}
	\end{pmatrix}\quad (z\rightarrow \infty).
	\end{equation}
	By equation \eqref{eq:det}, we must have $K(Y):=m_1+m_2\geq 0$. Furthermore, note that, again by \eqref{eq:det}, $K=0$ if and only if $U$ is invertible, in which case we are done as $Y(z)$ has the desired form \eqref{eq:psiplusformdesire}. We proceed to show that, if $K(Y)>0$, then there exists a matrix polynomial $G(z)$ with $|G(z)|\equiv 1$, such that $Y'(z):= G(z)Y(z)$ will have a strictly smaller $K$ value then $Y(z)$, i.e. $K(Y')<K(Y)$.
	
	So assume $K(Y)>0$. Then $U$ is not invertible and has two nonzero columns, hence there exists an $M\in GL_2(\mathbb{C})$ such that
	\begin{equation*}
	MU=\begin{pmatrix}
	0 & 0\\
	u_{21}' & u_{22}'
	\end{pmatrix},
	\end{equation*}
	for some $u_{21}',u_{22}'\in\mathbb{C}^*$. Let $l>0$ be such that
	\begin{equation*}
	MY(z)=\begin{pmatrix}
	z^{-l} & 0\\
	0 & 1
	\end{pmatrix}\left(\begin{pmatrix}
	u_{11}' & u_{12}'\\
	u_{21}' & u_{22}'
	\end{pmatrix}+\mathcal{O}\left(z^{-1}\right)\right)\begin{pmatrix}
	z^{m_1} & 0\\
	0 & z^{m_2}
	\end{pmatrix}\quad (z\rightarrow \infty),
	\end{equation*}
	for some $u_{11}',u_{12}'\in\mathbb{C}$ not both equal to zero. Next we multiply from the left by a matrix polynomial
	\begin{equation*}
	R(z)=\begin{pmatrix}
	1 & 0\\
	rz^l & 1
	\end{pmatrix},
	\end{equation*}
	which gives
	\begin{equation*}
	R(z)MY(z)=\begin{pmatrix}
	z^{-l} & 0\\
	0 & 1
	\end{pmatrix}\left(\begin{pmatrix}
	u_{11}' &  u_{12}'\\
	u_{21}'+ru_{11}' & u_{22}'+ru_{12}'
	\end{pmatrix}+\mathcal{O}\left(z^{-1}\right)\right)\begin{pmatrix}
	z^{m_1} & 0\\
	0 & z^{m_2}
	\end{pmatrix},
	\end{equation*}
	as $z\rightarrow \infty$. Determine $i\in\{1,2\}$ such that $u_{1i}'\neq 0$ and choose $r\in\mathbb{C}$ such that $u_{2i}'+ru_{1i}'=0$. Set $G(z)=M^{-1}R(z)M$, then $G(z)$ is a matrix polynomial satisfying $|G(z)|\equiv 1$. Let $Y'(z):= G(z)Y(z)$, then $K(Y')<K(Y)$.
	
	Applying the above argument recursively, we obtain matrix polynomials $G_0(z)$, $G_1(z)$,$\ldots$, $G_k(z)$, each with unit determinant, for some $k\in\mathbb{N}$, such that, setting
	\begin{equation*}
	Y_0(z):=Y(z),\quad Y_{s+1}=G_s(z)Y_s(z)\quad (0\leq s\leq k-1),
	\end{equation*}
	we have $K(Y_0)>K(Y_1)>\ldots>K(Y_k)=0$. Define
	\begin{equation*}
	\widetilde{Y}(z)=H(z)Y(z),\quad H(z)=G_{k-1}(z)G_{k-2}(z)\cdot\ldots\cdot G_{0}(z),
	\end{equation*}
	then equation \eqref{eq:psiplusformdesire} holds true, yielding the lemma.
\end{proof}

\subsection{Isomonodromic Deformation and the Riemann-Hilbert Problem}\label{subsection:timedef}
In this section we prove Theorem \ref{thm:mainresult1}. Firstly, we relate the coefficients of $A^{\B{m}}(z)$ explicitly to the coefficients in the expansion around infinity of $Y^{\B{m}}(z)$.

Let $m\in\mathbb{Z}$ and suppose the solution $Y^{\B{m}}(z)$ of the Riemann-Hilbert problem $\textnormal{RHP}^{\B{m}}(\gamma,C)$ exists.  We know that there exist a unique $g^{\B{m}}\in \mathcal{G}(a)$ and $u_m\in\mathbb{C}^*$, such that $A^{\B{m}}(z)\in \mathcal{F}(q^m\lambda_0,a)$ is given by
\begin{align}\label{eq:Ameq}
A^{\B{m}}(z)=&\mathcal{A}(z;\lambda_m,g^{\B{m}},u_m)\\
=&\begin{pmatrix}
0 & -u_m\\u_m^{-1} & 0
\end{pmatrix}+
z\begin{pmatrix}
ig_1^{\B{m}}\lambda_m & 0\\0 & ig_2^{\B{m}}\lambda_m^{-1}
\end{pmatrix}\nonumber\\
&+z^2\begin{pmatrix}
0 & -u_mg_3^{\B{m}}\\u_m^{-1}g_4^{\B{m}} & 0
\end{pmatrix}+z^3\begin{pmatrix}
\kappa \lambda_m & 0\\
0 & \kappa \lambda_m^{-1}
\end{pmatrix},\nonumber
\end{align}
where we again used the notation \eqref{eq:notation}.
Also there exist unique matrices $U^{\B{m}}$, $V^{\B{m}}$, $W^{\B{m}}$ $\in\mathbb{C}^{2\times 2}$ such that
\begin{equation}\label{eq:widehatpsi}
Y^{\B{m}}(z)=\left(I+z^{-1}U^{\B{m}}+z^{-2}V^{\B{m}}+z^{-3}W^{\B{m}}+\mathcal{O}\left(z^{-4}\right)\right)z^{m\sigma_3},
\end{equation}
as $z\rightarrow \infty$ and thus
\begin{equation*}
\Psi_\infty^{\B{m}}(z)=I+z^{-1}U^{\B{m}}+z^{-2}V^{\B{m}}+z^{-3}W^{\B{m}}+\mathcal{O}\left(z^{-4}\right)\quad (z\rightarrow \infty).
\end{equation*}
Due to Equation \eqref{eq:psiinfsym}, we must have
\begin{equation*}
\sigma_3U^{\B{m}}\sigma_3=-U^{\B{m}},\quad \sigma_3V^{\B{m}}\sigma_3=V^{\B{m}},\quad \sigma_3W^{\B{m}}\sigma_3=-W^{\B{m}},
\end{equation*}
and hence these matrices take the form
\begin{equation*}
U^{\B{m}}=\begin{pmatrix}
0 & u_1^{\B{m}}\\
u_2^{\B{m}} & 0
\end{pmatrix},\quad V^{\B{m}}=\begin{pmatrix}
v_1^{\B{m}} & 0\\
0 & v_2^{\B{m}}
\end{pmatrix},\quad W^{\B{m}}=\begin{pmatrix}
0 & w_1^{\B{m}}\\
w_2^{\B{m}} & 0
\end{pmatrix}.
\end{equation*}

\begin{lemma}\label{lem:relationvariablesgu}
	The variables $\{g_1^{\B{m}},g_2^{\B{m}},g_3^{\B{m}},g_4^{\B{m}},u_m\}$ and $\{u_1^{\B{m}},u_2^{\B{m}},v_1^{\B{m}},v_2^{\B{m}},w_1^{\B{m}},w_2^{\B{m}}\}$ are completely determined in terms of one and another through the relations
	\begin{align*}
	\kappa(\lambda_m-q^{-1}\lambda_m^{-1})u_1^{\B{m}}&=u_mg_3^{\B{m}},\\
	\kappa(q^{-1}\lambda_m-\lambda_m^{-1})u_2^{\B{m}}&=u_m^{-1}g_4^{\B{m}},\\
	(q^{-2}-1)\kappa \lambda_m v_1^{\B{m}}&=-g_3^{\B{m}}u_2^{\B{m}}+i\lambda_mg_1^{\B{m}},\\
	(q^{-2}-1)\kappa \lambda_m^{-1} v_2^{\B{m}}&=g_4^{\B{m}}u_1^{\B{m}}+i\lambda_m^{-1}g_2^{\B{m}},\\
	\kappa(\lambda_m-q^{-3}\lambda_m^{-1})w_1^{\B{m}}&=u_mg_3^{\B{m}}v_2^{\B{m}}-i\lambda_mg_1^{\B{m}}u_1^{\B{m}}-u_m,\\
	\kappa(q^{-3}\lambda_m-\lambda_m^{-1})w_2^{\B{m}}&=u_m^{-1}g_4^{\B{m}}v_1^{\B{m}}+i\lambda_m^{-1}g_2^{\B{m}}u_2^{\B{m}}+u_m^{-1},
	\end{align*}
	which in particular imply
	\begin{align*}
	u_m=&-i\frac{a_0}{a_1}(1+a_1^2(1+a_2^2))\lambda_m u_1^{\B{m}}+q^{-1}\kappa(q^{-1}\lambda_m-\lambda_m^{-1})u_1^{\B{m}}(u_1^{\B{m}}u_2^{\B{m}}-v_2^{\B{m}})\\
	&+\kappa(\lambda_m-q^{-3}\lambda_m^{-1})w_1^{\B{m}}.
	\end{align*}	
\end{lemma}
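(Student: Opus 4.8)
The plan is to obtain all six relations, together with the concluding identity, from the single matrix equation gotten by right-multiplying \eqref{eq:defainfm} by $\Psi_\infty^{\B{m}}(z)$, namely
\begin{equation*}
A^{\B{m}}(z)\,\Psi_\infty^{\B{m}}(z)=z^3\,\Psi_\infty^{\B{m}}(qz)\begin{pmatrix}\kappa\lambda_m & 0\\ 0 & \kappa\lambda_m^{-1}\end{pmatrix}.
\end{equation*}
This form is convenient because it involves $\Psi_\infty^{\B{m}}$ only polynomially, so no Laurent inversion is needed. First I would substitute the explicit degree-three polynomial \eqref{eq:Ameq} for $A^{\B{m}}(z)$ on the left, and the expansion $\Psi_\infty^{\B{m}}(z)=I+z^{-1}U^{\B{m}}+z^{-2}V^{\B{m}}+z^{-3}W^{\B{m}}+\mathcal O(z^{-4})$ following \eqref{eq:widehatpsi} on both sides, noting that the $z\mapsto qz$ shift simply multiplies the $j$-th coefficient by $q^{-j}$. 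Both sides are then matrix Laurent series in $z$ with common leading term $z^3\,\mathrm{diag}(\kappa\lambda_m,\kappa\lambda_m^{-1})$, and equating the coefficients of $z^2$, $z^1$ and $z^0$ produces the six scalar relations.

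The computation stays clean because of the $\sigma_3$-parity forced by \eqref{eq:psiinfsym}: $U^{\B{m}}$ and $W^{\B{m}}$ are off-diagonal while $V^{\B{m}}$ is diagonal. Consequently the $z^2$-equation is purely off-diagonal and yields the two relations for $u_1^{\B{m}},u_2^{\B{m}}$; the $z^1$-equation is purely diagonal and yields the relations for $v_1^{\B{m}},v_2^{\B{m}}$; and the $z^0$-equation is again off-diagonal and yields those for $w_1^{\B{m}},w_2^{\B{m}}$. Since $\mathrm{diag}(\kappa\lambda_m,\kappa\lambda_m^{-1})$ commutes with $V^{\B{m}}$ but not with $U^{\B{m}},W^{\B{m}}$, the off-diagonal equations reduce to the stated identities with factors $\kappa(\lambda_m-q^{-j}\lambda_m^{-1})$, while the diagonal one produces the factor $(q^{-2}-1)\kappa\lambda_m^{\pm1}$. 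The system is triangular: $u_1^{\B{m}},u_2^{\B{m}}$ are fixed by $g_3^{\B{m}},g_4^{\B{m}},u_m$; then $v_1^{\B{m}},v_2^{\B{m}}$ by these together with $g_1^{\B{m}},g_2^{\B{m}}$; then $w_1^{\B{m}},w_2^{\B{m}}$; and each relation inverts because its prefactor is nonzero under the non-resonance condition $\lambda_0^2\notin q^{\mathbb Z}$ (hence $\lambda_m^2\notin q^{\mathbb Z}$). This gives the asserted mutual determination of the two variable sets.

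For the concluding identity I would eliminate the $g$-variables. Substituting $g_3^{\B{m}}$ (from the $u_1^{\B{m}}$-relation) and $g_1^{\B{m}}$ (from the $v_1^{\B{m}}$-relation) into the $w_1^{\B{m}}$-relation re-expresses the right-hand side in terms of the $\Psi_\infty^{\B{m}}$-coefficients and of residual $g$-combinations such as $g_1^{\B{m}}g_2^{\B{m}}$ and $g_3^{\B{m}}+g_4^{\B{m}}$. These are then removed using the defining equations \eqref{eq:algebraic} of $\mathcal G(a)$ together with $a_0a_1a_2=q$. It is precisely this invocation of the surface equations that produces the explicit coefficient $-i\frac{a_0}{a_1}(1+a_1^2(1+a_2^2))\lambda_m$, whose $a$-dependence could not arise from the six linear relations alone.

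I expect the main obstacle to be bookkeeping rather than conceptual: keeping the $q$-powers, the $\lambda_m^{\pm1}$ factors, and the signs straight through the $z^0$-coefficient, and above all carrying out the final elimination for $u_m$, where the $\mathcal G(a)$-relations must be used to collapse the leftover $g$-terms into a single $a$-dependent coefficient. Once the displayed matrix identity is in hand there is no analytic difficulty; it is a finite, if lengthy, comparison of coefficients.
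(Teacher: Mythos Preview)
Your approach is essentially the same as the paper's: both rewrite \eqref{eq:defainfm} as the identity $z^{-3}A^{\B{m}}(z)\Psi_\infty^{\B{m}}(z)=\Psi_\infty^{\B{m}}(qz)\,\mathrm{diag}(\kappa\lambda_m,\kappa\lambda_m^{-1})$ (you simply clear the $z^{-3}$), substitute the expansion \eqref{eq:widehatpsi}, and read off the six relations from the coefficients at three successive orders. Your remarks on the $\sigma_3$-parity and on how the algebraic constraints \eqref{eq:algebraic} enter the final $u_m$ identity are more explicit than the paper's terse ``follow directly from these equations,'' but the method is the same.
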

\begin{proof}
	Firstly, note that, by equation \eqref{eq:defainf},
	\begin{equation*}
	\Psi_\infty^{\B{m}}(qz)\begin{pmatrix}
	\kappa \lambda_m & 0\\
	0 & \kappa \lambda_m^{-1}
	\end{pmatrix}=z^{-3}A^{\B{m}}(z)\Psi_\infty^{\B{m}}(z),
	\end{equation*}
	Equating the coefficients of $z^{-1},z^{-2}$ and $z^{-3}$ of left and right-hand side gives respectively
	\begin{align*}
	q^{-1}U^{\B{m}}\begin{pmatrix}
	\kappa \lambda_m & 0\\
	0 & \kappa \lambda_m^{-1}
	\end{pmatrix}-\begin{pmatrix}
	\kappa \lambda_m & 0\\
	0 & \kappa \lambda_m^{-1}
	\end{pmatrix}U^{\B{m}}&=A_2^{\B{m}},\\
	q^{-2}V^{\B{m}}\begin{pmatrix}
	\kappa \lambda_m & 0\\
	0 & \kappa \lambda_m^{-1}
	\end{pmatrix}-\begin{pmatrix}
	\kappa \lambda_m & 0\\
	0 & \kappa \lambda_m^{-1}
	\end{pmatrix}V^{\B{m}}&=A_2^{\B{m}} U^{\B{m}}+A_1^{\B{m}},\\
	q^{-3}W^{\B{m}}\begin{pmatrix}
	\kappa \lambda_m & 0\\
	0 & \kappa \lambda_m^{-1}
	\end{pmatrix}-\begin{pmatrix}
	\kappa \lambda_m & 0\\
	0 & \kappa \lambda_m^{-1}
	\end{pmatrix}W^{\B{m}}&=A_2^{\B{m}} V^{\B{m}}+A_1^{\B{m}}U^{\B{m}}+A_0^{\B{m}},
	\end{align*}
	where
	\begin{equation*}
	A^{\B{m}}(z)=A_0^{\B{m}}+zA_1^{\B{m}}+z^2A_2^{\B{m}}+z^3\begin{pmatrix}
	\kappa \lambda_m & 0\\
	0 & \kappa \lambda_m^{-1}
	\end{pmatrix}.
	\end{equation*}
	The desired relations follow directly from these equations.
\end{proof}

In the following proposition we prove part (i) of Theorem \ref{thm:mainresult1}.
\begin{proposition}\label{pro:rhtime}
	Consider the Riemann-Hilbert Problems $\textnormal{RHP}^{\B{m}}(\gamma,C)$, $m\in\mathbb{Z}$, described in Definition \ref{def:RHP1and2}. For every $n\in\mathbb{Z}$, a solution $Y^{\B{m}}(z)$ of $\textnormal{RHP}^{\B{m}}(\gamma,C)$
	exists for at least one $m\in\{n,n+1,n+2\}$. Furthermore, let $m\in\mathbb{Z}$ be such that $Y^{\B{m}}(z)$ exists, then, using the notation in \eqref{eq:widehatpsi}, either
	\begin{enumerate}
		\item[\rm(i\rm)] $u_1^{\B{m}}\neq 0$, in which case $Y^{\B{m+1}}(z)$ exists and
		\begin{align}\label{eq:r+}
		Y^{\B{m+1}}(z)&=R_+^{\B{m}}(z)Y^{m}(z),\\
		R_+^{\B{m}}(z)&=z\begin{pmatrix}
		1 & 0\\
		0 & 0
		\end{pmatrix}+\begin{pmatrix}
		0 & -u_1^{\B{m}}\\
		1/u_1^{\B{m}} & 0
		\end{pmatrix},\label{eq:defiR+}
		\end{align}
		\item[\rm(ii\rm)] $u_1^{\B{m}}=0$, in which case $Y^{\B{m+1}}(z)$ and $Y^{\B{m+2}}(z)$ do not exist, while $Y^{\B{m+3}}(z)$ does exist and
		\begin{align}\label{s+}
		Y^{\B{m+3}}(z)&=S_+^{\B{m}}(z)Y^{m}(z),\\
		S_+^{\B{m}}(z)&=z^3\begin{pmatrix}
		1 & 0\\
		0 & 0
		\end{pmatrix}+z\begin{pmatrix}
		s_m & 0\\
		0 & 0
		\end{pmatrix}+
		\begin{pmatrix}
		0 & -w_1^{\B{m}}\\
		1/w_1^{\B{m}} & 0
		\end{pmatrix},\label{eq:defiS+}\\
		s_m:&=(q^{-2}-1)\frac{\lambda_m v_1^{\B{m}}-q^{-3}\lambda_m^{-1} v_2^{\B{m}}}{\lambda_m-q^{-5}\lambda_m^{-1}}.
		\end{align}
		In particular we necessarily have $w_1^{\B{m}}\neq 0$. 
	\end{enumerate}	
	Similarly either
	\begin{enumerate}
		\item[\rm(iii\rm)] $u_2^{\B{m}}\neq 0$, in which case $Y^{\B{m-1}}(z)$ exists and
		\begin{align}
		Y^{\B{m-1}}(z)&=R_-^{\B{m}}(z)Y^{m}(z),\\
		R_-^{\B{m}}(z)&=z\begin{pmatrix}
		0 & 0\\
		0 & 1
		\end{pmatrix}+\begin{pmatrix}
		0 & 1/u_2^{\B{m}}\\
		-u_2^{\B{m}} & 0
		\end{pmatrix},
		\end{align}
		\item[\rm(iv\rm)] $u_2^{\B{m}}=0$, in which case $Y^{\B{m-1}}(z)$ and $Y^{\B{m-2}}(z)$ do not exist, while $Y^{\B{m-3}}(z)$ does exist and
		\begin{align}
		Y^{\B{m-3}}(z)&=S_-^{\B{m}}(z)Y^{m}(z),\\
		S_-^{\B{m}}(z)&=z^3\begin{pmatrix}
		0 & 0\\
		0 & 1
		\end{pmatrix}+z\begin{pmatrix}
		0 & 0\\
		0 & s_m
		\end{pmatrix}+
		\begin{pmatrix}
		0 & 1/w_2^{\B{m}}\\
		-w_2^{\B{m}} & 0
		\end{pmatrix}.
		\end{align}
		In particular we necessarily have $w_2^{\B{m}}\neq 0$. 
	\end{enumerate}
\end{proposition}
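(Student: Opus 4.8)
The plan is to exploit the fact that any two solutions $Y^{\B{m}}$ and $Y^{\B{m'}}$ of $\textnormal{RHP}^{\B{m}}(\gamma,C)$ and $\textnormal{RHP}^{\B{m'}}(\gamma,C)$ share the \emph{same} jump $C(z)$ across $\gamma$ and differ only in their normalisation at infinity, so that the transition matrix $T(z):=Y^{\B{m'}}(z)\,Y^{\B{m}}(z)^{-1}$ has matching boundary values on $\gamma$ (the jumps cancel) and, by the global invertibility of $Y^{\B{m}}$ on $\mathbb C\setminus\gamma$ from Lemma \ref{lem:rhuniqueness}, no poles. Hence $T$ continues to an entire function. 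Writing $Y^{\B{m}}=\Psi_\infty^{\B{m}}z^{m\sigma_3}$ with $\Psi_\infty^{\B{m}}=I+\mathcal O(z^{-1})$ as in \eqref{eq:widehatpsi}, we get $T=\Psi_\infty^{\B{m'}}z^{(m'-m)\sigma_3}\big(\Psi_\infty^{\B{m}}\big)^{-1}$, so for $m'>m$ the matrix $T$ is a polynomial of degree $m'-m$ with rank-one leading coefficient $\left(\begin{smallmatrix}1&0\\0&0\end{smallmatrix}\right)$, and $|T|\equiv1$ since $|Y^{\B{m}}|$ is independent of $m$. This rigid structure, together with the reflection symmetry \eqref{eq:psiinfsym} forcing $U^{\B{m}}$ off-diagonal and $V^{\B{m}}$ diagonal, is the engine for all four cases.

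For part (i), given $u_1^{\B{m}}\neq0$ I would simply \emph{define} $Y^{\B{m+1}}:=R_+^{\B{m}}Y^{\B{m}}$ with $R_+^{\B{m}}$ as in \eqref{eq:defiR+}. Since $R_+^{\B{m}}$ is a polynomial with $|R_+^{\B{m}}|\equiv1$, the candidate inherits analyticity on $\mathbb C\setminus\gamma$, the jump $C$, and the correct determinant; expanding $R_+^{\B{m}}\big(I+z^{-1}U^{\B{m}}+\dots\big)z^{-\sigma_3}$ and using $U^{\B{m}}=\left(\begin{smallmatrix}0&u_1^{\B{m}}\\u_2^{\B{m}}&0\end{smallmatrix}\right)$ shows the product is $I+\mathcal O(z^{-1})$, so $Y^{\B{m+1}}$ solves $\textnormal{RHP}^{\B{m+1}}(\gamma,C)$ and is \emph{the} solution by uniqueness. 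Conversely, if $Y^{\B{m+1}}$ also exists, the engine with $m'=m+1$ yields $T=z\left(\begin{smallmatrix}1&0\\0&0\end{smallmatrix}\right)+T_0$, whose $(1,2)$-entry is $-u_1^{\B{m}}$ and whose $(2,1)$-entry is the $(2,1)$-coefficient of $\Psi_\infty^{\B{m+1}}$; the determinant is the product of these off-diagonal entries, so $|T|\equiv1$ forces $u_1^{\B{m}}\neq0$. This already proves that $Y^{\B{m+1}}$ cannot exist when $u_1^{\B{m}}=0$.

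The heart of part (ii) is to run the same determinant computation for $m'=m+1$ and $m'=m+2$ under $u_1^{\B{m}}=0$. The key simplification is that $u_1^{\B{m}}=0$ makes $\left(\begin{smallmatrix}1&0\\0&0\end{smallmatrix}\right)U^{\B{m}}=0$; combined with the off-diagonal/diagonal form of $U^{\B{m}},V^{\B{m}}$, this makes the entire second column of $T$ vanish identically, forcing $|T|\equiv0$ and contradicting $|T|\equiv1$. Hence neither $Y^{\B{m+1}}$ nor $Y^{\B{m+2}}$ exists. For the existence of $Y^{\B{m+3}}$ I would first note, via the first relation of Lemma \ref{lem:relationvariablesgu}, that $u_1^{\B{m}}=0$ is equivalent to $g_3^{\B{m}}=0$ (using $u_m\neq0$ and non-resonance), whereupon the fifth relation collapses to $\kappa(\lambda_m-q^{-3}\lambda_m^{-1})w_1^{\B{m}}=-u_m$, so that $w_1^{\B{m}}\neq0$ is automatic. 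I would then set $Y^{\B{m+3}}:=S_+^{\B{m}}Y^{\B{m}}$ with $S_+^{\B{m}}$ as in \eqref{eq:defiS+}; again $|S_+^{\B{m}}|\equiv1$ and the parity $S_+^{\B{m}}(-z)=-\sigma_3 S_+^{\B{m}}(z)\sigma_3$ is compatible with \eqref{eq:psiinfsym}, so only the normalisation remains. I expect this normalisation check to be the main obstacle: matching $S_+^{\B{m}}\big(I+z^{-1}U^{\B{m}}+z^{-2}V^{\B{m}}+z^{-3}W^{\B{m}}+\dots\big)z^{-3\sigma_3}=I+\mathcal O(z^{-1})$ order by order requires $u_1^{\B{m}}=0$ together with the explicit coefficient relations of Lemma \ref{lem:relationvariablesgu}, and the intermediate diagonal equation is precisely what pins down the value of $s_m$ recorded in \eqref{eq:defiS+}.

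Parts (iii) and (iv) follow verbatim after exchanging the two coordinates: the rank-one leading term becomes $\left(\begin{smallmatrix}0&0\\0&1\end{smallmatrix}\right)$, $u_1^{\B{m}}$ is replaced by $u_2^{\B{m}}$, and $w_1^{\B{m}}$ by $w_2^{\B{m}}$, with $g_4^{\B{m}}$ (the second and fourth relations of Lemma \ref{lem:relationvariablesgu}) playing the role of $g_3^{\B{m}}$. Finally, for the opening existence claim I would combine Lemma \ref{lem:existence}, which supplies at least one index $m_0$ with $Y^{\B{m_0}}$ existing, with the step structure just established: from any existent index the next existent index above is either $m+1$ (case (i)) or $m+3$ (case (ii)), and symmetrically below. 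Thus the set of existent indices is bi-infinite with maximal gaps of length exactly two, so every window $\{n,n+1,n+2\}$ meets it; the same statement shows that $m\in\mathfrak M$ forces $m+1\in\mathfrak M$ or $m-1\in\mathfrak M$, completing part (i) of Theorem \ref{thm:mainresult1}.
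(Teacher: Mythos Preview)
Your approach is essentially the paper's: both arguments exploit that $T(z)=Y^{\B{m'}}(z)Y^{\B{m}}(z)^{-1}$ is an entire matrix polynomial with $|T|\equiv1$, and both construct the explicit transition matrices $R_+^{\B{m}}$ and $S_+^{\B{m}}$ and verify the normalisation at infinity. Your treatment of~(i), the non-existence halves of~(ii) and~(iv), and the window argument is correct; indeed the parity structure forces $T_0$ off-diagonal in the degree-one case, so $|T|=u_1^{\B{m}}\cdot u_2^{\B{m+1}}$ and your determinant argument goes through. The second-column-vanishes argument for $m'=m+2$ under $u_1^{\B{m}}=0$ is also correct once one notes that the $(1,2)$ and $(2,2)$ entries of $T$ are then $O(z^{-1})$ and $O(z^{-2})$ respectively, hence zero by Liouville.

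There is one inaccuracy in your plan for the construction of $Y^{\B{m+3}}$. The constraint that pins down $s_m$ does \emph{not} come from a diagonal equation and cannot be verified from Lemma~\ref{lem:relationvariablesgu} alone. When you expand $S_+^{\B{m}}(z)\Psi_\infty^{\B{m}}(z)z^{-3\sigma_3}$ and require it to equal $I+\mathcal O(z^{-1})$, the diagonal entries and the $(2,1)$-entry are automatic; the only non-trivial condition sits in the $(1,2)$-entry at order~$z$, and it reads
\[
z_1^{\B{m}}+s_m\,w_1^{\B{m}}-w_1^{\B{m}}v_2^{\B{m}}=0,
\]
where $z_1^{\B{m}}$ is the $(1,2)$-entry of the \emph{fifth}-order coefficient $Z^{\B{m}}$ in the expansion of $\Psi_\infty^{\B{m}}$. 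Lemma~\ref{lem:relationvariablesgu} stops at $W^{\B{m}}$, so you must extend that computation by one more order (exactly as the paper does) to obtain
\[
z_1^{\B{m}}=(\lambda_m-q^{-5}\lambda_m^{-1})^{-1}\bigl[(1-q^{-2})\lambda_m v_1^{\B{m}}+(\lambda_m-q^{-3}\lambda_m^{-1})v_2^{\B{m}}\bigr]w_1^{\B{m}},
\]
after which $s_m=v_2^{\B{m}}-z_1^{\B{m}}/w_1^{\B{m}}$ simplifies to the stated formula. With this correction your argument is complete and coincides with the paper's.
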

\begin{proof}
	The first statement of the Proposition follows directly from the subsequent four assertions, starting from any seed solution $Y^{\B{m}}(z)$, which is guaranteed to exist by Lemma \ref{lem:existence}. 
	
	Given such a solution $Y^{\B{m}}(z)$, let $R(z)$ be any matrix polynomial and set
	\begin{equation*}
	Y(z)=R(z)Y^{\B{m}}(z),
	\end{equation*}
then $Y(z)$ automatically satisfies the same analyticity and jump condition as $Y^{\B{m}}(z)$. Let $n\in\mathbb{Z}$. If we can choose $R(z)$ such that
\begin{equation}\label{eq:r}
R(z)Y^{\B{m}}(z)=(I+\mathcal{O}\left(z^{-1}\right))z^{n\sigma_3}\quad (z\rightarrow \infty),
\end{equation}
then $Y^{\B{n}}(z)$ exists and 
\begin{equation}\label{eq:ymn}
Y^{\B{n}}(z)=R(z)Y^{\B{m}}(z).
\end{equation} 
Conversely, if $Y^{\B{n}}(z)$ exists, then, defining $R(z)$ by equation \eqref{eq:ymn}, $R(z)$ is a matrix polynomial satisfying \eqref{eq:r}.

To prove the theorem, it remains to study equation \eqref{eq:r}, which can essentially be reduced to linear algebra. Indeed, let us first consider the case $n=m+1$. It is easy to see that $R(z)$ must take the form
 \begin{equation*}
 R_+^{\B{m}}(z)=z\begin{pmatrix}
 1 & 0\\
 0 & 0
 \end{pmatrix}+\begin{pmatrix}
 r_{11}^0 & r_{12}^0\\
 r_{21}^0 & r_{22}^0
 \end{pmatrix},
 \end{equation*}
and we find that, $Y^{\B{m+1}}(z)$ exists if and only if equation \eqref{eq:r} has a solution, which can be rewritten as
 \begin{equation*}
 \begin{pmatrix}
 1 &0 &0 &0\\
 0 & 1 & 0&0\\
 0&0&u_1^{\B{m}}&0\\
 0&0&0&u_1^{\B{m}}
 \end{pmatrix}\begin{pmatrix}
 r_{12}^0\\
 r_{22}^0\\
 r_{11}^0\\
 r_{21}^0
 \end{pmatrix}=\begin{pmatrix}
 -u_1^{\B{m}}\\
 0\\
 0\\
 1
 \end{pmatrix}.
 \end{equation*}
Clearly this system has a solution if and only if $u_1^{\B{m}}\neq 0$, in which case $R(z)=R_+^{\B{m}}(z)$ as defined in equation \eqref{eq:defiR+}.
In particular, if $u_1^{\B{m}}\neq 0$, then $Y^{\B{m+1}}(z)$ indeed exists and equation \eqref{eq:r+} holds true.

Now suppose $u_1^{\B{m}}=0$, then we already know that $Y^{\B{m+1}}(z)$ cannot exist.
As $u_1^{\B{m}}=0$, we have, by Lemma \ref{lem:relationvariablesgu},
\begin{align}
g_1^{\B{m}}&=i(1-q^{-2})\kappa v_1^{\B{m}},\nonumber\\
g_2^{\B{m}}&=i(1-q^{-2})\kappa v_2^{\B{m}},\nonumber\\
g_3^{\B{m}}&=0,\nonumber\\
g_4^{\B{m}}&=\kappa^2(q^{-1}\lambda_m-\lambda_m^{-1})(\lambda_m-q^{-3}\lambda_m^{-1})u_2^{\B{m}}w_1^{\B{m}},\nonumber\\
u_m&=\kappa(\lambda_m-q^{-3}\lambda_m^{-1})w_1^{\B{m}}, \label{eq:w1}
\end{align}
and in particular $w_1^{\B{m}}\neq 0$.
We consider equation \eqref{eq:r} with $n=m+2$. It is easy to see that $R(z)$ must take the form
\begin{equation*}
R(z)=z^2\begin{pmatrix}
1 & 0\\
0 & 0
\end{pmatrix}+z\begin{pmatrix}
r_{11}^1 & r_{12}^1\\
r_{21}^1 & r_{22}^1
\end{pmatrix}+\begin{pmatrix}
r_{11}^0 & r_{12}^0\\
r_{21}^0 & r_{22}^0
\end{pmatrix},
\end{equation*}
and we find that, $Y^{\B{m+2}}(z)$ exists, if and only if equation \eqref{eq:r} has a solution, which can be rewritten as
\begin{equation*}
\begin{pmatrix}
1 &0 &0 &0          &0&0&0&0\\
0 & 1 & 0&0         &0&0&0&0 \\
0&0&1&0             &0&0&0&0\\
0&0&0&1             &0&0&0&0\\
v_2^{\B{m}} &0&0&0       &0&0&0&0\\
0 &v_2^{\B{m}}&0&0      &0&0&0&0\\
0 &0&v_2^{\B{m}}&0      &w_1^{\B{m}}&0&0&0\\
0 &0&0&v_2^{\B{m}}      &0&w_1^{\B{m}}&0&0
\end{pmatrix}\begin{pmatrix}
r_{12}^1\\
r_{22}^1\\
r_{12}^0\\
r_{22}^0\\
r_{11}^1\\
r_{21}^1\\
r_{11}^0\\
r_{21}^0
\end{pmatrix}=\begin{pmatrix}
-u_1^{\B{m}}\\
0\\
0\\
0\\
-w_1^{\B{m}}\\
0\\
0\\
0
\end{pmatrix}.
\end{equation*}
As $w_1^{\B{m}}\neq 0$, but $u_1^{\B{m}}=0$, this equation does not have a solution and hence $Y^{\B{m+2}}(z)$ cannot exist.

We now show the existence of $Y^{\B{m+3}}(z)$. We consider equation \eqref{eq:r}, with $R(z)=S(z)$ and $n=m+3$. To simplify the procedure, note that, by equation \eqref{eq:psiinfsym}, we must have
\begin{equation*}
S(-z)=-\sigma_3S(z)\sigma_3,
\end{equation*}
and hence any solution $S(z)$ must take the form
\begin{equation*}
S(z)=z^3\begin{pmatrix}
1 & 0\\
0 & 0
\end{pmatrix}+z^2\begin{pmatrix}
0 & s_1^2\\
s_2^2 & 0
\end{pmatrix}+z\begin{pmatrix}
s_1^1 & 0\\
0 & s_2^1
\end{pmatrix}+
\begin{pmatrix}
0 & s_1^0\\
s_2^0 & 0
\end{pmatrix}.
\end{equation*}
We extend \eqref{eq:widehatpsi}, by writing
\begin{align*}
\Psi_\infty^{\B{m}}(z)&=I+z^{-1}U^{\B{m}}+z^{-2}V^{\B{m}}+z^{-3}W^{\B{m}}+z^{-4}X^{\B{m}}+z^{-5}Z^{\B{m}}+\mathcal{O}\left(z^{-6}\right)\quad (z\rightarrow \infty),
\end{align*}
where $X^{\B{m}}$ and $Z^{\B{m}}$ take the form
\begin{equation*}
X^{\B{m}}=\begin{pmatrix}
x_1^{\B{m}} & 0\\
0 & x_2^{\B{m}}
\end{pmatrix},\quad Z^{\B{m}}=\begin{pmatrix}
0 & z_1^{\B{m}}\\
z_2^{\B{m}} & 0
\end{pmatrix}.
\end{equation*}
Then equation \eqref{eq:r} with $n=m+3$ for $R(z)=S(z)$, is equivalent to
\begin{equation*}
\begin{pmatrix}
	1 &0&0&0&0&0\\
	0 &1&0&0&0&0\\
	v_2^{\B{m}} &0&1&0&0&0\\
	0&v_2^{\B{m}}&0&w_1^{\B{m}}&0&0\\
	x_2^{\B{m}}&0&v_2^{\B{m}}&0&w_1^{\B{m}}&0\\
	0&x_2^{\B{m}}&0&z_1^{\B{m}}&0&w_1^{\B{m}}\\
\end{pmatrix}\cdot
\begin{pmatrix}
s_1^2\\
s_2^1\\
s_1^0\\
s_2^2\\
s_1^1\\
s_2^0
\end{pmatrix}=
\begin{pmatrix}
0\\
0\\
-w_1^{\B{m}}\\
0\\
-z_1^{\B{m}}\\
1
\end{pmatrix}.
\end{equation*}
We know $w_1^{\B{m}}\neq 0$, by equation \eqref{eq:w1}, which implies that the above equation has a unique solution, given by
\begin{equation*}
s_1^2=0,\quad s_2^2=0,\quad s_1^1=v_2^{\B{m}}-z_1^{\B{m}}/w_1^{\B{m}},\quad s_2^1=0,\quad s_1^0=-w_1^{\B{m}},\quad s_2^0=1/w_1^{\B{m}},
\end{equation*}
and hence $Y^{\B{m+3}}(z)=S(z)Y^{\B{m}}(z)$ exists.
It remains to be checked that
\begin{equation}\label{eq:identityy1}
v_2^{\B{m}}-z_1^{\B{m}}/w_1^{\B{m}}=s_m.
\end{equation}
Using the same method as in the proof of Lemma \ref{lem:relationvariablesgu}, we find
\begin{equation*}
z_1^{\B{m}}=(\lambda_m-q^{-5}\lambda_m^{-1})^{-1}\left[(1-q^{-2})\lambda_m v_1^{\B{m}}+(\lambda_m-q^{-3}\lambda_m^{-1})v_2^{\B{m}}\right]w_1^{\B{m}},
\end{equation*}
from which equation \eqref{eq:identityy1} follows directly. We conclude that expression \eqref{eq:defiS+} is indeed correct.
 The second part of the theorem is proven analogously.
\end{proof}

\begin{corollary}\label{cor:timeevolutiona}
	Considering the generalised inverse monodromy problem \ref{prob:inversem}, for every $n\in\mathbb{Z}$, the solution $A^{\B{m}}(z)$
	exists for at least one $m\in\{n,n+1,n+2\}$. Furthermore, let $m\in\mathbb{Z}$ be such that $A^{\B{m}}(z)$ exists, then, using the notation in \eqref{eq:Ameq}, either
\begin{enumerate}
	\item[\rm(i\rm)] $g_3^{\B{m}}\neq 0$, in which case $A^{\B{m+1}}(z)$ exists and equals
		\begin{align}
		&\mathcal{A}(z;q\lambda_m,g^{\B{m+1}},u_{m+1})=R_+^{\B{m}}(qz)\mathcal{A}(z;\lambda_m,g^{\B{m}},u_m)R_+^{\B{m}}(z)^{-1},\label{eq:time-evo_A}\\
		&R_+^{\B{m}}(z)=z\begin{pmatrix}
		1 & 0\\
		0 & 0
		\end{pmatrix}+\begin{pmatrix}
		0 & -u_1^{\B{m}}\\
		1/u_1^{\B{m}} & 0
		\end{pmatrix},\nonumber\\
		&u_1^{\B{m}}=\kappa(\lambda_m-q^{-1}\lambda_m^{-1})^{-1}u_mg_3^{\B{m}}.\nonumber
		\end{align}
	\item[\rm(ii\rm)] $g_3^{\B{m}}=0$, in which case $A^{\B{m+1}}(z)$ and $A^{\B{m+2}}(z)$ do not exist whereas $A^{\B{m+3}}(z)$ does exist and equals
	\begin{align*}
	&\mathcal{A}(z;q^3\lambda_m,g^{\B{m+3}},u_{m+3})=S_+^{\B{m}}(qz)\mathcal{A}(z;\lambda_m,g^{\B{m}},u_m)S_+^{\B{m}}(z)^{-1},\\
	&S_+^{\B{m}}(z)=z^3\begin{pmatrix}
	1 & 0\\
	0 & 0
	\end{pmatrix}+z\begin{pmatrix}
	s_m & 0\\
	0 & 0
	\end{pmatrix}+
	\begin{pmatrix}
	0 & -w_1^{\B{m}}\\
	1/w_1^{\B{m}} & 0
	\end{pmatrix},\\
	&w_1^{\B{m}}=\kappa(\lambda_m^{-1}-q^{-3}\lambda_m^{-1})^{-1}u_m.
	\end{align*}
\end{enumerate}
Here $s_m$ is given by
\begin{equation*}
s_m=i\kappa^{-1}(\lambda_m-q^{-5}\lambda_m^{-1})^{-1}(\lambda_mg_1^{\B{m}}-q^{-3}\lambda_m^{-1}g_2^{\B{m}}).
\end{equation*}
Similarly, either
\begin{enumerate}
	\item[\rm(iii\rm)] $g_4^{\B{m}}\neq 0$, in which case $A^{\B{m-1}}(z)$ exists and equals
		\begin{align*}
		&\mathcal{A}(z;q^{-1}\lambda_m,g^{\B{m-1}},u_{m-1})(z)=R_-^{\B{m}}(qz)\mathcal{A}(z;\lambda_m,g^{\B{m}},u_m)R_-^{\B{m}}(z)^{-1},\\
			&R_-^{\B{m}}(z)=z\begin{pmatrix}
			0 & 0\\
			0 & 1
			\end{pmatrix}+\begin{pmatrix}
			0 & 1/u_2^{\B{m}}\\
			-u_2^{\B{m}} & 0
			\end{pmatrix},\\
		&u_2^{\B{m}}=\kappa(q^{-1}\lambda_m-\lambda_m^{-1})^{-1}u_m^{-1}g_4^{\B{m}}.
		\end{align*}
	\item[\rm(iv\rm)] $g_4^{\B{m}}=0$, in which case $A^{\B{m-1}}(z)$ and $A^{\B{m-2}}(z)$ do not exist whereas $A^{\B{m-3}}(z)$ does exist and equals
	\begin{align*}
	&\mathcal{A}(z;q^{-3}\lambda_m,g^{\B{m-3}},u_{m-3})(z)=S_-^{\B{m}}(qz)\mathcal{A}(z;\lambda_m,g^{\B{m}},u_m)S_-^{\B{m}}(z)^{-1},\\
&S_-^{\B{m}}(z)=z^3\begin{pmatrix}
0 & 0\\
0 & 1
\end{pmatrix}+z\begin{pmatrix}
0 & 0\\
0 & s_m
\end{pmatrix}+
\begin{pmatrix}
0 & 1/w_2^{\B{m}}\\
-w_2^{\B{m}} & 0
\end{pmatrix},\\
	&w_2^{\B{m}}=\kappa(q^{-3}\lambda_m-\lambda_m^{-1})^{-1}u_m^{-1}.
	\end{align*}
\end{enumerate}
\end{corollary}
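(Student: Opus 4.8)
The plan is to read this corollary directly off Proposition \ref{pro:rhtime} by means of two bridges. The first is the equivalence of Proposition \ref{cor:imrhequivalencem}, which identifies existence of a solution $A^{\B{m}}(z)$ of the generalised inverse problem \ref{prob:inversem} with existence of the solution $Y^{\B{m}}(z)$ of $\textnormal{RHP}^{\B{m}}(\gamma,C)$. The second is the dictionary of Lemma \ref{lem:relationvariablesgu}, which converts the asymptotic coefficients $u_1^{\B{m}},u_2^{\B{m}},v_i^{\B{m}},w_i^{\B{m}}$ of $Y^{\B{m}}(z)$ into the entries $g^{\B{m}},u_m$ of $A^{\B{m}}(z)$. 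With both bridges in hand, the opening existence assertion, that $A^{\B{m}}(z)$ exists for at least one $m\in\{n,n+1,n+2\}$, is immediate from the corresponding statement of Proposition \ref{pro:rhtime}.

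First I would translate the case-splitting conditions. Lemma \ref{lem:relationvariablesgu} supplies $\kappa(\lambda_m-q^{-1}\lambda_m^{-1})u_1^{\B{m}}=u_mg_3^{\B{m}}$ and $\kappa(q^{-1}\lambda_m-\lambda_m^{-1})u_2^{\B{m}}=u_m^{-1}g_4^{\B{m}}$. Since $A^{\B{m}}(z)\in\mathcal{F}(q^m\lambda_0,a)$ forces $u_m\in\mathbb{C}^*$, and since the non-resonance condition \eqref{eq:conditionnonresonant} excludes $\lambda_m^2\in q^{\mathbb Z}$ and hence makes every scalar prefactor $\lambda_m-q^{-k}\lambda_m^{-1}$ appearing below nonzero, these identities give $u_1^{\B{m}}\neq 0\Leftrightarrow g_3^{\B{m}}\neq 0$ and $u_2^{\B{m}}\neq 0\Leftrightarrow g_4^{\B{m}}\neq 0$. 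Consequently the four alternatives (i)--(iv) here correspond exactly to the four alternatives of Proposition \ref{pro:rhtime}; in particular the non-existence of $A^{\B{m+1}},A^{\B{m+2}}$ when $g_3^{\B{m}}=0$, and the fact that $w_1^{\B{m}}\neq 0$ in that case (through $u_m=\kappa(\lambda_m-q^{-3}\lambda_m^{-1})w_1^{\B{m}}$ together with $u_m\neq 0$), are inherited directly.

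Next I would transfer the gauge relations from $Y$ to $A$. Using \eqref{eq:defainfm} together with $\Psi_\infty^{\B{m}}(z)=Y^{\B{m}}(z)z^{-m\sigma_3}$ on $D_+$, one obtains $A^{\B{m}}(z)=\kappa z^3Y^{\B{m}}(qz)\lambda_0^{\sigma_3}Y^{\B{m}}(z)^{-1}$ on the open set where $z,qz\in D_+$. Substituting the relation $Y^{\B{m+1}}(z)=R_+^{\B{m}}(z)Y^{\B{m}}(z)$ from \eqref{eq:r+}, the two inner $Y^{\B{m}}$-factors cancel and produce
\begin{equation*}
A^{\B{m+1}}(z)=R_+^{\B{m}}(qz)A^{\B{m}}(z)R_+^{\B{m}}(z)^{-1},
\end{equation*}
an identity of rational matrix functions that, holding on an open set, extends to all of $\mathbb{C}$. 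Because $\det R_+^{\B{m}}(z)\equiv 1$ its inverse is again polynomial, so the right-hand side is manifestly polynomial of the required degree; rewriting it as $\mathcal{A}(z;q\lambda_m,g^{\B{m+1}},u_{m+1})=R_+^{\B{m}}(qz)\mathcal{A}(z;\lambda_m,g^{\B{m}},u_m)R_+^{\B{m}}(z)^{-1}$ is precisely \eqref{eq:time-evo_A}. The same substitution with $S_+^{\B{m}}$, and with $R_-^{\B{m}},S_-^{\B{m}}$, yields the remaining conjugation formulas.

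Finally I would pin down the explicit coefficients. The stated formulas for $u_1^{\B{m}}$ and $w_1^{\B{m}}$ are just the inverted forms of the relations of Lemma \ref{lem:relationvariablesgu} recalled above, and the formula for $s_m$ follows by inserting the $v_i^{\B{m}}$-relations of that lemma, specialised to $g_3^{\B{m}}=0$ (equivalently $u_1^{\B{m}}=0$), which collapse them to $(q^{-2}-1)\kappa\lambda_m v_1^{\B{m}}=i\lambda_mg_1^{\B{m}}$ and $(q^{-2}-1)\kappa\lambda_m^{-1}v_2^{\B{m}}=i\lambda_m^{-1}g_2^{\B{m}}$, into the expression for $s_m$ in Proposition \ref{pro:rhtime}. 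The backward assertions (iii)--(iv) follow verbatim by the symmetric argument. The only genuine labour is this bookkeeping between the $(u,v,w)$- and $(g,u_m)$-coordinates; I expect no analytic obstacle, since all the existence, uniqueness, and normalisation content has already been discharged in Proposition \ref{pro:rhtime} and Lemma \ref{lem:existence}.
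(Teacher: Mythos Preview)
Your proposal is correct and follows exactly the approach the paper takes: the paper's own proof is a single sentence invoking Proposition \ref{pro:rhtime}, Lemma \ref{lem:relationvariablesgu}, and the equivalence of Proposition \ref{cor:imrhequivalencem}, and you have simply spelled out the mechanics of how those three ingredients fit together. The only difference is level of detail, not strategy.
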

\begin{proof}
	This follows directly from Proposition \ref{pro:rhtime} and Lemma \ref{lem:relationvariablesgu}, using the equivalence of the generalised inverse monodromy problem \ref{prob:inversem} and $\text{RHP}^{\B{m}}(\gamma,C)$, $m\in\mathbb{Z}$.
\end{proof}
\begin{remark}
	Note that, not coincidently, equation \eqref{eq:time-evo_A} agrees perfectly with Equation \eqref{eq:laxtime}, namely $R_+^{\B{m}}(z)={\mathcal B}(z;\lambda_m,f^{\B{m}},u_m)$.
\end{remark}
Finally the following lemma shows that the isomonodromic deformation of the class of Fuchsian systems \ref{eq:qdifferenceclass} as $\lambda\rightarrow q\lambda$ is equivalent to the $q\Pfour^{\text{mod}}$ time-evolution.
\begin{lemma}\label{lem:equivalencetime}
The time-evolution of $g^{\B{m}}\in\mathcal{G}(a)$ and $u_m\in\mathbb{C}^*$, induced by Corollary \ref{cor:timeevolutiona}, coincides with $q\Pfour^{\text{mod}}(\lambda_0,a)$ plus its continuation formulae \eqref{eq:gcontinuation}, and the auxiliary equation \eqref{eq:auxiliary}.
\end{lemma}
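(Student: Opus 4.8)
The plan is to prove the lemma by direct coefficient comparison in the conjugation identities of Corollary~\ref{cor:timeevolutiona}, treating the regular and singular directions separately and using the defining relations of $\mathcal{G}(a)$ to recognise the output as $q\Pfour^{\text{mod}}$.

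First I would treat the regular forward step $g_3^{\B{m}}\neq 0$. Substituting the explicit form \eqref{eq:Aexplicit} of $\mathcal{A}$ into both sides of \eqref{eq:time-evo_A}---with data $(g^{\B{m}},u_m,\lambda_m)$ on the right and $(g^{\B{m+1}},u_{m+1},q\lambda_m)$ in the target---I would expand the triple product $R_+^{\B{m}}(qz)\,\mathcal{A}(z;\lambda_m,g^{\B{m}},u_m)\,R_+^{\B{m}}(z)^{-1}$ in powers of $z$. Since Corollary~\ref{cor:timeevolutiona} already guarantees that the left-hand side lies in $\mathcal{F}(q\lambda_m,a)$ and is therefore of degree three, the coefficients of $z^4$ and $z^5$ vanish automatically, so it suffices to equate the coefficients of $z^0,z^1,z^2$. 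The $z^0$ entries give $u_{m+1}=(u_1^{\B{m}})^2/u_m$, the diagonal $z^1$ entries give $g_1^{\B{m+1}},g_2^{\B{m+1}}$, and the off-diagonal $z^2$ entries give $g_3^{\B{m+1}},g_4^{\B{m+1}}$.

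Next I would substitute the relations of Lemma~\ref{lem:relationvariablesgu}, which express $u_1^{\B{m}},v_1^{\B{m}},v_2^{\B{m}}$ in terms of $g^{\B{m}}$ and $u_m$, and simplify. For example, using the relation $\kappa(\lambda_m-q^{-1}\lambda_m^{-1})u_1^{\B{m}}=u_mg_3^{\B{m}}$ of Lemma~\ref{lem:relationvariablesgu}, the $z^1$ equation in the $(1,1)$ slot collapses to exactly $g_1^{\B{m+1}}=q^{-1}\lambda_m^{-2}g_2^{\B{m}}+a_0^2a_2\lambda_m^{-2}(g_3^{\B{m}})^{-1}(q\lambda_m^2-1)$, the first line of \eqref{eq:aqp4}; the $(2,1)$ entry at order $z^2$ collapses to $g_4^{\B{m+1}}=g_3^{\B{m}}$; and the $z^0$ relation, namely $u_{m+1}/u_m=\kappa^{-2}(\lambda_m-q^{-1}\lambda_m^{-1})^{-2}(g_3^{\B{m}})^2$, becomes \eqref{eq:ub} once one uses $\kappa^{-2}=-1/(q^2a_0^4a_2^2)$ and $\lambda_m-q^{-1}\lambda_m^{-1}=q^m\lambda_0-q^{-m-1}\lambda_0^{-1}$. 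The remaining $\overline{g}_2$ and $\overline{g}_3$ lines account for the bulk of the computation, and here one repeatedly rewrites combinations of the $g_i^{\B{m}}$ using the surface relations \eqref{eq:algebraic}. Note that $g^{\B{m+1}}\in\mathcal{G}(a)$ requires no separate verification, since $A^{\B{m+1}}\in\mathcal{F}(q\lambda_m,a)$ by construction.

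For the singular direction $g_3^{\B{m}}=0$ the one-step map degenerates ($u_1^{\B{m}}=0$), so I would instead match $S_+^{\B{m}}(qz)\,\mathcal{A}(z;\lambda_m,g^{\B{m}},u_m)\,S_+^{\B{m}}(z)^{-1}$ against $\mathcal{A}(z;q^3\lambda_m,g^{\B{m+3}},u_{m+3})$ by the same coefficient comparison, the output being precisely the continuation formulae \eqref{eq:gcontinuation} together with \eqref{eq:ubbb}; the backward steps $\lambda\mapsto\lambda/q$ follow identically from the transformations $R_-^{\B{m}},S_-^{\B{m}}$, or by inverting the forward maps. I expect the main obstacle to be the algebraic simplification of the $\overline{g}_3$ equation, whose $(g_3)^{-2}(q\lambda^2-1)^2$ term only emerges after eliminating the auxiliary quantities $v_1^{\B{m}},v_2^{\B{m}},w_1^{\B{m}}$ via Lemma~\ref{lem:relationvariablesgu} and applying both relations in \eqref{eq:algebraic}; checking that the three-step $S_+^{\B{m}}$ computation reproduces exactly the confined continuation \eqref{eq:gcontinuation} is the most delicate bookkeeping.
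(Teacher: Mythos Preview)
Your proposal is correct and is precisely the direct coefficient-comparison calculation the paper has in mind; the paper's own proof reads in its entirety ``This follows by direct calculation.'' You have simply made explicit the bookkeeping (matching powers of $z$ in the conjugation identities of Corollary~\ref{cor:timeevolutiona}, substituting Lemma~\ref{lem:relationvariablesgu}, and invoking \eqref{eq:algebraic}) that the paper leaves implicit.
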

\begin{proof}
This follows by direct calculation.
\end{proof}

We now have all the ingredients to prove Theorem \ref{thm:mainresult1} in Section \ref{s:mr}.
\begin{proof}[Proof of Theorem \ref{thm:mainresult1}]
Firstly, note that part (i) follows from Proposition \ref{pro:rhtime}. As to part (ii), observe that the definition of $A^{\B{m}}(z)$ coincides with the one in Proposition \ref{cor:imrhequivalencem}, i.e. equation \eqref{eq:defainfm}. So indeed $A^{\B{m}}(z)\in \mathcal{F}(q^m\lambda_0,a)$, by (ii) in Proposition \ref{cor:imrhequivalencem}.
	
Finally part (iii) follows from Corollary \ref{cor:timeevolutiona} and Lemma \ref{lem:equivalencetime}.
\end{proof}

\subsection{Bijectivity of the Riemann-Hilbert Mapping}
\label{section:rhcorrespondence}
In this section we prove Theorem \ref{thm:rhmap}. 
\begin{proof}[Proof of Theorem \ref{thm:rhmap}]
Note that Theorem \ref{thm:mainresult1} allows us to associate with any connection matrix $C(z)\in \mathcal{C}(\lambda_0,a)$, a unique $q\Pfour^{\text{mod}}(\lambda_0,a)$ transcendent $g=(g^{\B{m}})_{m\in\mathbb{Z}}$ and solution $u=(u_m)_{m\in\mathbb{Z}}$ of the auxiliary equation \eqref{eq:auxiliary}.
Upon scaling $C(z)\rightarrow \widehat{C}(z)=C(z)D$, where $D=\operatorname{diag}(d_1,d_2)$ an invertible diagonal matrix, the solution of $\text{RHP}^{\B{m}}(\gamma,C)$ is scaled by $Y^{\B{m}}\rightarrow \widehat{Y}^{\B{m}}$, where
\begin{equation}\label{eq:scaling}
\widehat{Y}^{\B{m}}(z)=\begin{cases}
D^{-1}Y^{\B{m}}(z)D & \text{if $z\in D_+$,}\\
D^{-1}Y^{\B{m}}(z) & \text{if $z\in D_-$.}
\end{cases}
\end{equation}
In turn this scales the matrix $A^{\B{m}}(z)$ to $\widehat{A}^{\B{m}}(z)=D^{-1}A^{\B{m}}(z)D$, leaving the underlying $q\Pfour^{\text{mod}}(\lambda_0,a)$ transcendent $g=\widehat{g}$ invariant whilst rescaling the solution of the auxiliary equation by $u\rightarrow \widehat{u}=\frac{d_2}{d_1}u$. Therefore the Riemann-Hilbert mapping \ref{eq:RHmapping} is well-defined.
It also follows that both $g$ and $u$ remain invariant under scaling $C(z)\rightarrow \widehat{C}(z)=c\,C(z)$, for any $c\in\mathbb{C}^*$, yielding the mapping
\begin{equation*}
RH:M_c(\lambda_0,a)\rightarrow S(\lambda_0,a),\quad M \overset{RH}{\mapsto}(g,u),
\end{equation*}
where $M_c(\lambda_0,a)$ is defined in Definition \ref{def:monodromymap} and $S(\lambda_0,a)$ denotes the solution space of $q\Pfour^{\text{mod}}(\lambda_0,a)$ plus the auxiliary equation \eqref{eq:auxiliary}. Furthermore, for $M\in M_c(\lambda_0,a)$, if $RH(M)=(g,u)$, then
\begin{equation*}
RH(M\cdot \operatorname{diag}(1,d))=(g,du),\quad M\cdot \operatorname{diag}(1,d):=\{C(z)\cdot\operatorname{diag}(1,d):C(z)\in M\}.
\end{equation*}
Thus, to prove the theorem, it suffices to show that the mapping $RH$ is bijective.

We proceed with constructing an inverse of $RH$. Let $g=(g^{\B{m}})_{m\in\mathbb{Z}}$ be any $q\Pfour^{\text{mod}}(\lambda_0,a)$ transcendent and $u=(u_m)_{m\in\mathbb{Z}}$ be any solution of the auxiliary equation. Denote by $X\subseteq Z$ the set of integers $m$ where $g^{\B{m}}$ is singular, i.e. $g^{\B{m}}=\mathsf{s}$, recalling Definition \ref{def:transcendent}.

For $m\in \mathbb{Z}\setminus X$, we write
\begin{equation}
A^{\B{m}}(z):=\mathcal{A}(z;\lambda_m,g^{\B{m}},u_m),
\end{equation}
denote corresponding monodromy by
\begin{equation*}
M_m^*:=M_{\mathcal{F}}(\mathcal{A}^{\B{m}}(z))\in M_c(q^m\lambda_0,a),
\end{equation*}
and set
\begin{equation*}
M_m:=\tau^{-m}(M_m^*)\in M_c(\lambda_0,a),
\end{equation*}
recalling the definition of $\tau$ in equation \eqref{eq:taudef}.
Then we know, by Corollary \ref{cor:timeevolutiona} and Lemma \ref{lem:equivalencetime}, that the monodromy $M_m=M$ is independent of $m\in\mathbb{Z}\setminus X$. We write $M_\text{IV}(g,u)=M$, yielding a mapping
\begin{equation}\label{eq:miv}
M_\text{IV}:S(\lambda_0,a)\rightarrow M_c(\lambda_0,a).
\end{equation}
Due to Lemma \ref{lem:equivalencetime} and the equivalence of the generalised inverse monodromy problem \ref{prob:inversem} and the main RHP defined in Definition \ref{def:RHP1and2}, see Proposition \ref{cor:imrhequivalencem}, it is evident that $M_\text{IV}$ is an inverse of the mapping $RH$. In particular $RH$ is bijective and the theorem follows.
\end{proof}

\section{The moduli space}\label{section:moduli}
In this section we study the monodromy surface defined in Definition \ref{def:modulispace}. In Section \ref{subsec:proof} we prove Theorem \ref{thm:mainresult3} and in Section \ref{subsec:real} we classify those monodromy data corresponding to real-valued transcendents, yielding Remark \ref{rem:real}.

\subsection{Proof of Theorem \ref{thm:mainresult3}}\label{subsec:proof}
In order to study the monodromy surface $\mathcal{M}_c(\lambda,a)$, defined in \ref{def:modulispace}, we briefly recall some fundamental properties of theta functions relevant in this context. Namely we consider analytic functions $c(z)$ on $\mathbb{C}^*$, such that $c(z)/c(qz)$ is a monomial. We call such functions $q$-theta functions.

For $\alpha\in\mathbb{C}^*$ and $n\in\mathbb{N}$, we denote by $V_n(\alpha)$ the set of all analytic functions $c(z)$ on $\mathbb{C}^*$, satisfying
\begin{equation}\label{eq:theta}
c(qz)=\alpha z^{-n}c(z).
\end{equation}
We note that $V_n(\alpha)$ is a vector space of dimension $n$ if $n\geq 1$ \cite{phdroffelsen}.

 Recalling Definition \ref{def:connectionspace}, note that for any connection matrix $C(z)\in \mathfrak{C}(\lambda,a)$, all of its entries are $q$-theta functions. 
 For $r\in\mathbb{R}_+$, we call
\begin{equation*}
	D_q(r):=\{|q|r\leq|z|<r\},
\end{equation*}
a fundamental annulus. $q$-theta functions are, up to scaling, completely determined by the location of their zeros within any fixed fundamental annulus. Indeed, we have the following
\begin{lemma}\label{lem:classificationtheta}
	Let $\alpha\in\mathbb{C}^*$, $n\in\mathbb{N}$ and  $c(z)\not\equiv 0$ be an element of $V_n(\alpha)$. Then, within any fixed fundamental annulus, $c(z)$ has precisely $n$ zeros, counting multiplicity, say $\{a_1,\ldots,a_n\}$, and there exist unique $c\in\mathbb{C}^*$ and $s\in\mathbb{Z}$ such that
	\begin{equation}\label{eq:thetaformula}
	c(z)=c z^s\theta_q(z/a_1,\ldots,z/a_{n}),\quad \alpha=(-1)^n q^s a_1\cdot\ldots\cdot a_n.
	\end{equation}	
	Conversely, for any choice of the parameters equation \eqref{eq:thetaformula} defines an element of $V_n(\alpha)$.
\end{lemma}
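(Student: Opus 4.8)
The plan is to exploit the functional equation $c(qz)=\alpha z^{-n}c(z)$ in two ways: first to pin down the number of zeros of $c$ in a fundamental annulus, and then to reduce the representation to a rigidity statement for nonvanishing $q$-quasiperiodic functions. Since $\alpha z^{-n}$ is analytic and nonvanishing on $\mathbb{C}^*$, the functional equation shows immediately that the zero set of $c$ is invariant under $z\mapsto qz$, with multiplicities preserved; hence the zeros organise into full $q$-spirals, each meeting a fixed fundamental annulus $D_q(r)$ in exactly one point. Because $c$ is analytic on $\mathbb{C}^*$ and $\not\equiv 0$, it has only finitely many zeros in the closed annulus, and I may choose $r$ so that none lie on the bounding circles $|z|=r$ and (by $q$-invariance) $|z|=|q|r$.

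First I would count the zeros by the argument principle applied to $\partial D_q(r)$. Writing the count as the difference of the two circle integrals of $c'/c$ and substituting $z=qw$ in the inner integral, the relation $\tfrac{d}{dw}\log c(qw)=-n/w+c'(w)/c(w)$ (obtained by differentiating the logarithm of the functional equation) produces an extra term $-n\cdot 2\pi i$, while the two $\oint c'/c$ contributions cancel. This yields exactly $n$ zeros, counted with multiplicity; call them $a_1,\dots,a_n\in D_q(r)$.

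Next I would form $P(z)=\theta_q(z/a_1,\dots,z/a_n)$ and consider $R(z)=c(z)/P(z)$. Using $\theta_q(qw)=-w^{-1}\theta_q(w)$ one checks $P(qz)=(-1)^n(a_1\cdots a_n)z^{-n}P(z)$, and since $\theta_q(z/a_j)$ has simple zeros exactly on $a_jq^{\mathbb{Z}}$, the zeros of $P$ (with multiplicity) coincide with those of $c$. Hence $R$ extends to an analytic, nonvanishing function on all of $\mathbb{C}^*$, satisfying $R(qz)=\mu R(z)$ with $\mu=\alpha/\big((-1)^n a_1\cdots a_n\big)$. The crucial step is then to show such an $R$ must be a constant times an integer power of $z$. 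I would write $R(z)=z^{s}e^{g(z)}$, where $s$ is the winding number $\tfrac{1}{2\pi i}\oint_{|z|=r}R'/R\,dz$ and $g$ is single-valued analytic on $\mathbb{C}^*$ (the factor $z^{-s}R$ having zero winding about the origin); substituting into $R(qz)=\mu R(z)$ gives $g(qz)-g(z)=\log(\mu/q^{s})$, a constant. Expanding $g$ in a Laurent series and using $|q|<1$ (so $q^k\neq1$ for $k\neq0$) forces every nonconstant coefficient of $g$ to vanish, whence $R(z)=c\,z^{s}$ and simultaneously $\mu=q^{s}$. Unwinding $\mu=q^s$ gives $\alpha=(-1)^nq^sa_1\cdots a_n$ together with the representation $c(z)=c\,z^{s}\theta_q(z/a_1,\dots,z/a_n)$.

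Uniqueness of $(c,s)$ for the fixed annulus is then immediate: the $a_j$ are determined as the zeros of $c$ in $D_q(r)$, so $c\,z^{s}=R(z)$ is determined, and comparing two monomials forces equal exponents and coefficients. The converse is a direct computation: the displayed functional-equation calculation for $P$ shows that any $c\,z^{s}\theta_q(z/a_1,\dots,z/a_n)$ lies in $V_n(\alpha)$ with $\alpha=(-1)^nq^sa_1\cdots a_n$. I expect the main obstacle to be the rigidity step, namely proving that a nonvanishing $q$-quasiperiodic analytic function on $\mathbb{C}^*$ is forced to be a monomial and that its multiplier is automatically an integer power of $q$; the argument-principle zero count is the other delicate point, chiefly in handling the boundary circles and the half-open annulus correctly.
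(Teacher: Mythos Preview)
Your argument is correct and complete: the argument-principle count of zeros in a fundamental annulus, the division by the product of $q$-theta factors to produce a nonvanishing $q$-quasiperiodic function $R$, and the Laurent-series rigidity step forcing $R(z)=cz^s$ and $\mu=q^s$ are all sound. The only small remark is that, having chosen $r$ with no zeros on $|z|=r$ to apply the argument principle, you should note (as you implicitly do) that the zero count in \emph{any} half-open annulus $D_q(r')$ is then still $n$, since each $q$-spiral of zeros meets every such annulus exactly once.

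As for comparison with the paper: the paper does not actually supply a proof of this lemma, but merely refers the reader to \cite{phdroffelsen}. Your self-contained argument is the standard one and is precisely what is needed here; in particular the rigidity step you flagged as potentially delicate is handled cleanly by your winding-number plus Laurent-expansion reasoning.
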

\begin{proof}
	See for instance \cite{phdroffelsen}.
\end{proof}

\begin{proof}[Proof of Theorem \ref{thm:mainresult3}]
We first prove that the mapping $\rho$ is injective and then prove that its range equals the algebraic surface $\mathcal{P}(\lambda_0,a)$.

Let $M=[C],\widetilde{M}=[\widetilde{C}]\in \mathcal{M}_c(\lambda_0,a)$ and suppose that corresponding coordinates $\rho_{1,2,3}$ and $\widetilde{\rho}_{1,2,3}$ are equal. Set $D(z)=C(z)^{-1}\widetilde{C}(z)$, then $D(z)$ is a meromorphic function on $\mathbb{C}^*$ satisfying
\begin{equation}\label{eq:qR}
D(qz)=\lambda_0^{\sigma_3} D(z)\lambda_0^{-\sigma_3}.
\end{equation}
We know that $D(z)$ is analytic away from the $q$-spirals $\pm q^\mathbb{Z}x_k$, $1\leq k\leq 3$. Let $1\leq k\leq 3$, then $\pi(\widetilde{C}(x_k))=\pi(C(x_k))$ and thus $\pi(\widetilde{C}(q^mx_k))=\pi(C(q^mx_k))$ for all $m\in\mathbb{Z}$, which implies that $D(z)$ is analytic at the points on the $q$-spiral $+q^\mathbb{Z}x_k$. Furthermore, due to the symmetry \eqref{item:c4} in Definition \ref{def:connectionspace}, we have $D(-z)=\sigma_3 D(z)\sigma_3$ and thus $D(z)$ is also analytic at the points on the $q$-spiral $-q^\mathbb{Z}x_k$. We conclude that $D(z)$ is analytic on $\mathbb{C}^*$. It thus follows immediately from Lemma \ref{lem:classificationtheta} and equation \eqref{eq:qR} that
\begin{equation*}
D(z)\equiv\begin{pmatrix}
d_1 & 0\\
0 & d_2
\end{pmatrix},
\end{equation*}
for some constants $d_1,d_2\in\mathbb{C}^*$, since $\lambda_0^2\notin q^\mathbb{Z}$. Therefore $\widetilde{C}=CD$ and $C$ lie in the same equivalence class in $\mathcal{M}_c(\lambda,a)$, i.e. $\widetilde{M}=M$. It follows that the mapping $\rho$ is injective.

Next, we show that the range of $\rho$ is contained in the algebraic surface $\mathcal{P}(\lambda_0,a)$. Let
$M=[C]\in \mathcal{M}_c(\lambda_0,a)$, then,
because of the symmetry \eqref{item:c3}, $C(z)$ is of the form
\begin{equation}\label{eq:C12}
	C(z)=\begin{pmatrix}
		C_1(z) & C_2(z)\\
		-C_1(-z) & C_2(-z)
	\end{pmatrix}.
\end{equation}
Due to \eqref{item:c1},\eqref{item:c2} and \eqref{item:c4}, $C_1(z)$ and $C_2(z)$ are $q$-theta functions lying in the respective spaces
\begin{align*}
U&:=V_3((qa_0^2a_2)^{-1}\lambda_0^{-1}),\\
V&:=V_3((qa_0^2a_2)^{-1}\lambda_0^{+1}),
\end{align*}
 satisfying the identity
\begin{align}\label{eq:Cdet}
&C_1(z)C_2(-z)+C_1(-z)C_2(z)=cw(z),\\
 &w(z):=\theta_q(+z/x_1,-z/x_1,+z/x_2,-z/x_2,+z/x_3,-z/x_3).\nonumber
\end{align}
for some $c\in\mathbb{C}^*$.\\
To proceed, we fix explicit bases of $U$ and $V$. Define
\begin{align*}
u_1(z)&=\theta_q(z/x_2,z/x_3,-z/x_1\lambda_0),& v_1(z)&=\theta_q(z/x_2,z/x_3,-z/x_1\lambda_0^{-1}),\\
u_2(z)&=\theta_q(z/x_1,z/x_3,-z/x_2\lambda_0),& v_2(z)&=\theta_q(z/x_1,z/x_3,-z/x_2\lambda_0^{-1}),\\
u_3(z)&=\theta_q(z/x_1,z/x_2,-z/x_3\lambda_0),& v_3(z)&=\theta_q(z/x_1,z/x_2,-z/x_3\lambda_0^{-1}).
\end{align*}
Then $\{u_1,u_2,u_3\}$ is a basis of $U$ and $\{v_1,v_2,v_3\}$ is a basis of $V$. We have chosen these bases such that $u_k(x_l)=v_k(x_l)=0$ if $k\neq l$ for $1\leq k,l\leq 3$.

Now, let $\alpha\in\mathbb{C}^3$ and $\beta\in\mathbb{C}^3$ be such that
\begin{align}
C_1(z)&=\alpha_1u_1(z)+\alpha_2u_2(z)+\alpha_3u_3(z),\label{eq:alpha}\\
C_2(z)&=\beta_1v_1(z)+\beta_2v_2(z)+\beta_3v_3(z).\label{eq:beta}
\end{align}

We now proceed to show that $\rho=\rho(M)$ is an element of the algebraic surface $\mathcal{P}(\lambda_0,a)$.
We use the standard notation $p=[p^x:p^y]$ for elements $p\in \mathbb{P}^1$ and accordingly write  $\rho_k=[\rho_k^x:\rho_k^y]$ for $1\leq k\leq 3$.
Let $1\leq k \leq 3$, then $\pi(C(x_k))=\rho_k$ implies $\rho_k^yC_2(x_k)=\rho_k^xC_2(-x_k)$ and thus
\begin{equation}\label{eq:betahomogen}
\rho_k^yv_k(x_k)\beta_k=\rho_k^x(\beta_1v_1(-x_k)+\beta_2v_2(-x_k)+\beta_3v_3(-x_k)),
\end{equation}
for $k\in \{1,2,3\}$. This is a homogeneous linear system in $\beta_{1,2,3}$. Since $\beta$ is nonzero, this implies
\begin{equation}\label{eq:algebraicconstraint}
\Delta_1:=\begin{vmatrix}
\rho_1^x v_1(-x_1)-\rho_1^y v_1(x_1) & \rho_1^x v_2(-x_1) & \rho_1^x v_3(-x_1)\\
\rho_2^x v_1(-x_2) & \rho_2^x v_2(-x_2)-\rho_2^y v_2(x_2) & \rho_2^x v_3(-x_2)\\
\rho_3^x v_1(-x_3) & \rho_3^x v_2(-x_3) & \rho_3^x v_3(-x_3)-\rho_3^y v_3(x_3)
\end{vmatrix}=0.
\end{equation}
 This equation is precisely
 \begin{equation*}
     T_{hom}(\rho_1^x,\rho_1^y,\rho_2^x,\rho_2^y,\rho_3^x,\rho_3^y;\lambda_0,a)=0,
 \end{equation*}
after some simplification. (Details can be found in Appendix \ref{append:derivationcubic}; see equations \eqref{eq:determinantequality} and \eqref{eq:determinantsimple}.)
It follows that indeed the range of $\rho$ is contained in $\mathcal{P}(\lambda_0,a)$.

To complete the proof, it remains to be shown that any element of $\mathcal P(\lambda_0,a)$ can be realised as the coordinates of an equivalence class $M=[C(z)]\in \mathcal{M}_c(\lambda_0,a)$.

Take any $(\rho_1,\rho_2,\rho_3)\in \mathcal{P}(\lambda_0,a)$. Then we know that the determinant in \eqref{eq:algebraicconstraint} vanishes. Thus there exists a nonzero solution $\beta\in \mathbb{C}^3$ of the homogeneous linear system \eqref{eq:betahomogen}.

We define $C_2(z)$ by equation \eqref{eq:beta},
then $C_2\in V$ and 
\begin{equation}\label{eq:C2cond}
\rho_k^yC_2(x_k)=\rho_k^xC_2(-x_k),
\end{equation}
for $1\leq k\leq 3$. Similarly, to ensure that $\pi(C(x_k))=\rho_k$, we must have 
\begin{equation}\label{eq:C1cond}
\rho_k^yC_1(x_k)=-\rho_k^xC_1(-x_k),
\end{equation}
which, using the notation in \eqref{eq:alpha},
is equivalent to
\begin{equation}\label{eq:alphaeq}
-\rho_k^yu_k(x_k)\alpha_k=\rho_k^x(\alpha_1u_1(-x_k)+\alpha_2u_2(-x_k)+\alpha_3u_3(-x_k)),
\end{equation}
for $k\in \{1,2,3\}$.
This homogeneous linear system has a nonzero solution $\alpha\in\mathbb{C}^3$ if and only if the determinant
\begin{equation*}\label{eq:detalpha}
\Delta_2=\begin{vmatrix}
	\rho_1^x u_1(-x_1)+\rho_1^y u_1(x_1) & \rho_1^x u_2(-x_1) & \rho_1^x u_3(-x_1)\\
	\rho_2^x u_1(-x_2) & \rho_2^x u_2(-x_2)+\rho_2^y u_2(x_2) & \rho_2^x u_3(-x_2)\\
	\rho_3^x u_1(-x_3) & \rho_3^x u_2(-x_3) & \rho_3^x u_3(-x_3)+\rho_3^y u_3(x_3)
\end{vmatrix}
\end{equation*}
vanishes. Direct computation gives that this determinant equals $\Delta_2=-\lambda_0^3\Delta_1$, where $\Delta_1$ is the determinant in \eqref{eq:algebraicconstraint}; see equation \eqref{eq:determinantequality} in Appendix \ref{append:derivationcubic}. Since $\Delta_1=0$, the linear system \eqref{eq:alphaeq} has a nonzero solution $\alpha\in\mathbb{C}^3$, and with this choice of $\alpha$ in \eqref{eq:alpha}, we know that $C_1\in U$ satisfies equation \eqref{eq:C1cond}.

Define the matrix function $C(z)$ by equation \eqref{eq:C12}, then, by construction, it satisfies properties \eqref{item:c1}, \eqref{item:c2} and \eqref{item:c4}. It only remains to be checked that equation \eqref{eq:Cdet} holds true. To this end, let us write
\begin{equation*}
W(z):=|C(z)|=C_1(z)C_2(-z)+C_1(-z)C_2(z).
\end{equation*}
Then $W(z)$, just like $w(z)$, is a $q$-theta function which lies in $V_6(-(qa_0^2a_2)^{-2})$. Thus, to show that equation \eqref{eq:Cdet} holds, all we have to do is check that $W(z)$ and $w(z)$ have the same zeros, due to Lemma \ref{lem:classificationtheta}. Namely we have to check that $W(\pm x_k)=0$ for $1\leq k\leq 3$. However the latter follows trivially from equations \eqref{eq:C1cond} and \eqref{eq:C2cond}. We conclude that $C(z)\in\mathfrak{C}(\lambda_0,a)$, and $\pi(C(x_k))=\rho_k$, $1\leq k\leq3$, due to equations \eqref{eq:C1cond} and \eqref{eq:C2cond}. The theorem follows.
\end{proof}

\subsection{Real-valued Transcendents}\label{subsec:real}
In this section we characterise those monodromy data which yield real solutions. Take $a\in\mathbb{R}^3$ and $\lambda_0\in \mathbb{R}^*$  such that $q=a_0a_1a_2\in (-1,1)\setminus\{0\}$ and the non-resonant conditions \eqref{eq:conditionnonresonant} are satisfied.
Then a $q\Pfour(\lambda_0,a)$ transcendent $f$ is real-valued, if and only if its associated $q\Pfour^{\text{mod}}(\lambda_0,a)$ transcendent $g$ (given by \eqref{eq:aginf}) is real-valued.

If $g$ is real-valued, then we can choose a solution $u$ of the auxiliary equation which is purely imaginary.
Then the corresponding matrix
\begin{equation}
A^{\B{m}}(z):=\mathcal{A}(z;q^m\lambda_0,g^{\B{m}},u_m),
\end{equation}
satisfies
\begin{equation}\label{eq:Areal}
\overline{A^{\B{m}}(\overline{z})}=-A^{\B{m}}(z).
\end{equation}
It follows that the fundamental solutions $\Phi_0^{\B{m}}(z)$ and $\Phi_\infty^{\B{m}}(z)$, defined in Lemmas \ref{lem:solzero} and \ref{lem:solinf}, with $d\in\mathbb{R}$, are real analytic.
Thus the corresponding connection matrix $C(z)$ is real analytic, that is
\begin{equation}\label{eq:Creal}
\overline{C(\overline{z})}=C(z).
\end{equation}

Conversely, suppose $C(z)\in \mathfrak{C}(\lambda,a)$ is real analytic. Choose an admissible Jordan curve $\gamma$  such that $\overline{\gamma}=\gamma$,
then the solution $Y^{\B{m}}(z)$ of $\text{RHP}^{\B{m}}(\gamma,C)$ satisfies $\overline{Y^{\B{m}}(\overline{z})}=-Y^{\B{m}}(z)$ for $z\in\mathbb{C}\setminus \gamma$. Therefore $A^{\B{m}}(z)$ satisfies \eqref{eq:Areal} from which it follows that $g$ and $f$ are real-valued.

We conclude that a monodromy datum $M\in\mathcal{M}_c(\lambda,a)$ corresponds to a real solution $f$, via the Riemann-Hilbert mapping in Theorem \ref{thm:rhmap}, if and only if there exists a representative $C(z)\in M$ which is real analytic. In turn it is easy to see that the latter holds true if and only if $\rho(M)\in \mathbb P^1_{\mathbb R}\times \mathbb P^1_{\mathbb R}\times \mathbb P^1_{\mathbb R}$. Indeed, the forward implication is trivial and its converse follows from the fact that, if $\rho\in \mathbb P^1_{\mathbb R}\times \mathbb P^1_{\mathbb R}\times \mathbb P^1_{\mathbb R}$, then the homogeneous linear systems \eqref{eq:alphaeq} and \eqref{eq:betahomogen} have real nonzero solutions $\alpha\in\mathbb{R}^3$ and $\beta\in\mathbb{R}^3$ respectively. Remark \ref{rem:real} follows.

\section{Conclusion}\label{s:con}
In this paper, we derived a Riemann-Hilbert representation for the general solution of $q\Pfour$ in the non-resonant parameter case. Most importantly, we showed the bijection of the mapping that associates to any $q\Pfour$ transcendent a corresponding equivalence class of connection matrices in the monodromy surface. Furthermore, we constructed an explicit algebraic surface, which is the moduli space of the monodromy surface and thus of $q\Pfour$.

This lays the groundwork for the global asymptotic analysis of solutions of $q\Pfour$. In particular, by studying the Riemann-Hilbert problem $\text{RHP}^{\B{m}}(\gamma,C)$ in the limits $m\rightarrow +\infty$ and $m\rightarrow -\infty$, we plan to derive corresponding asymptotic behaviours for solutions of $q\Pfour$ and associated connection formulae, analogous to the differential theory \cites{fokas,kapaev1996}, in a forthcoming paper.

An interesting question concerns how the Riemann-Hilbert theory developed here will extend to the resonant regime, previously studied by Adams \cite{adams1928}. We intend to use our approach for this case to study special solutions, as has been done for the differential fourth Painlev\'e equation \cites{kapaev1998,buckinghampiv,masoeroroffelsen,masoeroroffelsen2}.

Finally, an intriguing question remains open on whether our Riemann-Hilbert representation of $q\Pfour$ can be used to derive limiting results in the continuum limit $q\rightarrow 1$.

\appendix
\section{A birational transformation and  singularities} \label{appendix:singularityconfinement}
Define
\begin{subequations}\label{eq:aginf}
	\begin{align}
	g_1&=qf_2^{-1}+a_0a_2f_1^{-1}f_2^{-1}+a_0f_1^{-1},\\
	g_2&=qf_2+a_0a_2f_1f_2+a_0f_1,\\
	g_3&=qa_0a_2f_1+qa_0f_1f_2^{-1}+a_0^2a_2f_2^{-1},\\
	g_4&=qa_0a_2f_1^{-1}+qa_0f_2f_1^{-1}+a_0^2a_2f_2,
	\end{align}
\end{subequations}
then $g=(g_1,g_2,g_3,g_4)$ satisfies the algebraic equations \eqref{eq:algebraic}
and the rational inverse of \eqref{eq:aginf} is given by
\begin{subequations}\label{eq:afing}
	\begin{align}
	f_1&=\frac{a_0^2+g_3}{a_0(qa_2+g_1)}=\frac{a_0(qa_2+g_2)}{a_0^2+g_4},\\
	f_2&=\frac{q^2+g_4}{a_0a_2(a_0+a_1g_1)}=\frac{a_0a_2(a_0+a_1g_2)}{q^2+g_3}.
	\end{align}
\end{subequations}
We denote the algebraic surface obtained by cutting $\{g\in\mathbb{C}^4\}$ with respect to \eqref{eq:algebraic} by $\mathcal{G}(a)$.
The $f$ and $g$ variables are bi-rationally equivalent, and in particular $q\Pfour(a)$ induces the time-evolution given by Equation \eqref{eq:aqp4} on $\mathcal{G}(a)$.

 While the forward iteration of Equation \eqref{eq:aqp4} is singular on $\mathcal{G}(a)$, only when $g_3=0$, we show its continuation is possible by means of singularity confinement. It is also possible to regularize these singularities by lifting to the initial value space $(A_2+A_1)^{(1)}$ following Sakai \cite{sakai2001}. 

Namely, if $g_3=0$, then $\overline{g}$ and $\overline{\overline{g}}$ do not exist whereas $\overline{\overline{\overline{g}}}$ does and is given explicitly by
\begin{subequations}\label{eq:gcontinuation}
\begin{align}
\overline{\overline{\overline{g}}}_1=&\frac{(1-q^3t^2)g_1+(1-q^2)g_2}{1-q^5t^2},\\
\overline{\overline{\overline{g}}}_2=&\frac{qt^2(1-q^2)g_1+(1-q^3t^2)g_2}{1-q^5t^2},\\
\overline{\overline{\overline{g}}}_3=&g_4+(q^{-2}-1)g_1g_2+q^{-4}(q^2-1)g_1^2+\frac{(1-q^2)(g_1-q^2g_2)((2-q^2)g_1-q^2g_2)}{q^4(1-q^5t^2)}\\
\nonumber &+\frac{(1-q^2)^2(g_1-q^2g_2)^2}{q^4(1-q^5t^2)^2},\\
\overline{\overline{\overline{g}}}_4=&0.
\end{align}
\end{subequations}
Similarly the inverse time-evolution is singular only when $g_4=0$, in which case the first and second inverse iterates do not exist, whereas the third one does. We say that $g(t)$ is singular at $t_0$ when it does not exist at $t=t_0$.
The continuation formulae \eqref{eq:gcontinuation} of $q\Pfour^{\text{mod}}(a)$ can be obtained by means of direct calculation.

Considering the forward iteration, $\overline{g}$ is ill-defined if and only if $g_3=0$. So let us take any $g^*\in\mathcal{G}(a)$ with $g_3^*=0$ and perturb around it within $\mathcal{G}(a)$, setting
\begin{equation*}
g_1=g_1^*+\mathcal{O}(\epsilon),\quad g_2=g_2^*+\mathcal{O}(\epsilon),\quad g_3=\epsilon+\mathcal{O}(\epsilon^2),\quad g_4=g_4^*+\mathcal{O}(\epsilon),
\end{equation*}
in particular $g=g^*+\mathcal{O}(\epsilon)$, as $\epsilon\rightarrow 0$. Then direct calculation gives
\begin{align*}
\overline{g}_1&=a_0^2a_2(qt^2-1)t^{-2}\epsilon^{-1}+\mathcal{O}(1),\\
\overline{g}_2&=-a_0^2a_2(qt^2-1)\epsilon^{-1}+\mathcal{O}(1),\\
\overline{g}_3&=a_0^4a_2^2q(qt^2-1)^2t^{-2}\epsilon^{-2}+\mathcal{O}(\epsilon^{-1}),\\
\overline{g}_4&=\mathcal{O}(\epsilon),
\end{align*}
which diverges, as $\epsilon\rightarrow 0$. Similarly
\begin{align*}
\overline{\overline{g}}_1&=-a_0^2a_2(qt^2-1)q^{-2}t^{-2}\epsilon^{-1}+\mathcal{O}(1),\\
\overline{\overline{g}}_2&=a_0^2a_2q^3(qt^2-1)\epsilon^{-1}+\mathcal{O}(1),\\
\overline{\overline{g}}_3&=\mathcal{O}(\epsilon),\\
\overline{\overline{g}}_4&=-a_0^4a_2^2q(qt^2-1)^2t^{-2}\epsilon^{-2}+\mathcal{O}(\epsilon^{-1}),
\end{align*}
which diverges, as $\epsilon\rightarrow 0$. However, upon calculating the third iteration, we find
	\begin{align*}
	\overline{\overline{\overline{g}}}_1=&\frac{(1-q^3t^2)g_1^*+(1-q^2)g_2^*}{1-q^5t^2}+\mathcal{O}(\epsilon),\\
	\overline{\overline{\overline{g}}}_2=&\frac{qt^2(1-q^2)g_1^*+(1-q^3t^2)g_2^*}{1-q^5t^2}+\mathcal{O}(\epsilon),\\
	\overline{\overline{\overline{g}}}_3=&g_4^*+(q^{-2}-1)g_1^*g_2^*+q^{-4}(q^2-1)(g_1^*)^2+\frac{(1-q^2)(g_1^*-q^2g_2^*)((2-q^2)g_1^*-q^2g_2^*)}{q^4(1-q^5t^2)}\\
	&+\frac{(1-q^2)^2(g_1^*-q^2g_2^*)^2}{q^4(1-q^5t^2)^2}+\mathcal{O}(\epsilon),\\
	\overline{\overline{\overline{g}}}_4=&\mathcal{O}(\epsilon).
	\end{align*}
which converges to \eqref{eq:gcontinuation}, as $\epsilon\rightarrow 0$. We conclude that the singularity is confined within three iterations. The singularity analysis of the inverse time evolution follows by similar arguments.

\section{Cubic surface calculations}
\label{append:derivationcubic}
Recall the definition of the determinants $\Delta_1$ and $\Delta_2$ in equations \eqref{eq:algebraicconstraint} and \eqref{eq:detalpha}. Each of these determinants defines a cubic equation in the variables $\rho_{1,2,3}^{x,y}$.
The aim of this section is to show that these cubics are proportional to each other
\begin{equation}\label{eq:determinantequality}
    \Delta_2=-\lambda_0^3 \Delta_1,
\end{equation}
and that they are proportional to the cubic defined in equation \eqref{eq:thom},
\begin{equation}\label{eq:determinantsimple}
    \Delta_2=\frac{1}{a_1^2a_2^2}\theta_q(a_0,a_1,a_2)\theta_q(-\lambda_0)^2T_{hom}(\rho_1^x,\rho_1^y,\rho_2^x,\rho_2^y,\rho_3^x,\rho_3^y;\lambda_0,a).
\end{equation}

Firstly, we derive equation \eqref{eq:determinantequality}. To this end, let us note that we may alternatively write the functions $v_{1,2,3}$ as
\begin{align*}
v_1(z)&=\frac{z}{\lambda_0x_1}\theta_q(z/x_2,z/x_3,-x_1/z\lambda_0),\\ v_2(z)&=\frac{z}{\lambda_0x_2}\theta_q(z/x_1,z/x_3,-x_2/z\lambda_0),\\
 v_3(z)&=\frac{z}{\lambda_0x_3}\theta_q(z/x_1,z/x_2,-x_3/z\lambda_0),
\end{align*}
due to the symmetries \eqref{eq:thetasym} of the $q$-theta function.

Using these alternative expressions for $v_{1,2,3}$ and expanding the difference of both sides of equation \eqref{eq:determinantequality}, i.e.
\begin{equation*}
    \Delta=\Delta_2+\lambda_0^3 \Delta_1,
\end{equation*}
in terms of the variables $\rho_{1,2,3}^{x,y}$, we find that all terms cancel except for
\begin{equation*}
    \Delta = \rho_1^x\rho_2^x\rho_3^x(U_1-U_2)(V_1-V_2),
\end{equation*}
where
\begin{align*}
    U_1&=\theta_q\left(-\frac{x_1}{x_2},-\frac{x_2}{x_3},-\frac{x_3}{x_1}\right),& V_1&=\theta_q\left(\frac{x_1}{x_2}\lambda,\frac{x_2}{x_3}\lambda,\frac{x_3}{x_1}\lambda\right),\\
    U_2&=\theta_q\left(-\frac{x_2}{x_1},-\frac{x_3}{x_2},-\frac{x_1}{x_3}\right),& V_2&=\theta_q\left(\frac{x_2}{x_1}\lambda,\frac{x_3}{x_2}\lambda,\frac{x_1}{x_3}\lambda\right).
\end{align*}
However, it follows directly from the symmetries \eqref{eq:thetasym} of the $q$-theta function that $U_1=U_2$ and thus $\Delta=0$, which proves equality \eqref{eq:determinantequality}.

Next, we derive equation \eqref{eq:determinantsimple}. To this end, we first expand $\Delta_2$ in terms of the variables $\rho_{1,2,3}^{x,y}$, yielding the cubic
\begin{align}
    \Delta_2=&\delta_{123}\rho_1^x\rho_2^x\rho_3^x
    +\delta_{23}\rho_1^y\rho_2^x\rho_3^x
    +\delta_{13}\rho_1^x\rho_2^y\rho_3^x
    +\delta_{12}\rho_1^x\rho_2^x\rho_3^y\label{eq:delta2}\\
    &+\delta_{1}\rho_1^x\rho_2^y\rho_3^y
    +\delta_{2}\rho_1^y\rho_2^x\rho_3^y
    +\delta_{3}\rho_1^y\rho_2^y\rho_3^x
    +\delta_{0}\rho_1^y\rho_2^y\rho_3^y,\nonumber
\end{align}
with coefficients $\delta_0,\delta_1,\delta_2$ and $\delta_3$ given by
\begin{align*}
    \delta_0&=\theta_q\left(+\frac{x_1}{x_2},+\frac{x_2}{x_1},+\frac{x_1}{x_3},+\frac{x_3}{x_1},+\frac{x_2}{x_3},+\frac{x_3}{x_2}\right)\theta_q(-\lambda_0)^3,\\
    \delta_1&=\theta_q\left(-\frac{x_1}{x_2},+\frac{x_2}{x_1},-\frac{x_1}{x_3},+\frac{x_3}{x_1},+\frac{x_2}{x_3},+\frac{x_3}{x_2}\right)\theta_q(-\lambda_0)^2\theta_q(\lambda_0),\\
    \delta_2&=\theta_q\left(+\frac{x_1}{x_2},-\frac{x_2}{x_1},+\frac{x_1}{x_3},+\frac{x_3}{x_1},-\frac{x_2}{x_3},+\frac{x_3}{x_2}\right)\theta_q(-\lambda_0)^2\theta_q(\lambda_0),\\
    \delta_3&=\theta_q\left(+\frac{x_1}{x_2},+\frac{x_2}{x_1},+\frac{x_1}{x_3},-\frac{x_3}{x_1},+\frac{x_2}{x_3},-\frac{x_3}{x_2}\right)\theta_q(-\lambda_0)^2\theta_q(\lambda_0),
\end{align*}
$\delta_{23},\delta_{13}$ and $\delta_{12}$ given by
\begin{align*}
    \delta_{23}&=\theta_q\left(+\frac{x_1}{x_2},-\frac{x_2}{x_1},+\frac{x_1}{x_3},-\frac{x_3}{x_1},-\lambda_0\right)h(\lambda_0;x_2,x_3),\\
    \delta_{13}&=\theta_q\left(+\frac{x_1}{x_2},-\frac{x_2}{x_1},+\frac{x_2}{x_3},-\frac{x_3}{x_2},-\lambda_0\right)h(\lambda_0;x_1,x_3),\\
    \delta_{12}&=\theta_q\left(-\frac{x_1}{x_3},+\frac{x_3}{x_1},+\frac{x_3}{x_2},-\frac{x_2}{x_3},-\lambda_0\right)h(\lambda_0;x_1,x_2),
\end{align*}
where
\begin{equation*}
    h(\lambda;z_1,z_2):=\theta_q(-\frac{z_1}{z_2},-\frac{z_2}{z_1})\theta_q(\lambda)^2-\theta_q(-1)^2\theta_q(-\frac{z_1}{z_2}\lambda,-\frac{z_2}{z_1}\lambda),
\end{equation*}
and the coefficient $\delta_{123}$ given by
\begin{align*}
    \delta_{123}=&H(\lambda_0),
    \end{align*}
    where
 \begin{align*}
   H(\lambda):=&
    +\theta_q\left(-\frac{x_1}{x_2},-\frac{x_2}{x_1},-\frac{x_1}{x_3},-\frac{x_3}{x_1},-\frac{x_2}{x_3},-\frac{x_3}{x_2}\right)\theta_q(\lambda)^3\\
    &-\theta_q\left(\lambda\hspace{0.7mm}\frac{x_1}{x_2},\lambda\hspace{0.7mm}\frac{x_2}{x_1},-\frac{x_1}{x_3},-\frac{x_3}{x_1},-\frac{x_2}{x_3},-\frac{x_3}{x_2}\right)\theta_q(\lambda)\theta_q(-1)^2\\
   & -\theta_q\left(-\frac{x_1}{x_2},-\frac{x_2}{x_1},\lambda\hspace{0.7mm}\frac{x_1}{x_3},\lambda\hspace{0.7mm}\frac{x_3}{x_1},-\frac{x_2}{x_3},-\frac{x_3}{x_2}\right)\theta_q(\lambda)\theta_q(-1)^2\\
   & -\theta_q\left(-\frac{x_1}{x_2},-\frac{x_2}{x_1},-\frac{x_1}{x_3},-\frac{x_3}{x_1},\lambda\hspace{0.7mm}\frac{x_2}{x_3},\lambda\hspace{0.7mm}\frac{x_3}{x_2}\right)\theta_q(\lambda)\theta_q(-1)^2\\
   & +\theta_q\left(\lambda\hspace{0.7mm}\frac{x_1}{x_2},-\frac{x_2}{x_1},-\frac{x_1}{x_3},\lambda\hspace{0.7mm}\frac{x_3}{x_1},\lambda\hspace{0.7mm}\frac{x_2}{x_3},-\frac{x_3}{x_2}\right)\theta_q(-1)^3\\
  &  +\theta_q\left(-\frac{x_1}{x_2},\lambda\hspace{0.7mm}\frac{x_2}{x_1},\lambda\hspace{0.7mm}\frac{x_1}{x_3},-\frac{x_3}{x_1},-\frac{x_2}{x_3},\lambda\hspace{0.7mm}\frac{x_3}{x_2}\right)\theta_q(-1)^3.
\end{align*}

In order to derive equation \eqref{eq:determinantsimple}, we factorise the functions $h(\lambda;z_1,z_2)$ and $H(\lambda)$ into simple factors of $q$-theta functions. We start with $h(\lambda;z_1,z_2)$. Note that this function is an element of the vector space $V_2(1)$, namely, it is analytic in $\lambda$ on $\mathbb{C}^*$ and satisfies
\begin{equation*}
    h(q\lambda;z_1,z_2)=\lambda^{-2}h(\lambda;z_1,z_2).
\end{equation*}
Furthermore, it is easy to see that $h(-1;z_1,z_2)=0$, hence, by Lemma \ref{lem:classificationtheta},
\begin{equation}\label{eq:hexplicitpre}
    h(\lambda;z_1,z_2)=c(z_1,z_2)\theta_q(-\lambda)^2,
\end{equation}
for some yet to be determined coefficient $c(z_1,z_2)$ which is independent of $\lambda$. Then, by  evaluating equation \eqref{eq:hexplicitpre} at $\lambda=1$, we obtain
\begin{equation*}
    c(z_1,z_2)=-\theta_q\left(\frac{z_1}{z_2},\frac{z_2}{z_1}\right)
\end{equation*}
and thus
\begin{equation}\label{eq:hexplicit}
    h(\lambda;z_1,z_2)=-\theta_q\left(\frac{z_1}{z_2},\frac{z_2}{z_1}\right)\theta_q(-\lambda)^2.
\end{equation}

We follow the same procedure to compute a factorisation of $H(\lambda)$. We note that it is an element of the vector space $V_3(-1)$, i.e. it is analytic on $\mathbb{C}^*$ and
\begin{equation*}
    H(q\lambda)=-\lambda^{-3}H(\lambda).
\end{equation*}
Furthermore, it is easy to check by direct calculation that $H(1)=H(-1)=0$, and hence
\begin{equation}\label{eq:Hexplicitpre}
    H(\lambda)=c_H\theta_q(\lambda)\theta_q(-\lambda)^2,
\end{equation}
for some yet to be determined coefficient $c_H$, which is independent of $\lambda$. To determine this constant, we evaluate both sides of equation \eqref{eq:Hexplicitpre} at $\lambda=x_1$. By means of analogous calculations as above, we can simplify $H(x_1)$ to obtain
\begin{equation*}
    H(x_1)=\frac{\theta_q(x_1)^3\theta_q(-x_1)^2\theta_q(x_2)^2\theta_q(\frac{x_2}{x_1})^2}{q^2x_1^2x_2^4},
\end{equation*}
leading to
\begin{equation}\label{eq:Hexplicit}
    H(\lambda)=\frac{\theta_q(x_1)^2\theta_q(x_2)^2\theta_q(\frac{x_2}{x_1})^2}{q^2x_1^2x_2^4}\theta_q(\lambda)\theta_q(-\lambda)^2.
\end{equation}
Now that we have explicit factorisations of the functions $h(\lambda;z_1,z_2)$ and $H(\lambda)$ (respectively given by equations \eqref{eq:hexplicit} and \eqref{eq:Hexplicit}), and thus factorised expressions for the coefficients of the cubic \eqref{eq:delta2}, Equation \eqref{eq:determinantsimple} follows immediately, upon using the symmetries \eqref{eq:thetasym} of the $q$-theta function.

\begin{bibdiv}
  \begin{biblist}
\bib{adams1928}{article}{
 author = {Adams, C.R.},
 journal = {Annals of Mathematics},
 number = {1/4},
 pages = {195-205},
 title = {On the {L}inear {O}rdinary q-{D}ifference {E}quation},
 volume = {30},
 year = {1928}
}
    
\bib{birkhoffgeneralized1913}{article}{
  author={Birkhoff, G.D.},
  title={The generalized {R}iemann problem for linear differential
    equations and the allied problems for linear difference and
    $q$-difference equations},
  journal= {Proceedings of the American Academy of Arts and Sciences},
  volume={49},
  pages={521--568},
  year={1913}
  }

  \bib{Borodin2005}{article}{
    author = {Borodin, A.},
    title = {Isomonodromy transformations of linear systems of difference equations},
    Journal = {Ann. Math. (2)},
    Volume = {160},
    Pages = {1141--1182},
    Year = {2005}
  }
  
  \bib{buckinghampiv}{article}{
      author = {Buckingham, R.},
    title = {Large-Degree Asymptotics of Rational Painlevé-IV Functions Associated to Generalized Hermite Polynomials},
    journal = {International Mathematics Research Notices},
    year = {2018},
}

\bib{carmichael1912}{article}{
  author= {Carmichael,R.D.},
  title={The general theory of linear $q$-difference equations},
  journal={ American Journal of Mathematics},
  volume={34},
  pages={147--168},
  year={1912}
  }

  \bib{chekhovmazzoccorubtsov2020}{article}{
  author={Chekhov, L.},
  author={Mazzocco, M.},
  author={V. Rubtsov},
  title={Quantised {P}ainlev\'e monodromy manifolds, {S}klyanin and {C}alabi-{Y}au algebras},
journal={Advances in Mathematics},
volume={376},
pages={107442},
year={2021}
}

  \bib{chekhovmazzoccorubtsov2017}{article}{
  author={Chekhov, L.},
  author={Mazzocco, M.},
  author={V. Rubtsov},
  title={{P}ainlev\'e monodromy manifolds, decorated character varieties, and cluster algebras},
journal={Int. Math. Res. Not. IMRN},
number={(24)},
pages={7639--7691},
year={2017}
}

\bib{deiftorthogonal}{book}{
  author={Deift, P.~A.},
  title={ Orthogonal polynomials and random matrices: a
    {R}iemann-{H}ilbert approach},
  volume={3},
  series={Courant Lecture Notes in Mathematics},
  publisher={New York University, Courant Institute of Mathematical Sciences, New York and American Mathematical Society, Providence, RI},
  year={1999}
  }

\bib{deift93}{article}{
  author= {Deift, P.~A.},
  author= {Zhou, X.},
  title={A steepest descent method for oscillatory {R}iemann-{H}ilbert
    problems. {A}symptotics for the {MK}d{V} equation},
  journal= {Ann. of Math. (2)},
  volume={137},
  pages={295--368},
  year={1993}
}

\bib{fokas}{book}{
  author={Fokas,A.},
  author={Its, A.},
  author={Kapaev, A.},
  author={Novokshenov, A.},
  title={Painlev\'e transcendents, the Riemann-Hilbert approach},
  volume={ 128},
  series={Mathematical Surveys and Monographs},
  publisher={American Mathematical Society},
  year={2006}
  }

\bib{fokas1991discrete}{article}{
  author={Fokas,A.},
 author={Its, A.},
  author={Kitaev, A.},
   title={Discrete {P}ainlev{\'e} equations and their appearance in quantum
  gravity},
  journal={Communications in Mathematical Physics},
  volume={142},
  pages={313--344},
  year={1991}
}

\bib{fokas92}{article}{
  author= {Fokas,A.},
  author= {Mu{\u{g}}an, U.},
  author= {Zhou, X.},
  title= {On the solvability of {P}ainlev\'e {${\rm I},\;{\rm III}$}
          and {${\rm V}$}},
 journal= {Inverse Problems},
 volume={8},
 pages={757--785},
 year={1992}
}

\bib{ggkm67}{article}{
  author={Gardner, C.S.},
  author={Greene, J.~M.},
  author={Kruskal, M.~D.},
  author={Miura, R.~M.},
  title={Method for solving the {K}orteweg-de{V}ries equation},
  journal={Phys. Rev. Lett.},
  volume={19},
  pages={1095--1097},
  year={1967}
}

\bib{IKF95}{article}{
  author={Its, A.R.},
  author={Kitaev,A.V.},
  author={Fokas,A.S.},
  title={Matrix models of two-dimensional quantum gravity and isomonodromic solutions of ``discrete {P}ainlev{\'e}'' equations},
  journal={Journal of Mathematical Sciences},
  volume={73},
  pages={415--429},
  year={1995}
}

\bib{its90}{article}{
  author={Its, A.},
  author={Kitaev, A.},
  author={Fokas,A.},
  title={An isomonodromy approach to the theory of
    two-dimensional quantum
  gravity},
  journal={Uspekhi Mat. Nauk},
  volume={45},
  pages={135--136},
  year={1990}
}

\bib{JMMS}{article}{
  author={Jimbo, M.},
  author={Miwa, T.},
  author={Mori, T.},
  author={Sato, M.},
  title={Density matrix of an impenetrable Bose gas and the fifth {P}ainlev\'e transcenden},
    journal={Phys. D},
    volume={1},
     pages={80--158},
     year={1980}
   }
   
\bib{jimbomiwaII}{article}{
  author={Jimbo, M.},
  author={Miwa, T.},
  title={Monodromy preserving deformation of
      linear ordinary differential
      equations with rational coefficients. {II}},
    journal={Phys. D},
    volume={2},
     pages={407--448},
     year={1981}
     }

\bib{jimbomiwaIII}{article}{
  author={Jimbo, M.},
  author={Miwa, T.},
  title={Monodromy preserving deformation of
    linear ordinary differential
  equations with rational coefficients. {III}},
    journal={Phys. D},
    volume={4},
    pages={26--46},
    year={1981/82}
    }

\bib{jimbomiwaI}{article}{
  author={Jimbo, M.},
  author={Miwa, T.},
  author={Ueno, K.},
 title={Monodromy preserving deformation of linear ordinary differential
  equations with rational coefficients. {I}. {G}eneral theory and {$\tau
  $}-function},
    journal={Phys. D},
    volume={2},
    pages={306--352},
         year={1981}
     }

  \bib{jimbonagoyasakai}{article}{
  author={Jimbo, M.},
  author={Nagoya, H.},
  author={Sakai, H.},
  title={CFT approach to the q-Painlev\'e VI equation},
  journal={Journal of Integrable Systems},
  volume={2},
  year={2017}
  }

\bib{jimbosakai}{article}{
  author={Jimbo, M.},
  author={Sakai, H.},
  title={A $q$-analogue of the sixth {P}ainlev{\'e} equation},
  journal={ Lett. Math. Phys.},
  volume={38},
  pages={145--154},
  year={1996}
}

\bib{joshicbms2019}{book}{
  author={Joshi, N.},
  title={Discrete Painlev\'e Equations},
  series={ CBMS Regional Conference Series in Mathematics},
  volume={131},
  date={2019},
  publisher={American Mathematical Society},
  address={Providence, Rhode Island}
}

\bib{joshinobu2016}{article}{
  author={Joshi, N.},
  author={Nakazono, N.},
  title={Lax pairs of discrete {P}ainlev\'e equations:
    {$(A_2+A_1)^{(1)}$} case},
  journal={ Proc. Roy Soc. A.},
  volume={472},
  pages={20160696},
  year={2016}
  }

\bib{joshiroffelsen}{article}{
   author={Joshi, N.},
   author={Roffelsen, P.},
   title={Analytic solutions of $q$-$P(A_1)$ near its critical points},
   journal={Nonlinearity},
   volume={29},
   pages={3696},
    year={2016}
    }

\bib{kny:17}{article}{
  author={Kajiwara, K.},
  author={ Noumi, M.},
  author={Yamada, Y.},
  title={Geometric aspects of painlev{\'e} equations},
  journal={J. Phys. A Math. and Theor.l},
  volume={50},
  pages={073001},
  date={2017}
}

\bib{kapaev1996}{webpage}{
  author={Kapaev, A.A.},
  title={Global asymptotics of the fourth {P}ainlev\'e transcendent},
  date={1996},
  url={ftp://www.pdmi.ras.ru/pub/publicat/preprint/1996/06-96.ps.gz}
}

\bib{kapaev1998}{article}{
Author = {A.A. Kapaev},
Title = {Connection formulae for degenerated asymptotic solutions of the fourth Painlev\'e equation},
Year = {1998},
journal = {arXiv:solv-int/9805011[nlin.SI]}
}

\bib{mano2010}{article}{
  author={Mano, T.},
  title={Asymptotic behaviour around a boundary
    point of the $q$-{P}ainlev{\'e} {VI} equation
    and its connection problem},
  journal={Nonlinearity},
  volume={23},
  pages={1585--1608},
  year={2010}
}

\bib{masoeroroffelsen}{article}{
  author = {Masoero, D.},
  author=  {Roffelsen, P.},
    title = {Poles of Painlev\'e IV Rationals and their Distribution},
   JOURNAL = {SIGMA Symmetry Integrability Geom. Methods Appl.},
    VOLUME = {14},
      YEAR = {2018},
     PAGES = {Paper No. 002, 49},
}

\bib{masoeroroffelsen2}{article}{
     author = {Masoero, D.},
  author=  {Roffelsen, P.},
    title = {Roots of generalised Hermite polynomials when both parameters are large},
  journal = {ArXiv e-prints},
   eprint = {1907.08552},
     year = {2019}}

  \bib{murata04}{article}{
  author={Murata, M.},
  title={Lax forms of the $q$-{P}ainlev{\'e} equations},
  journal={J. Phys. A: Math. Theor.},
  volume={42},
  pages={115201 (17pp)},
  year={2009}
  }

\bib{ormerodconnection}{article}{
 author={Ormerod, C.M.},
  author={Witte, N.S.},
  author={Forrester, P.J.},
  title={Connection preserving deformations and q-semi-classical orthogonal polynomials},
  journal={Nonlinearity},
  volume={24},
  pages={2405},
  date={2011}}

  \bib{rains2011}{article}{
    author = {Rains, E.M.},
     title = {An isomonodromy interpretation of the hypergeometric solution
              of the elliptic {P}ainlev\'e equation (and generalizations)},
   journal = {SIGMA Symmetry Integrability Geom. Methods Appl.},
    volume = {7},
      year = {2011},
     pages = {Paper 088, 24}
   }

\bib{sauloy}{article}{
  author={Ramis, J.P.},
  author={Sauloy, J.},
  author={Zhang,C.},
  title={{L}ocal analytic classification of $q$-difference
       equations},
     journal={Ast\'erisque},
     volume={355},
       year={2013}
       }

\bib{phdroffelsen}{misc}{
  author={Roffelsen, P.},
  title={On the global asymptotic analysis of a $q$-discrete
        {P}ainlev{\'e} equation},
    type={PhD thesis},
    organization={The University of Sydney},
    note = {\url{https://ses.library.usyd.edu.au/handle/2123/16601}},
    year={2017}
    }

\bib{sakai2001}{article}{
    AUTHOR = {Sakai, H.},
     TITLE = {Rational surfaces associated with affine root systems and
              geometry of the {P}ainlev\'e equations},
   JOURNAL = {Comm. Math. Phys.},
    VOLUME = {220},
      YEAR = {2001},
    NUMBER = {1},
     PAGES = {165--229}
}

\bib{sauloy2002}{collection.article}{
  author={Sauloy, J.},
  title={Galois theory of {$q$}-difference equations: the ``analytical''
  approach},
  booktitle= {Differential equations and the {S}tokes phenomenon},
  pages={277--292},
  publisher={World Sci. Publ., River Edge, NJ},
  year={2002}
}

\bib{shohat1939}{article}{
     author={Shohat, Jacques},
     title={A differential equation for orthogonal polynomials},
     journal={Duke Mathematical Journal},
     volume={5},
     pages={401--417},
     year={1939}
}

\bib{putgalois1997}{collection.article}{
  author={van~der Put, M.},
  author={Singer,M.F.},
  title={Galois theory of difference equations},
  volume={1666},
  series={Lecture Notes in Mathematics},
  publisher={Springer-Verlag, Berlin},
  year={1997}
  }

\bib{putsaito2009}{article}{
  author={van~der Put, M.},
  author={Saito, M.},
  title={Moduli spaces for linear
    differential equations and the {P}ainlev\'e
  equations},
journal={Annales de l'Institut Fourier},
volume={59},
pages={2611--2667},
year={2009}
}

\end{biblist}
\end{bibdiv}

\end{document}